\renewcommand\footnotetextcopyrightpermission[1]{} 
\pgfplotsset{compat=1.16}
\definecolor{Purple}{HTML}{911146}
\DeclareRobustCommand*\cal{\@fontswitch\relax\mathcal}
\newtheorem{theorem}{Theorem}[section]
\newtheorem{lemma}[theorem]{Lemma}
\newtheorem{definition}[theorem]{Definition}
\newtheorem{corollary}[theorem]{Corollary}
\newtheorem{example}[theorem]{Example}
\newtheorem{property}[theorem]{Property}
\newtheorem{remark}[theorem]{Remark}
\newtheorem{claim}[theorem]{Claim}
\newcommand{\suchthat}{\ | \ }
\newcommand{\R}{{\mathbb R}}
\newcommand{\calI}{{\cal I}}
\newcommand{\calH}{{\cal H}}
\newcommand{\calV}{{\cal V}}
\newcommand{\calE}{{\cal E}}
\newcommand{\calT}{{\cal T}}
\newcommand{\calC}{{\cal C}}
\newcommand{\vars}{\text{\sf vars}}
\newcommand{\poly}{\text{\sf poly}}
\newcommand{\polylog}{\text{\sf polylog}}
\newcommand{\fhtw}{\text{\sf fhtw}}
\newcommand{\subw}{\text{\sf subw}}
\newcommand{\ijw}{\text{\sf ijw}}
\newcommand{\faqai}{\text{\sf FAQ-AI}}
\newcommand{\ed}{\text{\sf ED}}
\newcommand{\td}{\text{\sf TD}}
\newcommand{\bcqij}{\text{\sf IJ} }
\newcommand{\bcq}{\text{\sf EJ} }
\newcommand{\query}{\text{\sf EIJ} }
\newcommand{\fpt}{\textsc{FPT}}
\newcommand{\panda}{\text{\sf PANDA}}
\newcommand{\dom}{\text{\sf Dom}}
\newcommand{\D}{\textbf{D}}
\newcommand{\Q}{\textbf{Q}}
\newcommand{\onestring}{\text{``1''}}
\newcommand{\zerostring}{\text{``0''}}
\newcommand{\nop}[1]{}
\newcommand{\punto}{\hfill\ $\Box$}
\newcommand{\arxiv}[1]{}
\definecolor{c82b366}{RGB}{130,179,102} %
\definecolor{cd5e8d4}{RGB}{213,232,212}
\definecolor{cb85450}{RGB}{184,84,80} %
\definecolor{cf8cecc}{RGB}{248,206,204}
\definecolor{c9673a6}{RGB}{150,115,166} %
\definecolor{ce1d5e7}{RGB}{225,213,231}
\definecolor{cd6b656}{RGB}{214,182,86} %
\definecolor{cfff2cc}{RGB}{255,242,204}
\definecolor{cffffff}{RGB}{255,255,255}
\definecolor{cd79b00}{RGB}{215,155,0}
\definecolor{cffe6cc}{RGB}{255,230,204}
\begin{document}

\pagestyle{empty} 
\fancyhead{}

\title{The Complexity of Boolean Conjunctive Queries with Intersection Joins}

\author{Mahmoud Abo Khamis}
\affiliation{%
    \institution{RelationalAI}
    \city{Berkeley}
    \country{United States}
}%
\author{George Chichirim}
\affiliation{%
    \institution{University of Oxford}
    \city{Oxford}
    \country{United Kingdom}
}%
\author{Antonia Kormpa}
\affiliation{%
    \institution{University of Oxford}
    \city{Oxford}
    \country{United Kingdom}
}%
\author{Dan Olteanu}
\affiliation{%
    \institution{University of Zurich}
    \city{Zurich}
    \country{Switzerland}
}


\begin{abstract}
Intersection joins over interval data are relevant in spatial and temporal data settings. A set of intervals join if  their intersection is non-empty. In case of point intervals, the intersection join becomes the standard equality join.

We establish the complexity of Boolean conjunctive queries with intersection joins by a many-one equivalence to disjunctions of Boolean conjunctive queries with equality joins. The complexity of any query with intersection joins is that of the hardest query with equality joins in the disjunction exhibited by our equivalence. This is captured by a new width measure called the ij-width.

We also introduce a new syntactic notion of acyclicity called iota-acyclicity to characterise the class of Boolean queries with intersection joins that admit linear time computation modulo a poly-logarithmic factor in the data  size. Iota-acyclicity is for intersection joins  what alpha-acyclicity is for equality joins. It strictly sits between gamma-acyclicity and Berge-acyclicity. The intersection join queries that are not iota-acyclic are at least as hard as the Boolean triangle query with equality joins, which is widely considered not computable in linear time.
\end{abstract}



\maketitle

\section{Introduction}
\label{section:introduction}


Interval data is common in spatial and temporal databases. One important type of joins on intervals is the intersection join: A set of intervals join if their intersection is non-empty. In the case of point intervals, the intersection join becomes the classical equality join.

This paper establishes the complexity of Boolean conjunctive queries with \textbf{I}ntersection \textbf{J}oins (denoted by $\bcqij$). Whereas the complexity of Boolean conjunctive queries with \textbf{E}quality \textbf{J}oins (denoted by $\bcq$) has been extensively investigated in the literature by, e.g.,~\citet{jacm/Marx13} and \citet{pods/Khamis0S17}, the complexity of the more general \bcqij queries remained open for decades. 

The key tool aiding our investigation is a many-one equivalence of any \bcqij query to a  disjunction of \bcq queries. It uses a forward reduction ($\bcqij$-to-$\bcq$) and a backward reduction ($\bcq$-to-$\bcqij$), cf\@. Figure~\ref{fig:reductions}.

\begin{figure}
\begin{tikzpicture}
\tikzstyle{node2} = [
    draw, rectangle, rounded corners = 3 pt, align=center, inner sep = 6 pt
]
\tikzstyle{path2} = [
    ->, double, line width = 1 pt,rounded corners=3pt
]
\node [node2] at (0, 0) (QD) {$(Q, \D)$};
\node [node2] at (7, 0) (tQtD) {
    $(\tilde Q_1, \tilde\D)$\\
    $\vdots$\\
    $(\tilde Q_i, \tilde\D)$\\
    $\vdots$\\
    $(\tilde Q_\ell, \tilde\D)$
};
\draw [path2] (QD)--(tQtD)
    node [at start, pos = .5, above]{Forward reduction (Thm.~\ref{theorem:reduction-correctness})}
    node [at start, pos = .5, below]{$|\tilde \D| = O(|\D|\cdot\polylog(|\D|))$};
\node [node2] at (7, -2.5) (tQtD2) {$(\tilde Q_i, \tilde \D_2)$};
\node [node2] at (0, -2.5) (QD2) {$(Q, \D_2)$};
\draw[path2] (tQtD2)--(QD2)
    node [at start, pos = .5, above]{Backward reduction (Thm.~\ref{theorem:lower-bound})}
    node [at start, pos = .5, below]{$|\D_2| = O(|\tilde\D_2|)$};
\end{tikzpicture}
\caption{Forward and backward reductions from Sections~\ref{section:ij-to-ej}
and~\ref{section:reverse-reduction} respectively. $Q$ is a query with intersection joins, $\tilde Q_1,\ldots,\tilde Q_\ell$ are queries with equality joins, $\D$ and $\D_2$ are databases of intervals, $\tilde\D$ and $\tilde\D_2$ are databases of numbers.}
\Description[Forward and backward reductions.]{Forward and backward reductions from Sections~\ref{section:ij-to-ej}
and~\ref{section:reverse-reduction} respectively. $Q$ is a Boolean query with intersection joins, $\tilde Q_1,\ldots,\tilde Q_\ell$ are Boolean queries with equality joins, $\D$ and $\D_2$ are databases of intervals, $\tilde\D$ and $\tilde\D_2$ are databases of numbers.}
\label{fig:reductions}
\end{figure}
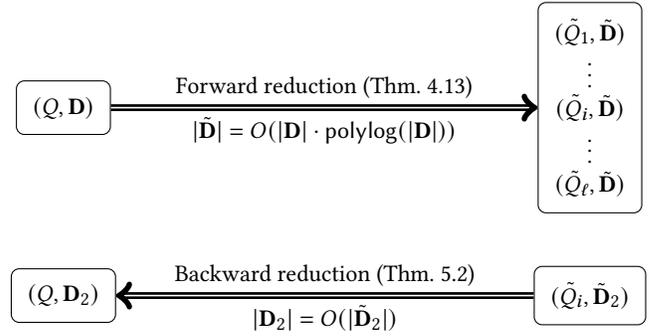

The {\em forward} reduction takes an \bcqij query $Q$ and a database $\D$ of intervals. It reduces $Q$ to a disjunction of \bcq queries, all over the same database $\tilde\D$ of numbers represented as bitstrings.  The number and size of the \bcq queries only depend on the structure of $Q$, and the size of $\tilde\D$ is within a poly-logarithmic factor from the size of $\D$. This forward reduction enables us to use any algorithms and associated runtime {\em upper bounds} for the \bcq queries as upper bounds for the \bcqij query $Q$ as well. Specifically, $Q$'s runtime is upper bounded by the maximum runtime upper bound among the generated \bcq queries.

The {\em backward} reduction takes an $\bcq$ query $\tilde Q_i$, whose structure matches that of one of the queries obtained by the forward reduction of an $\bcqij$ query $Q$ without self-joins, and an arbitrary database $\tilde\D_2$ of numbers chosen independently from $\tilde \D$ and $\D$ in the forward reduction. It then reduces $\tilde Q_i$ to an $\bcqij$ query $Q$ whose structure matches that of the original \bcqij query $Q$, and reduces $\tilde\D_2$ to  some database  $\D_2$ of intervals and size $O(|\tilde \D_2|)$. The backward reduction shows that we can use any {\em lower bounds} (i.e. hardness results) on any one of the \bcq queries $\tilde Q_i$ constructed by the forward reduction as  lower bounds on the \bcqij query $Q$. Together with the upper bounds, this implies that the \bcqij query $Q$ is {\em precisely} at the same hardness level as the hardest $\bcq$ query $\tilde Q_i$. Our forward reduction thus produces an optimal solution to $Q$ given optimal solutions to the queries $\tilde Q_i$.

The quest for the optimality of \bcq computation has a long history. The submodular width has been recently established as an optimality yardstick~\cite{jacm/Marx13,pods/Khamis0S17}. Let $\mathit{BCQ}({\cal C})$ be the decision problem: Given an \bcq query $Q\in{\cal C}$ and a database $\D$, check whether $Q(\D)$ is true. This problem is fixed-parameter tractable (\textsc{FPT}; with parameter the query size $|Q|$) if there is an algorithm solving every ${\cal C}$-instance in time $f(|Q|)\cdot |\D|^d$ for some fixed constant $d$ and any computable function $f$. The problem $\mathit{BCQ}({\cal C})$ is \textsc{FPT} if and only if every $Q\in{\cal C}$ has bounded submodular width~\cite{jacm/Marx13}.

A natural question is then what would be an optimality yardstick for \bcqij computation. We settle this question with a new width notion called the ij-width. This is the maximum submodular width~\cite{jacm/Marx13} of the \bcq queries obtained by our forward reduction. Any \bcqij query $Q$ can then be computed in time $O(N^{w} \polylog\ N)$, where $N$ is the size of the input database and  $w$ is the ij-width of $Q$. Each \bcq query created by our forward reduction is computed over databases of size $O(N \polylog\ N)$ and in time $O(N^{w} \polylog\ N)$.

The relationship between the complexities of \bcqij and \bcq is non-trivial. In general, \bcqij queries are, as expected, more expensive than their \bcq counterparts. There are simple $\alpha$-acyclic \bcqij queries whose complexity is on par with that of cyclic \bcq queries. Consider the \bcq counterparts of \bcqij queries, where we replace the intersection joins by equality joins. The triangle \bcqij query has ij-width 3/2 (Section~\ref{sec:example}), which matches the submodular width of the \bcq triangle query. Yet for the \bcqij Loomis-Whitney query with four variables, the ij-width is 5/3 while the \bcq counterpart has submodular width 4/3. 
 
This raises the question of what is the natural counterpart of $\alpha$-acyclic \bcq queries, which are known to be {\em the} linear-time computable $\bcq$ queries~\cite{vldb/Yannakakis81}. We introduce an acyclicity notion called iota ($\iota$) that syntactically characterises the class of \bcqij queries computable in linear time (modulo a polylog factor): An  \bcqij query is $\iota$-acyclic if and only if the incidence graph of its hypergraph does not have Berge cycles of length strictly greater than two. Our forward reduction maps $\iota$-acyclic \bcqij queries to an equivalent disjunction of $\alpha$-acyclic \bcq queries, so by definition the ij-width of any $\iota$-acyclic \bcqij query is one. $\iota$-acyclicity implies $\gamma$-acyclicity and is implied by Berge-acyclicity. The \bcqij queries that are not $\iota$-acyclic are at least as hard as the \bcq triangle query, which is widely considered not computable in linear time (based on the \textsc{3SUM} conjecture)~\cite{DBLP:conf/stoc/Patrascu10}. $\iota$-acyclicity is thus for \bcqij queries what $\alpha$-acyclicity is for \bcq queries.

An intersection join can also be expressed as a disjunction of inequality joins, which can be evaluated using FAQ-AI~\cite{FAQAI:TODS:2020}. However, FAQ-AI has a higher complexity than our approach: The exponents in the time complexity of FAQ-AI for the \bcqij triangle, Loomis-Whitney four, and the clique-four queries are: 2, 2, and respectively 3; whereas the ij-widths are: 3/2, 5/3, and respectively 2.

\subsection{Example: The Triangle Query}
\label{sec:example}

We introduce our approach using the Boolean triangle query, where each join is an intersection join:
\[
	Q_{\triangle} = R([A],[B]) \wedge S([B], [C]) \wedge T([A], [C])
\]

Brackets denote interval variables ranging over intervals with real-valued endpoints. The two occurrences of $[A]$ in the query denote an intersection join on $A$. An equality join on $A$ is expressed  using the variable $A$ without brackets.

A common approach first computes the join of two of the three relations and then joins with the third relation. The first join can take $O(N^2)$, where $N$ is the size of the relations.  
An equivalent encoding of $Q_{\triangle}$ using inequality joins can be computed in time $O(N^2\log^3 N)$ using FAQ-AI~\cite{FAQAI:TODS:2020} (Appendix~\ref{app:ex:triangle}). Our approach takes time $O(N^{3/2}\log^3 N)$, which matches the complexity of the \bcq triangle query (modulo polylog factor).

Our approach is based on a decomposition of the tensor representing an intersection join. In our example, we decompose the three joins as follows. We construct three segment trees: one for the intervals from $R$ and $T$ for the  interval variable $[A]$, another for the intervals from $R$ and $S$ for the interval variable $[B]$, and the third one for the intervals from $S$ and $T$ for the interval variable $[C]$. The segment tree for $O(N)$ intervals can be constructed in $O(N \log N)$ time and has depth $O(\log N)$. Its nodes represent intervals called segments. A segment includes its descendant segments and is partitioned by its child segments. A property of a segment tree is that each input interval can be expressed as the disjoint union of at most $O(\log N)$ segments. The problem of checking whether two input intervals intersect now becomes the problem of finding two segments, one per interval, that lie along the same root-to-leaf path in the segment tree. There are three ways this can happen: The two segments are the same, one segment is an ancestor of the other, or the other way around; this corresponds to the possible permutations of the two segments along a path, with one permutation also including the case where the segments are the same. A further property is that each permutation can be expressed using equality joins, as explained next. We encode each node in the segment tree as a bitstring: the empty string represents the root, the strings "0" and "1" represent the left and right child respectively, the strings "00" and "01" represent the left and right child of the "0" respectively, and so on. Given two nodes $n_1$ and $n_2$, where $n_1$ is an ancestor of $n_2$, the bitstring for $n_1$ is then a prefix of that for $n_2$. We can capture this relationship in a query with equality joins: We use one variable $A_1$ to stand for the bitstring for $n_1$, which is also a prefix of the bitstring for $n_2$, and $A_2$ to stand for the remaining bitstring for $n_2$. 

There are eight possible configurations for the ancestor-descen\-dant relationship between the segment tree nodes for the two $A$-intervals in $R$ and $S$ and similarly for $B$ and $C$. Each such case can be expressed using an \bcq query. Our query $Q_{\triangle}$ is then equivalent to the disjunction $\tilde{Q}_{\triangle} = \bigvee_{i\in[8]}\tilde{Q}_i$ of the following \bcq queries: 
{\small
\begin{align*}
	\tilde{Q}_1 &= R_{2;2}(A_1,A_2,B_1,B_2) \hspace*{-1em}&\wedge &S_{1;2}(B_1,C_1,C_2) \hspace*{-1em}&\wedge &T_{1;1}(A_1,C_1) &\hfill \\
	\tilde{Q}_2 &= R_{2;2}(A_1,A_2,B_1,B_2) \hspace*{-1em}&\wedge &S_{1;1}(B_1,C_1) \hspace*{-1em}&\wedge &T_{1;2}(A_1,C_1,C_2) &\hfill \\
	\tilde{Q}_3 &= R_{2;1}(A_1,A_2,B_1) \hspace*{-1em}&\wedge &S_{2;2}(B_1,B_2,C_1,C_2) \hspace*{-1em}&\wedge &T_{1;1}(A_1,C_1) &\hfill \\
	\tilde{Q}_4 &= R_{2;1}(A_1,A_2,B_1) \hspace*{-1em}&\wedge &S_{2;1}(B_1,B_2,C_1) \hspace*{-1em}&\wedge &T_{1;2}(A_1,C_1,C_2) &\hfill \\
	\tilde{Q}_5 &= R_{1;2}(A_1,B_1,B_2) \hspace*{-1em}&\wedge &S_{1;2}(B_1,C_1,C_2) \hspace*{-1em}&\wedge &T_{2;1}(A_1,A_2,C_1) &\hfill \\
	\tilde{Q}_6 &= R_{1;2}(A_1,B_1,B_2) \hspace*{-1em}&\wedge &S_{1;1}(B_1,C_1) \hspace*{-1em}&\wedge &T_{2;2}(A_1,A_2,C_1,C_2) &\hfill \\
	\tilde{Q}_7 &= R_{1;1}(A_1,B_1) \hspace*{-1em}&\wedge &S_{2;2}(B_1,B_2,C_1,C_2) \hspace*{-1em}&\wedge &T_{2;1}(A_1,A_2,C_1) &\hfill \\
	\tilde{Q}_8 &= R_{1;1}(A_1,B_1) \hspace*{-1em}&\wedge &S_{2;1}(B_1,B_2,C_1) \hspace*{-1em}&\wedge &T_{2;2}(A_1,A_2,C_1,C_2)
\end{align*}
}
The query is defined over new relations $R_{{\cal A};{\cal B}}$, $S_{{\cal B};{\cal C}}$, and $T_{{\cal A};{\cal C}}$ that are transformations of the original relations $R$, $S$, and $T$ to hold the bitstrings in place of the original intervals. The subscript ${\cal A}=i$ stands for the indices $1,\ldots,i$ of the variables $A_1,\ldots,A_i$ ranging over bitstrings; similarly for ${\cal B}$ and ${\cal C}$. To avoid clutter, in the remainder of this paper we will denote all such constructed new relations $R_{{\cal A};{\cal B}}$ by $\tilde{R}$ and use their schema to identify them uniquely.
For input relations of size $N$, the new relations have size $O(N\log^2 N)$, with one logarithmic factor per join interval variable. We next explain the purpose of these bitstring relations.
	A tuple $(a_{1}, a_{2}, b_{1}, b_{2})$ is in $R_{2;2}$ if and only if there is a pair of intervals $(a_R,b_R)$ in $R$ and the concatenations $a_{1}\circ a_{2}$ and $b_{1}\circ b_{2}$ reconstruct bitstrings of  nodes covered by intervals $a_R$ and respectively $b_R$. Similarly, a tuple $(b_1,c_1,c_2)$ in $S_{1;2}$ reconstructs the bitstrings $b_1$ and $c_1\circ c_2$ of nodes covered by intervals $b_S$ and $c_S$. A tuple $(a_1,c_1)$ in $T_{1;1}$ specifies the bitstrings $a_1$ and $c_1$ of nodes covered by $a_T$ and $c_T$. Therefore, $\tilde{Q}_{1}$ holds in case:
		(1) there exists a node covered by an interval $a_R$ in $R$ and a node covered by an interval $a_T$ in $T$ such that the bitstring of the latter is a prefix of the former; 
		(2) there exists a node covered by an interval $b_R$ in $R$ and a node covered by an interval $b_S$ in $S$ such that the bitstring of the latter is a prefix of the former; and
		(3) there exists a node covered by an interval $c_S$ in $S$ and a node covered by an interval $c_T$ in $T$ such that the bitstring of the latter is a prefix of the former. 
		Equivalently, $a_R$ intersects with $a_T$, $b_R$ intersects with $b_S$, and $c_S$ intersects with $c_T$, i.e., $R(a_R, b_R) \wedge S(b_S, c_S) \wedge T(a_T, c_T)$ holds.

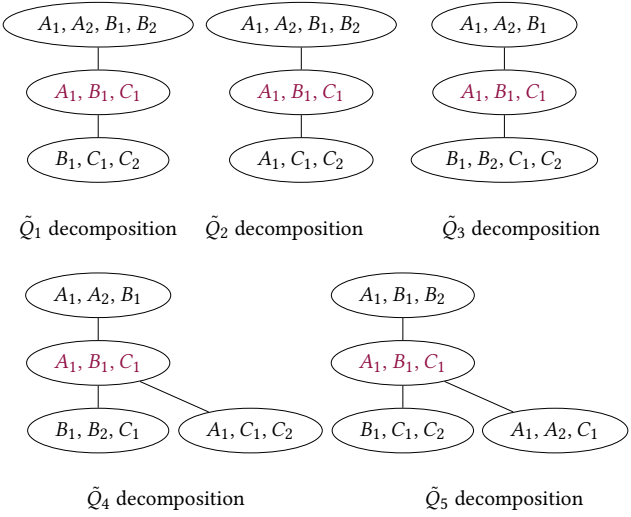
\begin{figure}[t]
\scalebox{0.9}{
\begin{tikzpicture}
	\node[draw, ellipse] (d11)  at(0, 1){$A_1$, $A_2$, $B_1$, $B_2$};
	\node[draw, ellipse] (d12)  at(0, 0){\color{Purple}$A_1$, $B_1$, $C_1$};
	\node[draw, ellipse] (d13)  at(0, -1){$B_1$, $C_1$, $C_2$};
	\draw (d11)--(d12)--(d13);
	\node at (0,-2) {$\tilde{Q}_1$ decomposition};

	\node[draw, ellipse] (d21)  at(3, 1){$A_1$, $A_2$, $B_1$, $B_2$};
	\node[draw, ellipse] (d22)  at(3, 0){\color{Purple}$A_1$, $B_1$, $C_1$};
	\node[draw, ellipse] (d23)  at(3, -1){$A_1$, $C_1$, $C_2$};
	\draw (d21)--(d22)--(d23);
	\node at (2.75,-2) {$\tilde{Q}_2$ decomposition};
	
	\node[draw, ellipse] (d31)  at(6, 1){$A_1$, $A_2$, $B_1$};
	\node[draw, ellipse] (d32)  at(6, 0){\color{Purple}$A_1$, $B_1$, $C_1$};
	\node[draw, ellipse] (d33)  at(6, -1){$B_1$, $B_2$, $C_1$, $C_2$};
	\draw (d31)--(d32)--(d33);
	\node at (6.25,-2) {$\tilde{Q}_3$ decomposition};

	\node[draw, ellipse] (d41)  at(0, -3){$A_1$, $A_2$, $B_1$};
	\node[draw, ellipse] (d42)  at(0, -4){\color{Purple}$A_1$, $B_1$, $C_1$};
	\node[draw, ellipse] (d43)  at(0, -5){$B_1$, $B_2$, $C_1$};
	\node[draw, ellipse] (d44)  at(2.25, -5){$A_1$, $C_1$, $C_2$};
	\draw (d41)--(d42)--(d43);
	\draw (d42)--(d44);
	\node at (1,-6) {$\tilde{Q}_4$ decomposition};

	\node[draw, ellipse] (d51)  at(4.5, -3){$A_1$, $B_1$, $B_2$};
	\node[draw, ellipse] (d52)  at(4.5, -4){\color{Purple}$A_1$, $B_1$, $C_1$};
	\node[draw, ellipse] (d53)  at(4.5, -5){$B_1$, $C_1$, $C_2$};
	\node[draw, ellipse] (d54)  at(6.75, -5){$A_1$, $A_2$, $C_1$};
	\draw (d51)--(d52)--(d53);
	\draw (d52)--(d54);
	\node at (6,-6) {$\tilde{Q}_5$ decomposition};

\end{tikzpicture}
}

\caption{Hypertree decompositions for five of the eight \bcq queries in $\tilde{Q}_{\triangle}$. All decompositions have a bag {\color{Purple}$\{A_1$, $B_1$, $C_1\}$}, whose materialisation requires the computation of a triangle query with equality joins. The remaining three \bcq queries admit a  decomposition similar with the first one.}
\Description[Hypertree decompositions for five of the eight \bcq queries in $\tilde{Q}_{\triangle}$.]{Hypertree decompositions for five of the eight \bcq queries in $\tilde{Q}_{\triangle}$. All decompositions have a bag {\color{Purple}$\{A_1$, $B_1$, $C_1\}$}, whose materialisation requires the computation of a triangle query with equality joins. The remaining three \bcq queries admit a  decomposition similar with the first one.}
\label{fig:triangle-decomposition}
\vspace*{-0.75em}
\end{figure}

The hypergraph of each of the eight \bcq queries admits a hypertree decomposition in the form of a star with the central bag {\color{Purple}$\{A_1,B_1,C_1\}$}, cf.\@ Figure~\ref{fig:triangle-decomposition}. In each of these decompositions, the materialisation of this bag requires solving the triangle join $R'(A_1,B_1)\wedge S'(B_1,C_1) \wedge T'(A_1,C_1)$, where $R'$ is a projection of $R_{{\cal A}; {\cal B}}$ to $A_1,B_1$ and similarly for $S'$ and $T'$. The new relations and their projections have size $O(N\log^2 N)$. The materialisation of the join takes time $O((N\log^2 N)^{3/2}) = O(N^{3/2}\log^3 N)$ using existing worst-case optimal join algorithms~\cite{jacm/NgoPRR18}. Checking whether any of the eight \bcq queries is true takes time linear in the maximum size of the bags of its decomposition. This gives an overall computation time $O(N^{3/2}\log^3 N)$ for $\tilde{Q}_{\triangle}$ and also for $Q_{\triangle}$. 

We close the example with a discussion on an alternative encoding: Instead of $\tilde{R}(A_1,A_2,B_1,B_2)$, we can use its lossless decomposition into $\tilde{R}_A(Id, A_1,A_2)$ and $\tilde{R}_B(Id, B_1,B_2)$. Here, a tuple $(i,a_1,a_2)\in\tilde{R}_A$ encodes that the $[A]$-interval in the tuple of $R$ with identifier $i$ is mapped to the bitstring $a_1\circ a_2$ in the segment tree for $[A]$. We thus avoid the  explicit materialisation of all combinations of encodings for $[A]$ and $[B]$ (and the same for the other two pairs of variables). This decomposition applies systematically to all constructed relations. Each of these new relations has size $O(N\log N)$, which is less than $O(N\log^2 N)$ in our default encoding. In general, for each $m$-way join interval variable, this encoding creates $m$ new relations regardless of whether such variables occur in the same relational atom in the query. In contrast, our default encoding creates $m^k$ new relations for each atom that contains $k$ such $m$-way join variables. Although more space efficient, this encoding comes with the same data complexity (modulo log factors) as the one used in the paper.

\section{Related Work}
\label{section:related-work}

Algorithms for intersection joins have been developed in the context of temporal~\cite{vldb/GaoJSS05} and spatial databases \cite{Mamoulis:SpatialDM:2011, tods/JacoxS07}. 
In temporal databases, tuples can be associated with intervals that represent the valid time periods. Temporal or interval joins are used to match tuples that are valid at the same time. 
In spatial databases, tuples can be associated with 2D objects that are approximated by two intervals defining minimum bounding rectangles~\cite{tods/MamoulisP01}. Spatial joins are used to find tuples with overlapping bounding rectangles. Temporal and spatial joins are thus intersection joins.
 Similarity joins under different distance metrics can be reduced to geometric containment, which is expressible using intersection joins~\cite{tods/HuYT19}.

{\noindent\bf Intersection joins.} There is a wealth of work on algorithms for intersection joins, mostly binary joins computed one at a time and over relations with 1D or 2D intervals~\cite{Mamoulis:SpatialDM:2011}. These algorithms use indices or partitioning and are typically disk-based with the objective of minimising I/O accesses. Examples of index-based algorithms include: the slot index spatial join~\cite{tkde/MamoulisP03}, the seeded tree join~\cite{sigmod/LoR94}, the R-tree join~\cite{sigmod/BrinkhoffKS93}, and relational interval tree join \cite{sigmod/EnderleHS04}. Extensions of binary joins~\cite{sigmod/BrinkhoffKS93} to multi-way joins have also been considered~\cite{tods/MamoulisP01}. Partition-based algorithms include: the partition based spati\-al-merge join~\cite{sigmod/PatelD96}, the spatial hash join~\cite{sigmod/LoR96}, the size separation spatial join~\cite{sigmod/KoudasS97}, the sweeping-based spatial join~\cite{vldb/ArgePRSV98}, and the plane-sweep method~\cite{preparata2012computational}. The partition-based algorithms can be naturally parallelised and distributed~\cite{gis/TsitsigkosBMT19,icde/PiatovHD16, pvldb/BourosM17}. These algorithms can compute two-way intersection joins in $O(N \log N + \text{OUT})$, where $\text{OUT}$ is the output size and $N$ is the input size. To sum up, there is no development on optimal algorithms for queries with intersection joins. Furthermore, most existing approaches focus on one join at a time, which can be suboptimal since they can produce intermediate results  asymptotically larger than the final result, as in the case of equality joins~\cite{Ngo:SIGREC:2013}. Our approach escapes the limitation of existing intersection join algorithms and benefits from worst-case optimal algorithms for equality joins.

{\noindent\bf Inequality joins.} 
An intersection join can be expressed as a disjunction of inequality joins: Given two intervals $[l_1,r_1]$ and $[l_2,r_2]$ defined by their starting and ending points, checking whether they intersect can be expressed as $(l_1\leq l_2 \leq r_1)\vee(l_2 < l_1 \leq r_2)$. \bcqij queries can thus be reformulated in the framework of Functional Aggregate Queries with Additive Inequalities (FAQ-AI)~\cite{FAQAI:TODS:2020}. 
The hypergraph of an FAQ-AI has two types of hyperedges: normal hyperedges, one per relation in the query and that covers the nodes representing the variables of that relation, and relaxed hyperedges, one per inequality join and that covers the variables in the inequality. This hypergraph is subject to relaxed hypertree decompositions, which are fractional hypertree decompositions~\cite{jacm/Marx13} where each normal hyperedge is covered by one bag and each relaxed hyperedge is covered by two adjacent bags. 
For a database of size $N$, an FAQ-AI can be solved in time $O(N^{\textsf{subw}_\ell}\polylog\ N)$, where $\textsf{subw}_\ell$ is the relaxed submodular width of the FAQ-AI and corresponds to the submodular width of the FAQ-AI hypergraph computed over its possible relaxed hypertree decompositions~\cite{FAQAI:TODS:2020}. 
For the triangle \bcqij query $Q_\triangle$ in Section~\ref{sec:example}, the ij-width is $\ijw(Q_\triangle)=3/2$ yet $\subw_\ell(Q_\triangle)=2$ (Appendix~\ref{app:ex:triangle}). 
Furthermore, it can be shown that $\ijw$ is lower than $\subw_\ell$ for the 
Loomis-Whitney 4, and the 4-clique \bcqij queries (see Table~\ref{tab:examples} 
for a full analysis).

\section{Preliminaries}
\label{section:preliminaries}

This section introduces notation used in the main body of the paper. 
For lack of space, further preliminaries and proofs are deferred 
to Appendix.

\paragraph{Segment Tree}

Let $\calI$ be a set of $n$ intervals. Let $p_{1}, \dots, p_{m}$ be the sequence of the distinct endpoints of the intervals in ascending order (so, $m \leq 2n$).
Consider the following disjoint intervals called elementary segments that form a partition of the real line:
$(-\infty, p_1),$ $ [p_1, p_1],$ $ (p_1, p_2),$ $ [p_2, p_2],$ $\dots $ $(p_{m-1},$ $p_{m}),$ $[p_m, p_m],$ $(p_m, +\infty)$. The {\em segment tree} $\mathfrak{T}_{\calI}$ for $\calI$ is a complete binary tree\footnote{In a complete binary tree, every level, except possibly the last, is completely filled and the nodes in the last level are positioned as far left as possible. Every node of the segment tree is thus either a leaf or an internal node with exactly two children.}, where:

\begin{itemize}

\item The leaves of $\mathfrak{T}_{\calI}$ correspond to the elementary segments induced by an order of the endpoints of the intervals in $\calI$: the leftmost leaf corresponds to the leftmost elementary segment, and so on. The elementary segment corresponding to a leaf $v$ is denoted by $\text{seg}(v)$.

\item The internal nodes of $\mathfrak{T}_{\calI}$ correspond to segments that are the union of elementary segments at the leaves of their subtrees: the segment $\text{seg}(u)$ corresponding to an internal node $u$ is the union of the elementary segments $\text{seg}(v)$ at the leaves $v$ in the subtree rooted at $u$; $\text{seg}(u)$ is thus the union of the segments at its two children.

\item Each node $v$ is associated with the \textit{canonical subset} of $v$ defined by
$\calI_{v} := \{i \in \calI \mid \text{seg}(v) \subseteq i \land \text{seg}(\text{parent}(v)) \not\subseteq i\}$.
That is, each interval $i \in \calI$ is stored into all the maximal segment tree nodes $v$ with respect to the $\text{seg}(v)$ inclusion order (or, equivalently, the nodes as high as possible in the tree) such that $\text{seg}(v) \subseteq i$.

\item Each node of a segment tree is uniquely identified by a bitstring. The root is the empty bitstring, its left child is the bistring '0', its right child has the bitstring '1', and so on. Throughout the paper, we are using a node and its corresponding bitstring interchangeably.
\end{itemize}

Let $V(\mathfrak{T}_{\calI})$ denote the set of nodes in the segment tree $\mathfrak{T}_{\calI}$. For a node $u\in V(\mathfrak{T}_{\calI})$, let $\text{anc}(u)$ be the set of ancestors of $u$ including $u$.
Given an interval (or point) $x$, $\text{leaf}(x)$ denotes the leaf that includes the left endpoint of the interval $x$ (or includes $x$, in case $x$ is a point).

\begin{definition}[Canonical Partition]
	\label{definition:canonical-partition}
	Let $\calI$ be a set of intervals and $x \in \calI$. The Canonical Partition of $x$ with respect to $\calI$
	is  $\text{CP}_{\calI}(x) := \{u \in V(\mathfrak{T}_{\calI}) \mid x \in \calI_{u}\}$.
\end{definition}

We use the following properties of a segment tree.
\begin{property}[Segment Tree] \label{property:segment-tree}
Let $\calI$ be any set of intervals and $\mathfrak{T}_{\calI}$ be the segment tree for it.
	\begin{enumerate}
		\item Let $u$ and $v$ be nodes in the segment tree. Then $u \in \text{anc}(v)$ if and only if $\text{seg}(u) \supseteq \text{seg}(v)$. Equivalently, $u$ is a prefix of $v$.

		\item For any interval $x\in \calI$, there cannot be two nodes in $\text{CP}_\mathcal{I}(x)$ such that one of them is an ancestor of the other.

		\item For any interval $x\in \calI$, $\text{CP}_{\calI}(x)$ has size and can be computed in time $O(\log \rvert \calI \lvert)$.

	\end{enumerate}
\end{property}

Since $\mathfrak{T}_{\mathcal{I}}$ is a complete binary tree with $O(\lvert \calI \rvert)$ leaves, the size of the tree is $O(\lvert \calI \rvert)$, while its height is $O(\log \lvert \calI \rvert)$. By Property~\ref{property:segment-tree}(3), it follows that $\mathfrak{T}_{\mathcal{I}}$ has size $O(\lvert \calI \rvert \cdot \log \lvert \calI \rvert)$~\cite{preparata2012computational}. 
Figure~\ref{fig:segment-tree-main} gives an example segment tree for a set of two intervals.

\tikzset{every node/.style={rectangle, align=center, font=, inner sep=3pt}}
\definecolor{green-2}{RGB}{130,179,102} %
\definecolor{red-2}{RGB}{184,84,80} %

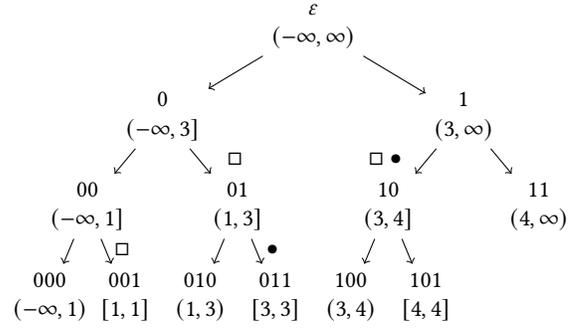
\begin{figure}
\centering
    
\begin{tikzpicture}[->,
 level 1/.style={sibling distance=40mm},
 level 2/.style={sibling distance=20mm},
 level 3/.style={sibling distance=10mm},
  level distance=12mm
]
\node [] {$\varepsilon$\\$(-\infty, \infty)$}
  	child {node [] {$0$\\$(-\infty, 3]$}   
    	child {node [] {$00$\\$(-\infty, 1]$}
      		child {node [] {$000$\\$(-\infty, 1)$}
    	}
      		child {node [, label=above:{$\square$ \color{black}}] {$001$\\$[1, 1]$}
    	}
    	}
    	child {node [, label=above:{$\square$ \color{black}}] {$01$\\$(1, 3]$}
      		child {node [] {$010$\\$(1, 3)$}
    	}
      		child {node [, label=above:{$\bullet$ \color{black}}] {$011$\\$[3, 3]$}
    	}
    	}
    }
  	child {node [] {$1$\\$(3, \infty)$}
    	child {node [, label=above:{$\square$ $\bullet$ \color{black}}] {$10$\\$(3, 4]$}
      		child {node [] {$100$\\$(3, 4)$}
    	}
      		child {node [] {$101$\\$[4, 4]$}
    	}
    	}
  		child {node [] {$11$\\$(4, \infty)$}
  		}
	};
    
\end{tikzpicture}
    \caption{Segment tree on the set of intervals $\mathcal{I} = \{\square \color{black} = [1, 4], \bullet \color{black} = [3, 4]\}$. 
	The interval $[1, 4]$ is contained in the canonical subsets of the nodes $001$, $01$, and $10$. 
	The interval $[3, 4]$ is contained in the canonical subsets of the nodes $011$ and $10$.}
    \Description[Segment tree example.]{Segment tree on the set of intervals $\mathcal{I} = \{ \square \color{black} = [1, 4], \bullet \color{black} = [3, 4]\}$. 
	The interval $[1, 4]$ is contained in the canonical subsets of the nodes $001$, $01$, and $10$. 
	The interval $[3, 4]$ is contained in the canonical subsets of the nodes $011$ and $10$.}
	\label{fig:segment-tree-main}
\end{figure}

\paragraph{Queries}

We consider queries with intersection joins and equality joins.
An intersection join is expressed using an {\em interval variable}, denoted by $[X]$, which takes as values intervals with real-valued endpoints from a finite domain $\dom([X])$.
An equality join is expressed using a {\em point variable} or variable for short, denoted by $X$, whose values are real numbers from a finite domain $\dom(X)$.
\nop{We denote variable names by upper case letters and their values by lower case letters.}
For an interval $x$, we use $x.l$ and $x.r$ to denote its left and respectively right endpoints.
Given a set $e$ of variables, $R_e \subseteq\prod_{X\in e}\dom(X)$ is a relation consisting of tuples of $|e|$ real values;
relations over intervals are defined similarly by replacing $X$ with $[X]$. A tuple $t$ with schema $e$ is a mapping of the variables in $e$ to values in their domains.
We denote by $t(X)$ the value for variable $X$ (or $[X]$) in $t$ and by $t(e')$ the set mapping variables in $e'\subseteq e$ to their values in $t$.

A (multi-)hypergraph $\mathcal{H}=(\mathcal{V}, \mathcal{E})$ has a set $\mathcal{V}$ of vertices and a multiset $\mathcal{E}\subseteq 2^{\mathcal{V}}$ of hyperedges. We label the hyperedges to distinguish between those representing the same set of vertices.
For a vertex $X \in \calV$, $\calE_X$ denotes the subset of $\calE$ that contains $X$.

\begin{definition}[Queries]
	\label{definition:bcqij}
	Given a hypergraph $\mathcal{H}=(\mathcal{V}, \mathcal{E})$, where $\mathcal{V}$ is a set of variables, a query over $\mathcal{H}$ has the form $Q = \bigwedge_{e \in \mathcal{E}} R_{e}(e)$.
	If the vertices in $\mathcal{H}$ are interval variables, then $Q$ is a {\em Boolean conjunctive query with intersection joins}, or \bcqij for short.
	If the vertices in $\mathcal{H}$ are point variables, then $Q$ is a {\em Boolean conjunctive query with equality joins}, or \bcq for short.
	If the vertices in $\mathcal{H}$ are point and interval variables, then $Q$ is a {\em Boolean conjunctive query with intersection and equality joins}, or \query for short.

    An \query $Q = \bigwedge_{e \in \mathcal{E}} R_{e}(e)$ evaluates to true if and only if there exist tuples
    $(t_e)_{e\in \calE}\in \prod_{e \in \calE} R_e$ that satisfy the following:
    \begin{itemize}
        \item $\forall [X] \in \calV$, we have $\big(\bigcap_{e \in \calE_{[X]}} t_{e}([X])) \neq \emptyset$.
        \item $\forall X \in \calV$, $t_e(X)$ is the same for all $e \in \calE_X$.
    \end{itemize}

\end{definition}

For the evaluation of an $\bcqij$ query $Q$ with hypergraph $\calH = (\calV, \calE)$ over a database $\D$, we assume without loss of generality that the schema of $\D$ is given by $\calH$; any database can be brought into this form by appropriately ensuring a bijection between the vertices in $\calV$ and attributes in $\D$ and a bijection between the hyperedges $e\in\calE$ and the relations $R_{e}$ over schema $e$ in $\D$.

Given a set $S$, a permutation of $S$ is an ordered sequence of the elements in $S$. We denote by $\pi(S)$ the set of all permutations of the elements in $S$. For a sequence $s$, $s_{i}$ denotes its $i$-th element. The concatenation of sequences $s_{1}, \dots, s_{k}$ is denoted by $s_{1} \circ \dots \circ s_{k}$.

\section{From Intersections to Equalities}
\label{section:ij-to-ej}

In this section, we show that the \bcqij evaluation problem can be reduced to the \bcq evaluation problem. This forward reduction is used to give an upper bound on the time complexity for the former problem using the complexity of the latter problem. Section~\ref{section:reverse-reduction} then presents a backward reduction to give a corresponding lower bound on the time complexity of the \bcqij evaluation problem 
(recall Figure~\ref{fig:reductions}).

\subsection{Rewriting the Intersection Predicate}
\label{section:intersection-predicate-rewriting}

At the core of \bcqij evaluation lies the non-emptiness check of the intersection of $k$ intervals $x_{1}, \dots, x_{k}\in\mathcal{I}$:
$\left(\bigcap_{i \in[k]} x_{i} \right)\neq \emptyset$. We call this check the {\em intersection predicate}. 
In this section, we show how to rewrite this predicate into an equivalent form that uses the canonical partitions of the intervals in a segment tree $\mathfrak{T}_{\mathcal{I}}$.

Since the elementary segments that correspond to the leaves of $\mathfrak{T}_\mathcal{I}$ 
form a partition of $\mathbb{R}$, for any point $p \in \mathbb{R}$ there is precisely one leaf node $\text{leaf}(p)$ such that $p \in \text{seg}(\text{leaf}(p))$. 
By Property~\ref{property:segment-tree}(1),
$\text{anc}(\text{leaf}(p)) = \{v \in V(\mathfrak{T}_\mathcal{I}) \mid p \in \text{seg}(v)\}$.
That is, the nodes whose segments contain the point $p$ are precisely the ancestors of $\text{leaf}(p)$.

\begin{lemma}[Intersection Predicate Rewriting 1]
\label{lemma:intersection-predicate-not-ordered}
For any set of intervals $S = \{x_{1}, \dots, x_{k}\} \subseteq \mathcal{I}$, 
the predicate $\left(\bigcap_{i \in[k]} x_{i} \right)\neq \emptyset$ 
is equivalent to: 
\begin{align*}
	\bigvee_{i \in[k]} 
	\left[
		 \bigvee_{(v_1, \dots, v_k) \in \text{anc}(\text{leaf}(x_i))^k} 
		 \left( 
			 \bigwedge_{\substack{j \in [k] \\ j \neq i}} v_j \in \text{CP}_\mathcal{I}(x_j) 
		 \right) 
	\right]
\end{align*}
\end{lemma}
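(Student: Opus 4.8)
The plan is to prove the equivalence directly from two ingredients already on the table. The first is the characterisation $\text{anc}(\text{leaf}(p)) = \{v \in V(\mathfrak{T}_{\mathcal{I}}) \mid p \in \text{seg}(v)\}$ recorded just before the lemma (a consequence of Property~\ref{property:segment-tree}(1)). The second is the standard segment-tree covering fact: for every $x \in \mathcal{I}$ the segments $\{\text{seg}(v) \mid v \in \text{CP}_{\mathcal{I}}(x)\}$ cover $x$. This follows from the definition of $\mathcal{I}_v$ together with $\text{CP}_{\mathcal{I}}(x) = \{v \mid x \in \mathcal{I}_v\}$: given $q \in x$, start at $\text{leaf}(q)$ and climb to the highest node whose segment is still contained in $x$; that node lies in $\text{CP}_{\mathcal{I}}(x)$ and its segment contains $q$ (here one uses that the endpoints of $x$ are among $p_1,\dots,p_m$, so each elementary segment is either inside $x$ or disjoint from it). It also helps to observe up front that, since each $v_j$ with $j\neq i$ occurs in a single conjunct, the inner formula $\bigvee_{(v_1,\dots,v_k)}\bigwedge_{j\neq i} v_j \in \text{CP}_{\mathcal{I}}(x_j)$ is just $\bigwedge_{j\neq i}\big(\text{anc}(\text{leaf}(x_i)) \cap \text{CP}_{\mathcal{I}}(x_j) \neq \emptyset\big)$, so the right-hand side asserts: there is an index $i$ such that for every $j \neq i$ some canonical-partition node of $x_j$ has a segment containing the point $x_i.l$.

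For the ``if'' direction, suppose the right-hand side holds with index $i$ and a tuple $(v_1,\dots,v_k)\in\text{anc}(\text{leaf}(x_i))^k$ with $v_j\in\text{CP}_{\mathcal{I}}(x_j)$ for all $j\neq i$. Put $p := x_i.l$, so $\text{leaf}(x_i)=\text{leaf}(p)$. For each $j\neq i$, the membership $v_j\in\text{anc}(\text{leaf}(p))$ gives $p\in\text{seg}(v_j)$ by the ancestor characterisation, and $v_j\in\text{CP}_{\mathcal{I}}(x_j)$ gives $\text{seg}(v_j)\subseteq x_j$ straight from the definition of $\mathcal{I}_{v_j}$; hence $p\in x_j$. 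Since also $p=x_i.l\in x_i$, we get $p\in\bigcap_{i\in[k]}x_i$, so the intersection predicate is satisfied.

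For the ``only if'' direction, assume $\bigcap_{i\in[k]}x_i\neq\emptyset$ and pick any $p_0$ in it. Let $i^\star\in[k]$ maximise $x_{i^\star}.l$ and set $p := x_{i^\star}.l$. Then $p\in x_j$ for every $j$: $x_j.l\le x_{i^\star}.l = p$ by the choice of $i^\star$, while $p=x_{i^\star}.l\le p_0\le x_j.r$ because $p_0\in x_{i^\star}\cap x_j$, so $x_j.l\le p\le x_j.r$. Now $\text{leaf}(x_{i^\star})=\text{leaf}(p)$. For each $j\neq i^\star$, since $p\in x_j$ and the segments of $\text{CP}_{\mathcal{I}}(x_j)$ cover $x_j$, there is $v_j\in\text{CP}_{\mathcal{I}}(x_j)$ with $p\in\text{seg}(v_j)$; by the ancestor characterisation $v_j\in\text{anc}(\text{leaf}(p))=\text{anc}(\text{leaf}(x_{i^\star}))$. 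Choosing any $v_{i^\star}\in\text{anc}(\text{leaf}(x_{i^\star}))$ (for instance the root, which is an ancestor of every node) produces a tuple $(v_1,\dots,v_k)\in\text{anc}(\text{leaf}(x_{i^\star}))^k$ witnessing the disjunct for $i=i^\star$.

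Most of this is unfolding definitions; the one spot that needs genuine care is the covering fact for $\text{CP}_{\mathcal{I}}(x)$, and hand in hand with it the endpoint conventions that make ``$x_j.l\le p\le x_j.r$'' the same as ``$p\in x_j$''. I would settle the covering fact by the explicit climb-from-$\text{leaf}(q)$ argument above, using that the endpoints of $x$ come from $p_1,\dots,p_m$ so no elementary segment straddles the boundary of $x$; under the paper's interval conventions (intervals treated as closed, equivalently reasoning at the level of elementary segments) the boundary case is then immediate.
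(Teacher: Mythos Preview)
Your proof is correct and follows essentially the same approach as the paper's own proof. Both hinge on the same three ingredients: (i) the intervals intersect iff some $x_i.l$ lies in every $x_j$ (via the max-left-endpoint argument), (ii) the canonical-partition segments of $x_j$ cover $x_j$, and (iii) the ancestor characterisation $\text{anc}(\text{leaf}(p)) = \{v \mid p \in \text{seg}(v)\}$; the paper presents this as a chain of logical equivalences while you unfold it as two directions, but the substance is identical.
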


Lemma~\ref{lemma:intersection-predicate-not-ordered} states the following. 
The intervals in $S$ intersect if and only if there is an interval $x_i \in S$ 
such that the canonical partitions of each other interval in $S$ contain an 
ancestor of $\text{leaf}(x_i)$. By construction, this leaf contains the left endpoint of $x_{i}$. 

\begin{property}
    \label{property:unique-tuple-of-nodes}
    Consider a set of intervals $S = \{x_{1}, \dots, x_{k}\} \subseteq \calI$ and a segment tree $\mathfrak{T}_{\calI}$. 
    For any $x_i \in S$, there can be at most one tuple of nodes $v_j \in \text{anc}(\text{leaf}(x_i))$  for $j\in[k], j\neq i$ 
    that satisfy the conjunction of Lemma~\ref{lemma:intersection-predicate-not-ordered}.
\end{property}

The conjunction in Lemma~\ref{lemma:intersection-predicate-not-ordered} can be satisfied by several $i$-values when there are several intervals in $S$ that have the same left endpoint. 
The database can be transformed such that any two intervals from different relations have distinct left endpoints without affecting query evaluation (Appendix~\ref{appendix:disjoint}).
If the intervals in $S$ have distinct left endpoints, then the conjunction in Lemma~\ref{lemma:intersection-predicate-not-ordered} can be satisfied by at most one $i \in [k]$, namely the one with the maximum left endpoint of the intervals in $S$; this is also the left endpoint of the interval representing the intersection of all intervals in $S$.

Consider a path from the root of the segment tree $\mathfrak{T}_{\mathcal{I}}$ down to
$\text{leaf}(x_i)$ from Lemma~\ref{lemma:intersection-predicate-not-ordered}.
The nodes $v_1, \dots, v_k$ from Lemma~\ref{lemma:intersection-predicate-not-ordered} all lie on this path. These nodes satisfy $v_j \in \text{CP}_\mathcal{I}(x_j)$ for all $j \in [k]-\{i\}$.
WLOG let $v_i = \text{leaf}(x_i)$.
Let $u_1, \ldots, u_k$ be a permutation of $v_1, \dots, v_k$ listing them in order along the path
where $u_j \in \text{anc}(u_{j+1})$ for all $j \in [k-1]$ and $u_k = v_i = \text{leaf}(x_i)$.
Let $\sigma_1, \ldots, \sigma_k$ be the corresponding permutation of the line segments $x_1, \ldots, x_k$ where
$u_j \in \text{CP}_\mathcal{I}(\sigma_j)$ for all $j \in [k-1]$ and $\sigma_k = x_i$.
Such a permutation always exists. Hence it is possible to reformulate
Lemma~\ref{lemma:intersection-predicate-not-ordered} where on the right-hand side we consider
(a disjunction over) all such permutations $\sigma_1, \ldots, \sigma_k$ (subsuming the disjunction over
$i \in [k]$) and in return we get to assume that the inner disjunction is now over {\em ordered}
$(u_1, \dots, u_k) \in \text{anc}(\text{leaf}(\sigma_k))^k$, meaning that $u_{j} \in \text{anc}(u_{j+1})$
for all $j \in [k-1]$ and $u_k = \text{leaf}(\sigma_k)$. This leads to the following variant of
Lemma~\ref{lemma:intersection-predicate-not-ordered}, which will be easier to utilize later in our reduction.

\begin{lemma} [Intersection Predicate Rewriting 2]
\label{lemma:intersection-predicate-ordered}
For any set of intervals $S = \{x_{1}, \dots, x_{k}\} \subseteq \mathcal{I}$, 
the predicate $\left(\bigcap_{i \in [k]} x_{i} \right)\neq \emptyset$
is equivalent to:
\begin{align*}
	 \bigvee_{\sigma \in \pi(S)} 
		\bigvee_{\substack{
            (u_1, \dots, u_k) \in \text{anc}(\text{leaf}(\sigma_k))^k\\
            \forall j\in[k-1]: u_{j} \in \text{anc}(u_{j+1})\\
            u_k = \text{leaf}(\sigma_k)}} 
		\left( 
			\bigwedge_{j\in[k-1]} u_j \in \text{CP}_\mathcal{I}(\sigma_j) 
		\right)
\end{align*}
\end{lemma}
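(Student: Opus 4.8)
The plan is to prove Lemma~\ref{lemma:intersection-predicate-ordered} by showing that the right-hand side displayed in it is logically equivalent to the right-hand side of Lemma~\ref{lemma:intersection-predicate-not-ordered}; since the latter is already equivalent to the intersection predicate, this yields the claim. The only structural fact about segment trees needed on top of Lemma~\ref{lemma:intersection-predicate-not-ordered} is Property~\ref{property:segment-tree}(1): the relation ``$u \in \text{anc}(v)$'' is reflexive and transitive, and restricted to the nodes lying on a single root-to-leaf path it is a total order (equivalently, the prefix order on their bitstrings). Everything else is a re-indexing argument: both sides describe the same combinatorial witness (a chain of nodes along the path from the root to $\text{leaf}(x_i)$), and the two formulations differ only in whether this chain is presented sorted (``ordered'') or unsorted.

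First I would prove the implication from Lemma~\ref{lemma:intersection-predicate-not-ordered} to Lemma~\ref{lemma:intersection-predicate-ordered}. Take a satisfied disjunct of the former: an index $i \in [k]$ and a tuple $(v_1,\dots,v_k) \in \text{anc}(\text{leaf}(x_i))^k$ with $v_j \in \text{CP}_\mathcal{I}(x_j)$ for all $j \neq i$. Since that conjunction does not mention $v_i$, I first harmlessly overwrite $v_i := \text{leaf}(x_i)$, which is still in $\text{anc}(\text{leaf}(x_i))$. Now all $k$ nodes lie on the path from the root to $\text{leaf}(x_i)$, so by Property~\ref{property:segment-tree}(1) I can relist them as $u_1,\dots,u_k$ along the path with $u_j \in \text{anc}(u_{j+1})$ for $j\in[k-1]$, breaking ties arbitrarily and placing the leaf $v_i$ last, so that $u_k = \text{leaf}(x_i)$. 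Letting $\tau$ be the permutation of $[k]$ with $u_j = v_{\tau(j)}$ (so $\tau(k) = i$), set $\sigma := (x_{\tau(1)},\dots,x_{\tau(k)}) \in \pi(S)$. Then $\sigma_k = x_i$, $u_k = \text{leaf}(\sigma_k)$, each $u_j \in \text{anc}(\text{leaf}(\sigma_k))$, and for $j\in[k-1]$ we have $\tau(j)\neq i$, hence $u_j = v_{\tau(j)} \in \text{CP}_\mathcal{I}(x_{\tau(j)}) = \text{CP}_\mathcal{I}(\sigma_j)$. This is precisely the disjunct of Lemma~\ref{lemma:intersection-predicate-ordered} indexed by $\sigma$ and $(u_1,\dots,u_k)$. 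For the converse I would run the construction backwards: given $\sigma\in\pi(S)$ and an ordered tuple $(u_1,\dots,u_k)$ as in Lemma~\ref{lemma:intersection-predicate-ordered}, let $i$ be the original index with $x_i = \sigma_k$, and for each $j\neq i$ let $m\in[k-1]$ be the unique index with $\sigma_m = x_j$ and set $v_j := u_m$, and set $v_i := u_k = \text{leaf}(x_i)$. By transitivity of $\text{anc}$ and $u_k = \text{leaf}(x_i)$, every $u_m$ lies in $\text{anc}(\text{leaf}(x_i))$, so $(v_1,\dots,v_k)\in\text{anc}(\text{leaf}(x_i))^k$; and $v_j = u_m \in \text{CP}_\mathcal{I}(\sigma_m) = \text{CP}_\mathcal{I}(x_j)$ for $j\neq i$. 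Hence the disjunct of Lemma~\ref{lemma:intersection-predicate-not-ordered} for this $i$ is satisfied, and the intersection predicate holds.

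I expect the only subtle point to be the bookkeeping around the unconstrained node $v_i$: Lemma~\ref{lemma:intersection-predicate-not-ordered} does not pin $v_i$ down, so one must reset it to $\text{leaf}(x_i)$ before sorting so that it lands in the last position and meets the requirement $u_k = \text{leaf}(\sigma_k)$. One must also be comfortable with the $v_j$ possibly coinciding, so that ``listing the nodes along the path'' produces a chain in the reflexive relation $\text{anc}$ rather than a strict one — this is harmless since all the conditions are phrased in terms of $\text{anc}$. It is worth noting explicitly that, unlike the reformulation discussed in the surrounding text, Lemma~\ref{lemma:intersection-predicate-ordered} needs no assumption that the intervals in $S$ have distinct left endpoints, because the disjunction over all $\sigma\in\pi(S)$ already subsumes the disjunction over $i$ in Lemma~\ref{lemma:intersection-predicate-not-ordered}. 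Beyond these points the argument is a short, routine re-indexing.
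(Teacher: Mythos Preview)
Your proposal is correct and follows essentially the same approach as the paper. The paper's argument (given as prose between Lemma~\ref{lemma:intersection-predicate-not-ordered} and Lemma~\ref{lemma:intersection-predicate-ordered} rather than as a standalone proof) is exactly your forward direction: set $v_i := \text{leaf}(x_i)$, sort the $v_j$ along the root-to-leaf path using Property~\ref{property:segment-tree}(1) to obtain the ordered $u_1,\dots,u_k$, and let $\sigma$ be the induced permutation of $S$; you additionally spell out the converse re-indexing and the tie-breaking/reflexivity bookkeeping, which the paper leaves implicit.
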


By Property~\ref{property:unique-tuple-of-nodes}, for a permutation $\sigma \in \pi(S)$, 
there can be at most one tuple $(u_{j})_{j \in [k-1]}$ that satisfies the conjunction. 
However, even if the intervals in $S$ have distinct left endpoints, 
the predicate of Lemma~\ref{lemma:intersection-predicate-ordered} may be satisfied by multiple permutations. 
To see this, suppose that $\sigma$ and $(u_{j})_{j \in [k-1]}$ satisfy the predicate. 
If there is $j \in [k-1]$ such that $u_j = u_{j+1}$, then the permutation $\sigma'$, 
obtained by swapping $\sigma_j$ and $\sigma_{j+1}$ in $\sigma$, 
together with the tuple $(u_{j})_{j \in [k-1]}$, also satisfy the predicate. 
It is possible to further restrict the permutations 
such that each tuple of segment tree nodes that satisfies the conjunction corresponds to exactly one permutation
(Appendix~\ref{appendix:disjoint}). 
The equivalence in Lemma~\ref{lemma:intersection-predicate-ordered} can be alternatively expressed using the bitstrings of the nodes in the segment tree. 
By Property~\ref{property:segment-tree} (1) the expression $u_{j} \in \text{anc}(u_{j+1})$  can be equivalently stated as $u_j$ being a prefix of $u_{j+1}$. 
In other words, there exists a tuple of bitstrings $(b_{1}, \dots, b_{k})$ such that $u_{j} = b_{1} \circ \dots \circ b_{j}$ for $j\in[k]$. 
This observation leads to the following rewrite of Lemma~\ref{lemma:intersection-predicate-ordered}.

\begin{lemma}[Intersection Predicate Rewriting 3]
    \label{lemma:from-intersections-to-equalities}
    Consider a set of intervals $S = \{x_{1}, \dots, x_{k}\} \subseteq \mathcal{I}$. 
    The predicate $\left(\bigcap_{x \in S} x\right) \neq \emptyset$ is true if and only 
    if there exists a permutation $\sigma \in \pi(S)$ and a tuple of bitstrings
    $(b_{1}, \dots, b_{k})$ such that:
    \begin{itemize}
    \item $(b_{1} \circ \dots \circ b_{j})  \in \text{CP}_{\calI}(\sigma_{j})$ for $j\in[k-1]$, and
    \item $(b_{1} \circ \dots \circ b_{j}) = \text{leaf}(\sigma_{j})$ for $j = k$.
    \end{itemize}
\end{lemma}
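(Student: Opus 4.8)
The plan is to derive this lemma as a direct syntactic rewriting of Lemma~\ref{lemma:intersection-predicate-ordered}, using only Property~\ref{property:segment-tree}(1). First I would observe that Lemma~\ref{lemma:intersection-predicate-ordered} already expresses the intersection predicate as a disjunction over permutations $\sigma \in \pi(S)$ of an inner condition on an ordered tuple of nodes $(u_1,\dots,u_k)$ satisfying $u_j \in \text{anc}(u_{j+1})$ for $j\in[k-1]$ and $u_k = \text{leaf}(\sigma_k)$, together with $u_j \in \text{CP}_\mathcal{I}(\sigma_j)$ for $j\in[k-1]$. So it suffices to show that specifying such an ordered tuple $(u_1,\dots,u_k)$ is the same as specifying a tuple of bitstrings $(b_1,\dots,b_k)$ with $u_j = b_1 \circ \dots \circ b_j$ for all $j\in[k]$.

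The key step is the bijection between ancestor-chains and bitstring tuples. For the forward direction, given $(u_1,\dots,u_k)$ with $u_j \in \text{anc}(u_{j+1})$, Property~\ref{property:segment-tree}(1) says each $u_j$ is a prefix (as a bitstring) of $u_{j+1}$; hence I can define $b_1 := u_1$ and, for $j\ge 2$, let $b_j$ be the (unique) bitstring such that $u_j = u_{j-1} \circ b_j$, i.e. $b_j$ is the suffix of $u_j$ after removing the prefix $u_{j-1}$. By induction $u_j = b_1 \circ \dots \circ b_j$. Conversely, given any tuple of bitstrings $(b_1,\dots,b_k)$, setting $u_j := b_1\circ\dots\circ b_j$ yields nodes with $u_j$ a prefix of $u_{j+1}$, so $u_j \in \text{anc}(u_{j+1})$ by Property~\ref{property:segment-tree}(1) (here one should note $u_j$ is indeed a node of $\mathfrak{T}_\mathcal{I}$ whenever $u_{j+1}$ is, since any prefix of a node's bitstring is itself a node). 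Under this correspondence, the condition $u_j \in \text{CP}_\mathcal{I}(\sigma_j)$ becomes $(b_1\circ\dots\circ b_j) \in \text{CP}_\mathcal{I}(\sigma_j)$, and $u_k = \text{leaf}(\sigma_k)$ becomes $(b_1\circ\dots\circ b_k) = \text{leaf}(\sigma_k)$, which are exactly the two bullet conditions in the statement (using $\text{CP}_\calI$ notation and noting $\text{CP}_\calI(x) = \{x\}$ is not claimed — the $j=k$ case is handled separately by the leaf equation).

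I would then conclude: the right-hand side of Lemma~\ref{lemma:from-intersections-to-equalities} is logically equivalent to the right-hand side of Lemma~\ref{lemma:intersection-predicate-ordered} — the disjunction over $\sigma$ is identical, and for each fixed $\sigma$ the inner existential over bitstring tuples ranges over exactly the same set of configurations as the inner disjunction over ordered node tuples, via the bijection above. Since Lemma~\ref{lemma:intersection-predicate-ordered} establishes that this is equivalent to $\left(\bigcap_{x\in S} x\right)\neq\emptyset$, the claim follows. I do not expect a genuine obstacle here; the only point requiring a little care is verifying that the bitstring-to-node map is well-defined (every prefix of a valid node bitstring is a valid node of the complete binary tree $\mathfrak{T}_\mathcal{I}$, which holds because $\mathfrak{T}_\mathcal{I}$ is a rooted binary tree and each bitstring encodes the root-to-node path), so that the ranges of the two inner disjunctions really do match up exactly rather than one being a subset of the other.
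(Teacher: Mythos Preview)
Your proposal is correct and follows essentially the same approach as the paper: both directions reduce to Lemma~\ref{lemma:intersection-predicate-ordered} via Property~\ref{property:segment-tree}(1), converting the ancestor chain $(u_1,\dots,u_k)$ into a tuple of bitstrings by taking successive suffixes and vice versa. Your write-up is in fact slightly more explicit than the paper's about why the map is a bijection (the remark that every prefix of a node's bitstring is itself a node), but the argument is the same.
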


Section~\ref{section:reduction-onestep} lifts the rewriting in Lemma~\ref{lemma:from-intersections-to-equalities} to the level of queries.

\subsection{One-Step Forward Reduction}
\label{section:reduction-onestep}

For an \bcqij query $Q$ with hypergraph $\calH=(\calV,\calE)$ and a database $\D$, the forward reduction proceeds iteratively on $Q$ and $\D$ and resolves one join interval variable at a time. Let this variable be $[X]$. The reduction yields a new query $Q_{[X]}$ and a new database $\D_{[X]}$ such that $Q_{[X]}(\D_{[X]})$ is true if and only if $Q(\D)$ is true.

The core computation needed to evaluate $Q$ over $\D$ is the intersection predicate $\left(\bigcap_{x \in S} x\right) \neq \emptyset$, where $S$ consists of one input interval per relation involved in the intersection join on $[X]$. Lemma~\ref{lemma:from-intersections-to-equalities} explains how to express this computation for any subset $S$ of an input set of intervals $\mathcal{I}$ for $[X]$ using the segment tree $\mathfrak{T}_{I}$ for $\mathcal{I}$. 

Let $k=\lvert \calE_{[X]} \rvert$ be the number of hyperedges in $\calH$ containing $[X]$. The reduction maps $[X]$ to fresh point variables $X_{1}, \dots, X_{k}$ that range over the possible bitstrings of the segment tree nodes from the canonical partitions of the intervals of $[X]$.

Given a permutation $\sigma=(\sigma_1,\ldots,\sigma_k)\in\pi(\calE_{[X]})$ of the hyperedges $\calE_{[X]}$ containing $[X]$, each hyperedge $\sigma_i$ induces a fresh hyperedge $\tilde{\sigma_i}$ that has the fresh point variables $X_1,\ldots,X_i$ in place of the original interval variable $[X]$: $\tilde{\sigma_i} = \sigma_i\setminus\{[X]\}\cup\{X_1,\ldots,X_i\}$.

\begin{definition}[One-step Hypergraph Transformation]\label{definition:hypergraph-onestepreduction}
	Gi-ven a hypergraph $\calH=(\calV,\calE)$, an interval variable $[X]$, and any permutation $\sigma\in\pi(\calE_{[X]})$, the hypergraph $\tilde{\calH}_{([X],\sigma)}$ has the set $\calV_{[X]}$ of vertices and the set $\calE_{([X],\sigma)}$ of hyperedges, where 
    $\calV_{[X]}=\calV\setminus\{[X]\}\cup\{X_{1}, \dots, X_{k}\}$ and 
	$\calE_{([X],\sigma)} = \calE\setminus\{\sigma_i\mid i\in [k]\}\cup \{\tilde{\sigma_i} \mid i\in [k]\}$.
    The set $\tilde{\calH}_{[X]}=\{\tilde{\calH}_{([X],\sigma)} \mid \sigma\in\pi(\calE_{[X]})\}$ consists of all hypergraphs created from $\calH$ by resolving the interval variable $[X]$.
\end{definition}

For a given permutation $\sigma$, there is a one-to-one correspondence between the hyperedges in $\calH$ and those in $\tilde{\calH}_{([X],\sigma)}$. We obtain as many new hypergraphs as the number of permutations of $\calE_{[X]}$. 

\begin{example}\label{ex:onestep}\em
	Let $\calH$ be a hypergraph with vertices $\{[A],[B],[C]\}$ and edges $e_1=e_2=\{[A],[B],[C]\}$ and $e_3=\{[A]\}$.

	We reduce $\calH$ by resolving the interval variable $[A]$. Since $[A]$ occurs in three edges, we create three point variables $A_1,A_2,A_3$ and consider six permutations. The new edges created for the permutation $\sigma=(e_1,e_2,e_3)$ are: $\tilde{\sigma}_1=\{A_1,[B],[C]\}$, $\tilde{\sigma}_2=\{A_1,A_2,[B],[C]\}$, and $\tilde{\sigma}_3=\{A_1,A_2,A_3\}$. For the permutation $(e_3,e_2,e_1)$, the new edges are: $\{A_1\}$, $\{A_1,A_2, [B], [C]\}$, and $\{A_1,A_2, A_3, [B], [C]\}$.\punto
\end{example}

Each hypergraph $\tilde{\calH}_{([X],\sigma)}$ defines a new query $Q_{([X],\sigma)}$ and the corresponding database $\D_{([X],\sigma)}$. We next explain how to construct the new query and the new database.

The query $Q$ is rewritten according to the new hypergraphs: For each permutation $\sigma$, we create an \query query $\tilde{Q}_{([X],\sigma)}$, which has equality joins and possibly remaining intersection joins, whose hypergraph is $\tilde{\calH}_{([X],\sigma)}$. By taking all permutations, we thus create a query $\tilde{Q}_{[X]}$ that is a disjunction of  \query queries such that each such query has one join interval variable less, namely $[X]$. Note that $Q$ need not be an \bcqij query: It may have both intersection and equality joins, for instance if it is the result of a previous rewriting step.

\begin{definition}[One-Step Query Rewriting]\label{definition:query-onestepreduction}
    \label{definition:one-step-rewriting}
	For any \query query $Q$ with hypergraph $\calH=(\calV,\calE)$, the \query query   
	$\tilde{Q}_{([X],\sigma)}$ with hypergraph $\tilde{\calH}_{([X],\sigma)} = (\calV_{[X]},\calE_{([X],\sigma)})$ is defined by:

    \[
        \tilde{Q}_{([X], \sigma)} := \bigwedge_{\tilde{e} \in \tilde\calE_{([X],\sigma)}} \tilde{R}(\tilde{e})
    \]

	The query $\tilde{Q}_{[X]}$ is the disjunction of the \query queries $\tilde{Q}_{([X],\sigma)}$ over all possible permutations $\sigma\in\pi(\calE_{[X]})$:

    \[
        \tilde{Q}_{[X]} := \bigvee_{\sigma \in \pi(\calE_{[X]})} \tilde{Q}_{([X], \sigma)}.
    \]
\end{definition}

\begin{example}\em 
	The \bcqij query $Q=$ $R([A],[B],[C])$ $\wedge S([A],[B],[C])$ $\wedge T([A])$ has the hypergraph $\calH$ in Example~\ref{ex:onestep}.
	The permutation $(e_1,e_2,e_3)$ yields $\tilde{Q}_1$ $= \tilde{R}(A_1,[B],[C]),$ $\tilde{S}(A_1,A_2,[B],[C]),$ $\tilde{T}(A_1,A_2,$ $A_3).$  
	The permutation $(e_3,e_2,e_1)$ yields $\tilde{Q}_2$ $= \tilde{R}(A_1,A_2,A_3,[B],[C]),$ $\tilde{S}(A_1,A_2,[B],[C]),$ $\tilde{T}(A_1).$
	The final query $\tilde{Q}_{[A]}$ is a disjunction of six \query queries, including $\tilde{Q}_1$ and $\tilde{Q}_2$.\punto
\end{example}

For each new hyperedge $\tilde{\sigma_i}$ in $\tilde{\calH}_{([X],\sigma)}$, there is a new relation $\tilde{R}_{\tilde{\sigma_i}}(\tilde{\sigma_i})$. To avoid clutter, we denote it by $\tilde{R}(\tilde{\sigma_i})$; its schema $\tilde{\sigma_i}$ uniquely identifies the transformation of the original relation $R$. 

\begin{definition}[One-Step Database Transformation]
    \label{definition:one-step-transformation}
    The database $\tilde{\D}_{([X],\sigma)}$ is constructed from the database $\D$ as follows.
    For each tuple $t\in R(\sigma_i)$, we construct tuples $\tilde{t}\in\tilde{R}(\tilde{\sigma_i})$ such that:
    \begin{itemize}
    	\item $t(\sigma_i\setminus\{[X]\}) = \tilde{t}(\tilde{\sigma_i}\setminus\{X_1,\ldots,X_i\})$
    	\item $(\tilde{t}(X_1)\circ\cdots\circ \tilde{t}(X_i))\in \text{CP}_{\calI}(t([X]))$ if $i\in [k-1]$
    	\item $(\tilde{t}(X_1)\circ\cdots\circ \tilde{t}(X_i)) = \text{leaf}(t([X]))$ if $i = k$
    \end{itemize}
    The relations whose schemas do not contain $[X]$ are copied from $\D$ to $\tilde{\D}_{([X],\sigma)}$.
    The new database $\tilde{\D}_{[X]}$ is the set of all relations in the databases $\tilde{\D}_{([X],\sigma)}$.  
\end{definition} 

The number of tuples $\tilde{t}$ constructed for a tuple $t$ in Definition~\ref{definition:one-step-transformation} depends on the size of the canonical partition $\text{CP}_{\calI}(t([X]))$ of $t([X])$ and on the number of ways we can partition  the bitstring of a node in the canonical partition into $i$ substrings. 
Overall, this number is poly-logarithmic in the number of input intervals $\mathcal{I}$. This is made more precise in the next lemma.

\begin{lemma} 
    \label{lemma:db-x-sigma-size}
    Each new relation $\tilde{R}(\tilde{\sigma_i})$ in database $\tilde{\D}_{([X],\sigma)}$ constructed from the database $\D$ following Definition~\ref{definition:one-step-transformation} has the size:
    $O(\lvert R(\sigma_{i}) \rvert \cdot \log^{i} \lvert \mathcal{I}\rvert)$ if $i\in[k - 1]$; and
    $O(\lvert R(\sigma_{i})\rvert \cdot \log^{i - 1} \lvert \mathcal{I} \rvert)$ if $i = k$.
    It can be constructed in time proportional to its size.
\end{lemma}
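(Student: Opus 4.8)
The plan is to analyze the construction in Definition~\ref{definition:one-step-transformation} tuple by tuple and bound the total number of output tuples $\tilde{t}$ generated from each input tuple $t \in R(\sigma_i)$, then sum over all $t$. Fix a tuple $t \in R(\sigma_i)$ with interval value $x = t([X])$. The non-$[X]$ coordinates of $\tilde{t}$ are fully determined by $t$ (first bullet of Definition~\ref{definition:one-step-transformation}), so the only freedom is in choosing the bitstrings $\tilde{t}(X_1), \dots, \tilde{t}(X_i)$ subject to the constraint that their concatenation equals a prescribed node bitstring. Hence the number of $\tilde{t}$ produced from $t$ equals the number of pairs (node $u$, ordered way to split the bitstring of $u$ into $i$ substrings), where $u$ ranges over $\text{CP}_{\calI}(x)$ if $i \in [k-1]$, or over the single node $\text{leaf}(x)$ if $i = k$.

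The key quantitative inputs are: (i) by Property~\ref{property:segment-tree}(3), $|\text{CP}_{\calI}(x)| = O(\log|\calI|)$, and in the case $i=k$ the canonical partition is replaced by the single leaf $\text{leaf}(x)$, contributing a factor $1$ instead of $O(\log|\calI|)$; and (ii) every node bitstring in $\mathfrak{T}_{\calI}$ has length at most the height of the tree, which is $O(\log|\calI|)$, so the number of ways to cut a bitstring of length $\le h = O(\log|\calI|)$ into $i$ ordered (possibly empty) substrings is $\binom{h + i - 1}{i - 1} = O(h^{i-1}) = O(\log^{i-1}|\calI|)$, since $i \le k$ is a constant depending only on the query, not on the data. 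Multiplying these two bounds: for $i \in [k-1]$ we get $O(\log|\calI|) \cdot O(\log^{i-1}|\calI|) = O(\log^{i}|\calI|)$ tuples per input tuple, and for $i = k$ we get $1 \cdot O(\log^{i-1}|\calI|) = O(\log^{i-1}|\calI|)$. Summing over all $t \in R(\sigma_i)$ yields $|\tilde{R}(\tilde{\sigma_i})| = O(|R(\sigma_i)| \cdot \log^{i}|\calI|)$ for $i \in [k-1]$ and $O(|R(\sigma_i)| \cdot \log^{i-1}|\calI|)$ for $i = k$, as claimed.

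For the running time, I would observe that the construction is purely local: for each $t \in R(\sigma_i)$, computing $\text{CP}_{\calI}(x)$ takes $O(\log|\calI|)$ time by Property~\ref{property:segment-tree}(3) (and $\text{leaf}(x)$ is found in $O(\log|\calI|)$ time by descending the tree once the segment tree is built), and then enumerating all $O(\log^{i-1}|\calI|)$ splittings of each of the $O(\log|\calI|)$ (or one) node bitstrings and emitting the corresponding $\tilde{t}$ costs time proportional to the number of tuples emitted, up to the $O(\log|\calI|)$ cost of reading/copying a bitstring of that length — which is already absorbed into "proportional to its size" if we measure size in bits, or can be folded into the stated bound. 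One must also account once for building $\mathfrak{T}_{\calI}$ in $O(|\calI|\log|\calI|)$ time, which is dominated by the output size. Thus the relation can be materialized in time proportional to its size.

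The main obstacle — really the only nonroutine point — is making precise the counting of bitstring splittings and confirming that the relevant exponent is exactly $i-1$ and not $i$: an ordered split of a string of length $h$ into $i$ blocks is chosen by picking $i-1$ cut positions among $h+1$ slots with repetition allowed (empty blocks permitted, since $X_1, \dots, X_i$ may legitimately be empty bitstrings, e.g. when a node is the root or near it), giving $\binom{h+i-1}{i-1} = \Theta(h^{i-1})$; combined with the $\Theta(\log|\calI|)$ from the canonical partition this produces the claimed $\log^{i}$, while in the $i=k$ terminal case the canonical-partition factor disappears and we are left with $\log^{i-1}$. One should double-check against the triangle example in Section~\ref{sec:example}, where $\tilde{R}(A_1,A_2,B_1,B_2)$ arises from two interval variables each in the non-terminal case $i=2$, giving size $O(N \log^2 N \cdot \log^2 N)$? — no: there each variable is resolved in a separate step, and after resolving $[A]$ the relation has size $O(N\log^2 N)$ (one factor $\log|\calI|$ from $\text{CP}$, one factor $\log^{i-1} = \log^1$ from splitting into $i=2$ blocks), consistent with the claim for $i=2$; the stated $O(N\log^2 N)$ overall for the triangle relations then comes from resolving the second interval variable, matching "one logarithmic factor per join interval variable."
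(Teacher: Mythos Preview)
Your proposal is correct and follows essentially the same approach as the paper's proof: the paper defines $\mathfrak{F}(u,i)$ as the set of $i$-tuples of bitstrings whose concatenation is $u$, states as a claim that $|\mathfrak{F}(u,i)| = O(\log^{i-1}|\calI|)$, and then multiplies by $|\text{CP}_\calI(t([X]))| = O(\log|\calI|)$ (for $i<k$) or by $1$ (for $i=k$) and sums over $t$. Your version is in fact slightly more explicit, since you give the binomial-coefficient count $\binom{h+i-1}{i-1}$ for the splittings rather than leaving it as an unproved claim.
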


The transformations in Definitions~\ref{definition:one-step-transformation} and \ref{definition:one-step-rewriting} preserve the equivalence to the original evaluation problem: 
The result of $Q$ over $\D$ is the same as the result of $\tilde{Q}_{[X]}$ over $\tilde{\D}_{[X]}$.

\begin{lemma}
    \label{lemma:correctness-one-step}
    Given any \query query $Q$, interval variable $[X]$ in $Q$, and any database $\D$, let the \query query $\tilde{Q}_{[X]}$ and database $\tilde{\D}_{[X]}$ be constructed as per Definitions~\ref{definition:one-step-transformation} and \ref{definition:one-step-rewriting}. Then, $Q(\D)$ is true if and only if $\tilde{Q}_{[X]}(\tilde{\D}_{[X]})$ is true.
\end{lemma}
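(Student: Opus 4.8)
The plan is to prove Lemma~\ref{lemma:correctness-one-step} by chaining the query-level semantics of Definition~\ref{definition:bcqij} with the interval-level rewriting of Lemma~\ref{lemma:from-intersections-to-equalities}, applied uniformly to the single interval variable $[X]$ being resolved. First I would unfold both sides. On the left, $Q(\D)$ is true iff there is a tuple assignment $(t_e)_{e\in\calE}\in\prod_{e\in\calE}R_e$ satisfying the intersection constraints for every interval variable of $Q$ and the equality constraints for every point variable. On the right, $\tilde Q_{[X]}(\tilde\D_{[X]})$ is true iff there is a permutation $\sigma\in\pi(\calE_{[X]})$ and a tuple assignment $(\tilde t_{\tilde e})_{\tilde e\in\tilde\calE_{([X],\sigma)}}$ for $\tilde Q_{([X],\sigma)}$ satisfying the analogous constraints over $\tilde\calH_{([X],\sigma)}$. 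The only structural change between $\calH$ and $\tilde\calH_{([X],\sigma)}$ is that $[X]$ has been replaced by $X_1,\dots,X_k$ and each hyperedge $\sigma_i\in\calE_{[X]}$ has been replaced by $\tilde\sigma_i$; all other hyperedges, variables, relations, and their constraints are untouched (Definitions~\ref{definition:hypergraph-onestepreduction}, \ref{definition:one-step-rewriting}, \ref{definition:one-step-transformation}). So the proof reduces to showing that, fixing witnessing tuples on the common part of the schema, the $[X]$-intersection constraint $\bigcap_{e\in\calE_{[X]}} t_e([X])\neq\emptyset$ is satisfiable in $\D$ exactly when, for some $\sigma$, the corresponding equality-join pattern among $X_1,\dots,X_k$ (as encoded by $\tilde\D_{([X],\sigma)}$) is satisfiable.

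The key step, then, is a local equivalence: for a fixed choice of input tuples $t_{\sigma_1},\dots,t_{\sigma_k}$ (one per relation in $\calE_{[X]}$) agreeing on all variables other than $[X]$, the set $S=\{t_{\sigma_1}([X]),\dots,t_{\sigma_k}([X])\}\subseteq\mathcal I$ of intervals has nonempty intersection iff there is a permutation $\sigma'\in\pi(\calE_{[X]})$ and bitstrings $b_1,\dots,b_k$ such that each $\tilde R(\tilde\sigma'_i)$ contains a tuple $\tilde t_{\tilde\sigma'_i}$ with $\tilde t_{\tilde\sigma'_i}(\sigma'_i\setminus\{[X]\})=t_{\sigma'_i}(\sigma'_i\setminus\{[X]\})$ and $\tilde t_{\tilde\sigma'_i}(X_j)=b_j$ for $j\le i$. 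By the construction of $\tilde\D_{([X],\sigma')}$ in Definition~\ref{definition:one-step-transformation}, the tuple $\tilde t_{\tilde\sigma'_i}$ with those $A$-values exists iff $(b_1\circ\cdots\circ b_i)\in\text{CP}_{\calI}(t_{\sigma'_i}([X]))$ for $i\in[k-1]$ and $(b_1\circ\cdots\circ b_k)=\text{leaf}(t_{\sigma'_k}([X]))$. That is precisely the right-hand side of Lemma~\ref{lemma:from-intersections-to-equalities}, instantiated with the permutation of intervals induced by $\sigma'$ and the same bitstring tuple. So the local equivalence is exactly Lemma~\ref{lemma:from-intersections-to-equalities}, and I would cite it directly.

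From the local equivalence the global statement follows by a fairly mechanical bookkeeping argument in both directions. For the forward direction, given a witness $(t_e)_{e\in\calE}$ for $Q(\D)$, apply Lemma~\ref{lemma:from-intersections-to-equalities} to $S=\{t_{\sigma_i}([X])\}_{i\in[k]}$ to obtain a permutation $\sigma$ of $\calE_{[X]}$ (transported from the permutation of the intervals) and bitstrings $b_1,\dots,b_k$; build $\tilde t_{\tilde\sigma_i}$ by copying $t_{\sigma_i}$ on $\sigma_i\setminus\{[X]\}$ and setting $X_j\mapsto b_j$, and keep $t_e$ unchanged for every $e\notin\calE_{[X]}$. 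These tuples lie in the respective relations of $\tilde\D_{([X],\sigma)}$ by Definition~\ref{definition:one-step-transformation}, and they satisfy all constraints of $\tilde Q_{([X],\sigma)}$: the equality constraints on the untouched variables hold because the underlying tuples are unchanged; the equality constraints forcing a consistent value for each $X_j$ across the $\tilde\sigma_i$'s hold because all $\tilde t_{\tilde\sigma_i}$ use the common bitstrings $b_1,\dots,b_j$; and the intersection constraints for the other interval variables $[Y]\neq[X]$ are inherited verbatim since $\calE_{[Y]}$ and the $t_e([Y])$ values are unaffected. Hence $\tilde Q_{([X],\sigma)}(\tilde\D_{([X],\sigma)})$, so $\tilde Q_{[X]}(\tilde\D_{[X]})$. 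For the converse, start from a satisfying assignment for some disjunct $\tilde Q_{([X],\sigma)}$; read off the bitstrings $b_j=\tilde t(X_j)$ (well-defined and consistent across atoms by the equality constraints), recover for each $\sigma_i$ an original tuple $t_{\sigma_i}\in R(\sigma_i)$ witnessing membership via Definition~\ref{definition:one-step-transformation} (the values on $\sigma_i\setminus\{[X]\}$ are copied back, and the interval $t_{\sigma_i}([X])$ is any preimage — here one should note the transformation records enough to pick such a preimage, or argue existence since $\tilde t_{\tilde\sigma_i}$ was created from some $t$), and keep all other tuples. Lemma~\ref{lemma:from-intersections-to-equalities} then certifies $\bigcap_i t_{\sigma_i}([X])\neq\emptyset$, and all remaining constraints transfer back as before, yielding $Q(\D)$.

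The main obstacle I anticipate is not conceptual but a matter of care in the converse direction: recovering consistent original tuples from the transformed ones. Definition~\ref{definition:one-step-transformation} maps each $t\in R(\sigma_i)$ to possibly many $\tilde t$, so going backwards one must argue that the $\tilde t$ appearing in the satisfying assignment of $\tilde Q_{([X],\sigma)}$ does come from some genuine $t\in R(\sigma_i)$ with $t(\sigma_i\setminus\{[X]\})=\tilde t(\tilde\sigma_i\setminus\{X_1,\dots,X_i\})$, and that the recovered intervals across the $k$ atoms are exactly the $[X]$-values that the witness must intersect — which is automatic because each $\sigma_i$ is a distinct relation and the intersection constraint on $[X]$ ranges over exactly $\calE_{[X]}$. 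A secondary subtlety is the translation between permutations of $\calE_{[X]}$ (used in Definition~\ref{definition:query-onestepreduction}) and permutations of the interval set $S$ (used in Lemma~\ref{lemma:from-intersections-to-equalities}): since $S$ is indexed by $\calE_{[X]}$ via the chosen tuples, a permutation of one canonically induces a permutation of the other, and I would spell this bijection out once at the start so the rest reads cleanly. Everything else is routine propagation of unchanged constraints, which I would state in one sentence rather than belabor.
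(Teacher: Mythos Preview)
Your proposal is correct and follows essentially the same approach as the paper: a two-direction argument pivoting on Lemma~\ref{lemma:from-intersections-to-equalities} for the $[X]$-constraint, with all other constraints carried over unchanged because the one-step transformation leaves the rest of the schema untouched. Your forward direction is in fact more explicit than the paper's (which never spells out where the permutation $\sigma$ and the bitstrings $(x_1,\dots,x_k)$ come from before invoking Definition~\ref{definition:one-step-transformation}), and the two subtleties you flag in the converse direction --- preimage existence under Definition~\ref{definition:one-step-transformation} and the identification of permutations of $\calE_{[X]}$ with permutations of $S$ --- are real but routine, exactly as you say.
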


\begin{example}\label{ex:forward-reduction}
\em
We demonstrate the database transformation for the query $Q_{\Delta}$
in Section~\ref{sec:example}. Consider the interval variable $[A]$.
We convert $Q_{\Delta}$ into the disjunction of two queries with 
two new point  variables $A_{1}$ and $A_{2}$ in lieu of 
the interval variable $[A]$: 
\begin{eqnarray*}
    \tilde Q' &=& \tilde R_{1}(A_{1}, [B]) \wedge S([B], [C]) \wedge \tilde T_{1}(A_{1}, A_{2}, [C]),\\
    \tilde Q'' &=& \tilde R_{2}(A_{1}, A_{2}, [B]) \wedge S([B], [C]) \wedge \tilde T_{2}(A_{1}, [C]).
\end{eqnarray*}
Let $\D$ be the input database with relations $R, S$ and $T$ and let $N$ be the size of $\D$. By Definition~\ref{definition:one-step-transformation}, we construct a new database instance $\tilde \D_{[A]}$ that consists of $S$ and four new relations $\tilde R_{1}, \tilde T_{1}, \tilde R_{2}$ and $\tilde T_{2}$. By Lemma~\ref{lemma:correctness-one-step}, $Q_{\Delta}(\D)$ is true
if and only if $\tilde Q'(\tilde \D_{[A]})$ or $\tilde Q'' (\tilde \D_{[A]})$ are true.

Let $\calI$ be the set of all $[A]$-intervals in $R$ and $T$. We have $\lvert \calI \rvert \leq  2\cdot N$.
Let $\mathfrak{E}$ be the set of endpoints of these intervals. We have $\lvert \mathfrak{E} \rvert \leq 2 \cdot \lvert \calI \rvert$.
Assume that $\mathfrak{E} = \{0, 1, 2, \dots, k\}$ for some integer $k \leq 2\cdot N$. 
This is without loss of generality since the intersection problem does not depend on the absolute values of the end points but only on their relative positioning. 
Assume also that $k$ is a power of $2$; otherwise, replace $k$ with the smallest power of $2$ that is $\geq k$.
Let $\mathfrak{T}_{\calI}$ be the segment tree whose root corresponds to the interval $[0, k)$, and
the left and right children of each node correspond to the left and right halves of the corresponding interval respectively.
Assume without loss of generality that each interval $i \in \calI$ corresponds to a node in $\mathfrak{T}_{\calI}$; 
otherwise, $i$ can be broken down into $O(\log_2 \lvert k \rvert) = O(\log_2 N)$ intervals that correspond to nodes in $\mathfrak{T}_{\calI}$ (Property~\ref{property:segment-tree} (3)).
Each node $n$ in $\mathfrak{T}_{\calI}$ can be encoded as a binary string, cf.\@ Section~\ref{section:preliminaries}: If $n$ corresponds to the string $b$, then its left and right children correspond to the strings $``b0"$ and $``b1"$ respectively.
For an interval $i$ that corresponds to $n$ in $\mathfrak{T}_{\calI}$, let $\text{bin}(i)$ denote the binary string that encodes node $n$.
Let $i_{1}$ and $i_{2}$ be two intervals that correspond to the nodes $n_{1}$ and $n_{2}$. They intersect if and only if one of the two nodes is an ancestor of the other (or the nodes are the same). That is, one of them  contains the other (Property 3.3, part (1)). Equivalently, they intersect if and only if one of the two binary strings $\text{bin}(i_{1})$ and $\text{bin}(i_{2})$ is a prefix of the other (Property~\ref{property:segment-tree} (1)).

Define the new relations as follows (Definition~\ref{definition:one-step-transformation}):
\begin{eqnarray*}
    \tilde R_{1} &=& \{(\text{bin}([a]), [b])   \mid   ([a], [b]) \in R\}\\
    \tilde T_{2} &=& \{(\text{bin}([a]), [c])   \mid   ([a], [c]) \in T\}\\
   \tilde  T_{1} &=& \{(a_{1}, a_{2}, [c])   \mid   ([a], [c]) \in T   \wedge   a_{1}\circ a_{2} = \text{bin}([a])\}\\
   \tilde R_{2} &=& \{(a_{1}, a_{2}, [b])   \mid   ([a], [b]) \in R   \wedge   a_{1}\circ a_{2} = \text{bin}([a])\}.
\end{eqnarray*}
By Lemma~\ref{lemma:db-x-sigma-size} and the assumption that each $[A]$-interval corresponds to exactly one node of $\mathfrak{T}_{\calI}$, we have $\lvert \tilde R_{1} \rvert = \lvert R \rvert$, $\lvert \tilde T_{2} \rvert  = \lvert T \rvert$, and both $\tilde R_{1}$ and $\tilde T_{2}$ can be constructed in linear time. Furthermore,  $\lvert \tilde T_{1}\rvert = O(\lvert T \rvert \cdot \log N)$ because there are $O(\log N)$ ways to break a binary string $\text{bin}([a])$ of length $O(\log N)$ in two. $\tilde T_{1}$ can be constructed in time $O(\rvert T\lvert  \cdot\log N)$. Similarly for $\tilde R_{2}$ (Lemma~\ref{lemma:db-x-sigma-size}).
Then, $\tilde Q'(\tilde \D_{[A]})$ holds if and only if $Q(\D)$ has a satisfying assignment where the $[A]$-interval from $R$ contains the $[A]$-interval from $T$. Also, $\tilde Q''(\tilde \D_{[A]})$ holds if and only if $Q(\D)$ has a satisfying assignment where the $[A]$-interval from $R$ is contained in the $[A]$-interval from $T$.
Consequently $Q(D)$ holds if and only if $\tilde Q'(\tilde \D_{[A]})$ or $\tilde Q''(\tilde \D_{[A]})$ holds (Lemma~\ref{lemma:correctness-one-step}). \punto

\end{example}

\subsection{Full Forward Reduction}
\label{subsection:reduction}

In this section we show how to completely reduce (1) any \bcqij query to a disjunction of \bcq queries and (2) any database to a database with bitstrings in place of intervals for join interval variables. In particular, the full reduction is obtained by iteratively applying the reduction step from Section~\ref{section:reduction-onestep} for each interval variable to the result of the previous reduction step or to the input query and database in case of the first reduction step.

\begin{algorithm}
    \caption{\bcqij to \bcq Reduction\\ \textbf{Input}: \bcqij query $Q$ with hypergraph $\calH$, database $\mathbf{D}$}
    \label{algorithm:reduction}
    \begin{algorithmic}[1]
    \Procedure{Reduce}{$\tilde{\mathbf{H}}=\{\calH\}, \mathbf{Q} = \{Q\}, \tilde{\mathbf{D}} = \mathbf{D}$}
    \For{each interval join variable $[X]$ in $Q$}
		\State{$\tilde{\mathbf{H}}_0 := \tilde{\mathbf{H}}$; $\tilde{\mathbf{H}} := \emptyset$}
		\For{each $\calH\in\tilde{\mathbf{H}}_0$}
			\State{create $\tilde{\calH}_{[X]}$ from $\calH$ following Definition~\ref{definition:hypergraph-onestepreduction}}
			\State{$\tilde{\mathbf{H}} := \tilde{\mathbf{H}} \cup \tilde{\calH}_{[X]}$}
		\EndFor
		\State{$\tilde{\mathbf{Q}}_0 := \tilde{\mathbf{Q}}$; $\tilde{\mathbf{Q}} := \emptyset$}
		\For{each $Q\in\tilde{\mathbf{Q}}_0$}
			\For{each $\sigma\in\pi(\calE_{[X]})$}
				\State{create $\tilde{Q}_{([X],\sigma)}$ from $Q$ following Definition~\ref{definition:query-onestepreduction}}
				\State{$\tilde{\mathbf{Q}} := \tilde{\mathbf{Q}} \cup \{\tilde Q_{([X],\sigma)}\}$}
			\EndFor
		\EndFor
		\State{create $\tilde{\mathbf{D}}_{[X]}$ from $\tilde{\mathbf{D}}$ following Definition~\ref{definition:one-step-transformation}}
		\State{$\tilde{\mathbf{D}} := \tilde{\mathbf{D}}_{[X]}$}
    \EndFor
    \State{\textbf{return} $(\tilde{\mathbf{H}},\tilde{\mathbf{Q}},\tilde{\mathbf{D}})$}
    \EndProcedure
    \end{algorithmic}
\end{algorithm}

Algorithm~\ref{algorithm:reduction} details the reduction. The result is a triple consisting of: the set $\tilde{\mathbf{H}}$ of hypergraphs constructed by iteratively resolving the join interval variables in the input \bcqij query $Q$; the set $\tilde{\mathbf{Q}}$ of \bcq queries, with one such query per hypergraph in $\tilde{\mathbf{H}}$ ; and the database $\tilde{\mathbf{D}}$. The final query is the disjunction of the \bcq queries in $\tilde{\mathbf{Q}}$.

We further define the transformation function $\tau$ that takes any hypergraph $\calH$ to the set of hypergraphs $\tilde{\mathbf{H}}$: $\tau(\calH) = \tilde{\mathbf{H}}$, where $(\tilde{\mathbf{H}},\tilde{\mathbf{Q}},\tilde{\mathbf{D}}) = \textsc{Reduce}(\{\calH\}, \{Q\},\D)$. This is used in the following sections to define the complexity of \bcqij queries.

The next theorem states that our reduction is correct.

\begin{theorem}[Correctness]
\label{theorem:reduction-correctness}
For any $\bcqij$ query $Q$ with hypergraph $\calH$ and any database $\D$, it holds that $Q(\D)$ is true if and only if $\bigvee_{\tilde{Q}\in\tilde{\mathbf{Q}}} \tilde{Q}(\tilde{\mathbf{D}})$ is true, where $(\tilde{\mathbf{H}},\tilde{\mathbf{Q}},\tilde{\mathbf{D}}) = \textsc{Reduce}(\{\calH\}, \{Q\},\D)$.
\end{theorem}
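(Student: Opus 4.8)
The plan is to prove Theorem~\ref{theorem:reduction-correctness} by induction on the number of interval join variables in $Q$, using Lemma~\ref{lemma:correctness-one-step} as the induction step. The key observation is that Algorithm~\ref{algorithm:reduction} is exactly the iterated application of the one-step reduction of Section~\ref{section:reduction-onestep}, and that each one-step reduction preserves the truth value of the evaluation problem across a \emph{disjunction} of \query queries.

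First I would set up the induction carefully. Let $[X_1],\dots,[X_p]$ be the interval join variables of $Q$ processed in the order chosen by the outer loop of Algorithm~\ref{algorithm:reduction}. Define $Q^{(0)} = Q$ and $\D^{(0)} = \D$, and let $\mathbf{Q}^{(j)}$, $\tilde{\mathbf{D}}^{(j)}$ denote the disjunction of \query queries and the database held by the algorithm after processing $[X_1],\dots,[X_j]$. The induction hypothesis is: $Q(\D)$ is true if and only if $\bigvee_{\tilde{Q}\in\mathbf{Q}^{(j)}} \tilde{Q}(\tilde{\mathbf{D}}^{(j)})$ is true. The base case $j=0$ is trivial. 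For the induction step from $j$ to $j{+}1$, note that the algorithm replaces each $\tilde{Q}\in\mathbf{Q}^{(j)}$ by the disjunction $\tilde{Q}_{[X_{j+1}]}$ of Definition~\ref{definition:query-onestepreduction} and replaces $\tilde{\mathbf{D}}^{(j)}$ by $(\tilde{\mathbf{D}}^{(j)})_{[X_{j+1}]}$ of Definition~\ref{definition:one-step-transformation}; Lemma~\ref{lemma:correctness-one-step} gives, for each individual $\tilde Q$, that $\tilde{Q}(\tilde{\mathbf{D}}^{(j)})$ is true iff $\tilde{Q}_{[X_{j+1}]}((\tilde{\mathbf{D}}^{(j)})_{[X_{j+1}]})$ is true. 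Disjoining over all $\tilde Q\in\mathbf{Q}^{(j)}$ and combining with the induction hypothesis yields the claim for $j{+}1$. After $j=p$ steps, every interval join variable has been resolved, so each query in $\mathbf{Q}^{(p)}$ is an \bcq query, $\mathbf{Q}^{(p)} = \tilde{\mathbf{Q}}$, $\tilde{\mathbf{D}}^{(p)} = \tilde{\mathbf{D}}$, and we are done.

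There is one subtlety I would need to address to make the induction step go through cleanly: Lemma~\ref{lemma:correctness-one-step} is stated for a single \query query over a single database, whereas at intermediate stages the algorithm maintains a \emph{set} of databases $\tilde{\mathbf{D}}^{(j)}$ (one per hypergraph/permutation branch) rather than one monolithic database. I would argue that this is only a matter of bookkeeping: the relations produced for different branches have distinct schemas (identified by their bag structure, per the "$\tilde R$ uniquely identified by its schema" convention), so each $\tilde Q$ reads exactly the relations of its own branch and the union $\tilde{\mathbf{D}}$ is just their disjoint collection; evaluating $\tilde Q$ against $\tilde{\mathbf{D}}$ is the same as evaluating it against the sub-database relevant to its branch. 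Care is also needed that a variable $[X_{j+1}]$ which, after earlier rewritings, may now co-occur in bags alongside previously-introduced point variables is still handled by Definition~\ref{definition:hypergraph-onestepreduction} — but the excerpt already notes that the one-step machinery applies to arbitrary \query queries, not just pure \bcqij queries, so this is covered.

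The main obstacle, then, is not the induction itself (which is routine once the hypothesis is phrased at the level of disjunctions) but verifying that the set-valued state maintained by Algorithm~\ref{algorithm:reduction} — sets of hypergraphs, a disjunction of queries, a family of transformed relations — is faithfully described, branch by branch, by the single-step Definitions~\ref{definition:hypergraph-onestepreduction}--\ref{definition:one-step-transformation} and Lemma~\ref{lemma:correctness-one-step}. I expect to spend most of the argument establishing this correspondence precisely and then invoking Lemma~\ref{lemma:correctness-one-step} uniformly across branches; the rest is immediate.
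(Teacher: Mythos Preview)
Your proposal is correct and follows essentially the same approach as the paper's proof: both argue by induction on the number of interval variables processed by \textsc{Reduce}, using Lemma~\ref{lemma:correctness-one-step} as the inductive step applied uniformly to each disjunct. Your base case at $j=0$ is a minor cosmetic difference from the paper's base case at $j=1$, and the bookkeeping subtlety you flag about the shared database $\tilde{\mathbf{D}}$ is exactly the point resolved by the schema-identification convention; otherwise the arguments are the same.
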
 

\subsection{Complexity of \bcqij Queries}
\label{subsection:complexity}

We give the data complexity of \bcqij queries using the reduction to \bcq queries from Section~\ref{subsection:reduction}. We next define a new width measure for \bcqij queries called the ij-width using the submodular width of the \bcq queries obtained in the full reduction  (Definition~\ref{definition:subw}).
\begin{definition}[ij-width]
\label{definition:ij-width}
For any hypergraph $\mathcal{H}$, the \textit{ij-width} of $\mathcal{H}$ is defined as follows:
\begin{align*}
\ijw(\mathcal{H}) 
:= 
\max_{\tilde{\mathcal{H}} \in \tau(\mathcal{H})} 
\subw(\tilde{\mathcal{H}})
\end{align*}
\end{definition}

The complexity of a given \bcqij query is that of the most expensive \bcq query constructed by the full reduction. This justifies taking the maximum in the definition of the ij-width. The optimality yardstick for the evaluation of \bcq queries is given by the submodular width~\cite{jacm/Marx13} (see also discussion in Section~\ref{section:introduction}), which justifies the use of this width measure in the definition of the ij-width. 


Let $\calV'\subseteq\calV$ be the set of join interval variables in the query and $k_{[X]}=|\calE_{[X]}|$ be the number of hyperedges containing the interval variable $[X]$.
Our full reduction constructs up to 
$\prod_{ [X]\in\calV'} k_{[X]}$ new relations and up to 
$\prod_{[X]\in\calV'} k_{[X]}!$
\bcq queries. By Lemma~\ref{lemma:db-x-sigma-size}, each new relation has size $O(N \polylog\ N)$, where $N$ is the size of the input relations. The number of constructed \bcq queries only depends on the structure of $Q$.

\begin{theorem}\label{theorem:complexity}
    Given any \bcqij query $Q$ with hypergraph $\calH$ and database $\D$, $Q(\D)$ can be computed in time $O(|\D|^{\ijw(\calH)}\cdot\polylog |\D|)$.
\end{theorem}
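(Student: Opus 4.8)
The plan is to combine the correctness of the forward reduction (Theorem~\ref{theorem:reduction-correctness}) with the size bounds of the constructed database (Lemma~\ref{lemma:db-x-sigma-size}) and the known complexity of evaluating \bcq queries via their submodular width (Definition~\ref{definition:subw}). First I would invoke Theorem~\ref{theorem:reduction-correctness} to obtain the triple $(\tilde{\mathbf{H}},\tilde{\mathbf{Q}},\tilde{\mathbf{D}}) = \textsc{Reduce}(\{\calH\},\{Q\},\D)$ such that $Q(\D)$ is true iff $\bigvee_{\tilde Q\in\tilde{\mathbf{Q}}}\tilde Q(\tilde{\mathbf{D}})$ is true. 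So the evaluation algorithm for $Q$ simply runs the reduction, then evaluates each \bcq query $\tilde Q$ over $\tilde{\mathbf{D}}$, and returns true iff at least one of them is true.

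Next I would bound the cost of this algorithm term by term. The reduction itself (Algorithm~\ref{algorithm:reduction}) performs $|\calV'|$ rounds, and in each round it rewrites hypergraphs, queries, and the database; by Lemma~\ref{lemma:db-x-sigma-size}, each one-step database transformation blows the relation sizes up by at most a $\polylog$ factor (one logarithmic factor per join-interval variable resolved so far) and can be carried out in time proportional to the output size. Iterating over all $|\calV'|$ join-interval variables, the final database $\tilde{\mathbf{D}}$ has relations of size $O(N\cdot\polylog N)$, where $N=|\D|$, since $|\calV'|$ is a constant determined by the (fixed) query. The number of \bcq queries in $\tilde{\mathbf{Q}}$ is $\prod_{[X]\in\calV'}k_{[X]}!$, a constant depending only on the structure of $Q$. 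Hence building $(\tilde{\mathbf{H}},\tilde{\mathbf{Q}},\tilde{\mathbf{D}})$ costs $O(N\cdot\polylog N)$.

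Then I would bound the cost of evaluating each \bcq query $\tilde Q\in\tilde{\mathbf{Q}}$ over $\tilde{\mathbf{D}}$. By the submodular-width algorithm for \bcq queries (the \panda{}-type algorithm referenced earlier, cf.\@ \cite{pods/Khamis0S17}), a Boolean query $\tilde Q$ over a database of size $M$ can be decided in time $O(M^{\subw(\tilde{\calH})}\cdot\polylog M)$, where $\tilde{\calH}$ is the hypergraph of $\tilde Q$. Here $M = |\tilde{\mathbf{D}}| = O(N\cdot\polylog N)$, so this is $O\big((N\polylog N)^{\subw(\tilde{\calH})}\cdot\polylog(N\polylog N)\big) = O(N^{\subw(\tilde{\calH})}\cdot\polylog N)$, using that $\subw(\tilde{\calH})$ is a constant and that $\polylog(N\polylog N)=\polylog N$. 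Since $\tilde{\mathbf{H}}=\tau(\calH)$ and $\ijw(\calH)=\max_{\tilde{\calH}\in\tau(\calH)}\subw(\tilde{\calH})$, each such evaluation costs $O(N^{\ijw(\calH)}\cdot\polylog N)$. Summing over the constantly many \bcq queries preserves this bound, and adding the $O(N\polylog N)$ reduction cost — which is dominated since $\ijw(\calH)\geq 1$ — yields the claimed total $O(|\D|^{\ijw(\calH)}\cdot\polylog|\D|)$.

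The main obstacle is making the size-accounting for $\tilde{\mathbf{D}}$ fully rigorous across all $|\calV'|$ iterations: a naive reading of Lemma~\ref{lemma:db-x-sigma-size} gives, per step, a multiplicative $\polylog$ blow-up with the exponent growing in the number of hyperedges containing the resolved variable, and one must check that composing these steps still yields $O(N\cdot\polylog N)$ rather than something worse, and that the per-step time bound (proportional to output size) composes correctly. Since the number of reduction steps, the arities $k_{[X]}$, and the number of resulting \bcq queries are all constants fixed by the (data-complexity) assumption that $Q$ is fixed, this goes through, but it is the one place where the constants and logarithmic exponents need to be tracked with care.
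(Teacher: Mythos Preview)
Your proposal is correct and follows essentially the same approach as the paper's proof: invoke Theorem~\ref{theorem:reduction-correctness} for correctness, then bound the cost by the most expensive \bcq query in $\tilde{\mathbf{Q}}$ using the submodular-width algorithm, and take the maximum to get $\ijw(\calH)$. In fact you are more careful than the paper about the iterated $\polylog$ blow-up of $\tilde{\mathbf{D}}$ via Lemma~\ref{lemma:db-x-sigma-size}; the paper's proof simply asserts the $\polylog$ factor without tracking it through the rounds.
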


Table~\ref{tab:example-queries} gives the time complexities for $\faqai$~\cite{FAQAI:TODS:2020} and our approach for the cyclic \bcqij queries: triangle, Loomis Whitney with 4 variables, and 4-clique. Appendix~\ref{appendix:examples} details the $\faqai$ evaluation of a reformulation of the triangle \bcqij query using inequality joins.

\begin{table}[t]
    \begin{tabularx}{\columnwidth}{lll}
        \toprule
        \bcqij Query &
        $\faqai$~\cite{FAQAI:TODS:2020} &
        Our approach\\
        \midrule
        Triangle &
        $O(N^2 \log^3 N)$ &
        $O(N^{3/2} \log^3 N$)\\
        Loomis-Whitney-4 &
        $O(N^2 \log^k N)$, for $k \geq 9$ &
        $O(N^{5/3} \log^8 N)$ \\
        4-clique &
        $O(N^3 \log^k N)$, for $k \geq 5$ &
        $O(N^2 \log^8 N)$\\
        \bottomrule
    \end{tabularx}
    \caption{Our approach versus $\faqai$ for three \bcqij queries.}
    \label{tab:example-queries}
\end{table}

\section{From Equalities to Intersections: Reduction optimality}
\label{section:reverse-reduction}

In the previous section, we showed how to reduce the evaluation of an \bcqij query $Q$ over a database $\D$ of intervals to a disjunction $\tilde\Q$ of \bcq queries over a database $\tilde\D$ of numbers, where $|\tilde \D| = O(|\D|\cdot \polylog |\D|)$.
This proves that the runtime on $Q$ is upper bounded by the maximum upper bound over all
queries in $\tilde\Q$ (within a polylog factor).
In this section, we do the opposite. We show that the runtime on $Q$ is also lower bounded
by the maximum lower bound over all queries in $\tilde\Q$.
We start with an $\bcq$ query $\tilde Q$,
whose query structure matches that of one of the queries in $\tilde \Q$, and 
with an {\em arbitrary} database $\tilde\D_2$ over the schema of $\tilde Q$. The values in $\tilde \D_2$ are numbers and can be chosen independently from $\tilde \D$ and $\D$ in the forward reduction.
We show how to reduce $\tilde Q(\tilde\D_2)$ to $Q(\D_2)$, where $Q$ is an $\bcqij$ query, whose hypergraph matches that of the original $Q$, and $\D_2$ is some database with intervals and $O(|\tilde \D_2|)$  size. See Figure~\ref{fig:reductions}.

\begin{example}
\em
Consider the \bcqij query $Q_\triangle$ from Example~\ref{sec:example}.
WLOG let's take the \bcq query $\tilde{Q}_3$ that results from the reduction:
\[
	\tilde{Q}_3 = R_{2;1}(A_1,A_2,B_1) \wedge S_{2;2}(B_1,B_2,C_1,C_2) \wedge T_{1;1}(A_1,C_1)
\]
Consider an {\em arbitrary} database 
$\tilde\D = \{\tilde\D_{R_{2;1}}, \tilde\D_{S_{2;2}}, \tilde\D_{T_{1;1}}\}$
over the schema
$\{R_{2;1}, S_{2;2}, T_{1;1}\}$.
We can reduce solving $\tilde Q_3(\tilde\D)$ to solving query $Q_\triangle$ over another database 
$\D=\{\D_{R}, \D_S, \D_T\}$ (whose values are intervals) constructed as follows.
Let $F$ be a function that maps binary strings $\{0, 1\}^*$ into intervals $[x, y)$
for $0 \leq x \leq y \leq 1$, that is defined recursively:
$F(\varepsilon) = [0, 1), F(\text{``0''}) = [0, 1/2), F(\text{``1''}) = [1/2, 1), F(\text{``00''}) = [0, 1/4)$ and so on.
Namely for any given binary string $b$, $F(b\circ \text{``0''})$ and $F(b \circ \text{``1''})$ correspond
to the first and second half of $F(b)$ respectively.
WLOG we can assume that the domain of $\tilde \D$ is $\{0, 1\}^d$, i.e., the set of binary strings
of length $d$ for some fixed constant $d$. Construct $\D_R, \D_S,$ and $\D_T$ as follows:
\begin{eqnarray*}
    \D_R &:=& \{(F(a_1 \circ a_2), F(b_1)) \suchthat (a_1, a_2, b_1) \in \tilde \D_{R_{2;1}}\},\\
    \D_S &:=& \{(F(b_1 \circ b_2), F(c_1 \circ c_2)) \suchthat (b_1, b_2, c_1, c_2) \in \tilde \D_{S_{2;2}}\},\\
    \D_T &:=& \{(F(a_1), F(c_1)) \suchthat (a_1, c_1) \in \tilde \D_{T_{1;1}}\}.
\end{eqnarray*}
We can show that $Q_\triangle(\D)$ holds if and only if $\tilde Q_3(\tilde\D)$.

Moreover $|\tilde \D| = |\D|$.
This basically proves that solving $Q_\triangle$ is {\em at least} as hard as solving $\tilde Q_3$.
The same holds for all queries $\{\tilde Q_1, \ldots, \tilde Q_8\}$.
In contrast, the forward reduction shows that $Q_\triangle$ is {\em at most} as hard as solving the
hardest query among $\{\tilde Q_1, \ldots, \tilde Q_8\}$. Together, this implies that $Q_\triangle$ is
{\em exactly} as hard as the hardest query among $\{\tilde Q_1, \ldots, \tilde Q_8\}$, meaning that our
forward reduction is actually tight.
See Appendix~\ref{appendix:reverse:reduction} for more details.
\punto
\end{example}

\begin{theorem}
    \label{theorem:lower-bound}
	Let $Q$ be any self-join-free $\bcqij$ query with hypergraph $\calH$. Let $\tilde{Q}$ be any $\bcq$ query whose hypergraph is in $\tau(\calH)$. For any database $\tilde\D$, let $\Omega(T(|\tilde\D|))$ be a lower bound on the time complexity for computing $\tilde{Q}$, where $T$ is a function of the size of the database $\tilde\D$.
	There cannot be an algorithm $\mathcal{A}_Q$ that computes $Q(\D)$ in time $o(T(|\D|))$ (i.e., asymptotically strictly smaller), for any database $\D$.
\end{theorem}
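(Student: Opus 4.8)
The plan is to give a many-one reduction from $\tilde{Q}$ on an arbitrary number database $\tilde\D$ to $Q$ on an interval database $\D_2$ of size $O(|\tilde\D|)$, computable in linear time, so that $\tilde Q(\tilde\D)$ holds iff $Q(\D_2)$ holds; the theorem then follows by the standard argument: if $\mathcal{A}_Q$ computed $Q$ in time $o(T(|\D_2|)) = o(T(O(|\tilde\D|))) = o(T(|\tilde\D|))$, then composing it with the reduction would compute $\tilde Q$ in time $o(T(|\tilde\D|))$, contradicting the lower bound. The self-join-freeness of $Q$ is what makes the reduction well-defined: each hyperedge of $\calH$ corresponds to a distinct relation symbol, so when the forward reduction splits hyperedges containing an interval variable $[X]$ according to a permutation $\sigma$, every resulting relation $\tilde R$ in $\tilde Q$ can be traced back to a unique relation $R$ of $Q$ and a unique choice of which "prefix piece" of the bitstring for $[X]$ it carries.

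First I would set up the encoding map. As in the example with $\tilde Q_3$, fix the function $F$ sending a binary string $b \in \{0,1\}^*$ to the dyadic subinterval $F(b) \subseteq [0,1)$ obtained by successively halving, so that $F(b)$ is an ancestor segment of $F(b')$ in the obvious segment tree iff $b$ is a prefix of $b'$, and $F(b) \cap F(b') \neq \emptyset$ iff one of $b, b'$ is a prefix of the other. WLOG all strings in $\tilde\D$ have a common length $d$ (pad them; $d$ is a constant depending only on $Q$, since the arity of the new relations is bounded by the structure of $Q$). Now, for each interval variable $[X]$ of $Q$ with $k_{[X]} = |\calE_{[X]}|$, the forward reduction replaced $[X]$ by point variables $X_1, \ldots, X_{k_{[X]}}$ and, for the hyperedge that is the $i$-th in the chosen permutation, that relation stores $X_1, \ldots, X_i$. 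To invert this: given a tuple of $\tilde Q$, for each interval variable $[X]$ and each relation $R$ containing $[X]$, concatenate the stored pieces $\tilde t(X_1) \circ \cdots \circ \tilde t(X_i)$ into one bitstring, apply $F$, and use the resulting interval as the value of $[X]$ in the reconstructed tuple of $R$; values of ordinary point variables of $Q$ are copied unchanged. This defines $\D_2$ from $\tilde\D$ relation by relation, in time linear in $|\tilde\D|$, and $|\D_2| = O(|\tilde\D|)$.

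Next I would prove the equivalence $\tilde Q(\tilde\D) \Leftrightarrow Q(\D_2)$ by iterating over the interval variables, reusing the one-step machinery. The cleanest route is to show that the backward reduction is a right inverse of one forward-reduction step restricted to the specific permutation $\sigma$ that produced $\tilde Q$'s hypergraph: if $\tilde t$ is a satisfying assignment for $\tilde Q$ over $\tilde\D$, then the concatenated bitstrings it assigns to $X_1, \ldots, X_{k_{[X]}}$ across the relations form a chain of prefixes (because $\tilde Q$'s equality joins force $X_1, \ldots, X_j$ to agree across all relations where they co-occur, matching the $u_j \in \mathrm{anc}(u_{j+1})$ condition of Lemma~\ref{lemma:intersection-predicate-ordered}), hence under $F$ the corresponding intervals pairwise intersect, so the intersection predicate for $[X]$ holds in $\D_2$; conversely, if $Q(\D_2)$ holds then the intervals witnessing each intersection join are nested dyadic intervals, whose defining bitstrings form a prefix chain, and reading off the appropriate prefixes recovers a satisfying assignment for $\tilde Q$ for some permutation — but since $Q$ is self-join-free and the relation schemas of $\D_2$ were built to match exactly the permutation $\sigma$ of $\tilde Q$, the pieces line up with $\tilde Q$'s schema specifically. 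Applying this equivalence once per interval variable (the variables are independent, so the order does not matter) yields the full statement. Finally the contradiction argument above closes the proof.

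The main obstacle I anticipate is the converse direction of the equivalence, specifically pinning down \emph{which} permutation the reconstructed assignment corresponds to and arguing it is forced to be the one matching $\tilde Q$. In $\D_2$ a satisfying assignment only guarantees that the chosen intervals for a join variable $[X]$ pairwise intersect; a priori their dyadic-interval containment order could be any linear order on the $k_{[X]}$ relations, which would correspond to a \emph{different} query in $\tilde\Q$ than $\tilde Q$. The resolution has to come from how $\D_2$ was constructed: because the piece-sizes (how many of the $X_1, \ldots, X_{k_{[X]}}$ each relation stores) are fixed by $\sigma$ and these are reflected as distinct relation arities in $\D_2$, and because self-join-freeness prevents two relations from being confused, the nesting order is actually constrained — or, more carefully, one shows that whenever $Q(\D_2)$ holds there is \emph{a fortiori} a satisfying assignment realizing the specific nesting order of $\sigma$, using that deeper dyadic intervals are substrings-extensions and the stored-prefix lengths force the assignment. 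Making this last point rigorous (rather than hand-waving "the schemas match") is where the real work lies; I would likely handle it by defining $\D_2$'s tuples to carry along the original full bitstrings as auxiliary coordinates so the inversion is literally a projection, then arguing the projection is a bijection onto the relevant witnesses.
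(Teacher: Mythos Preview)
Your approach matches the paper's: encode values as fixed-length binary strings, map concatenated prefixes to dyadic intervals via $F$, and verify a one-step equivalence iterated over the interval variables. The paper's proof does exactly this (with $F$ replaced by an explicit $\text{seg}$ function on a conceptual segment tree) and derives the contradiction the same way.

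Your anticipated obstacle in the converse direction is not a real obstacle, and you already have the tool to dissolve it. Once you fix the common string length $d$, the interval that relation $\sigma_i$ produces for $[X]$ is $F$ of a string of length exactly $i\cdot d$; hence the depths of the $k_{[X]}$ dyadic intervals are $d,2d,\ldots,k_{[X]}d$, so the only possible containment order among them is $F(\cdots)_{\sigma_1}\supseteq F(\cdots)_{\sigma_2}\supseteq\cdots$, i.e.\ precisely the permutation $\sigma$ that defines $\tilde Q$. From this, equality of the $X_j$-pieces across all relations that carry them follows immediately (the length-$jd$ prefix of the length-$id$ string, for every $i\ge j$, is the same). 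This is exactly the paper's argument in the $\Rightarrow$ direction. Your fallback idea of attaching ``auxiliary coordinates'' to $\D_2$'s tuples is both unnecessary and problematic: you cannot change the schema of $\D_2$ without changing $Q$, so stick with the length argument. One small inaccuracy: $d$ is not ``a constant depending only on $Q$''; it depends on the domain of $\tilde\D$ (it is $O(\log|\tilde\D|)$), but this is harmless since the reduction is still linear time.
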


\section{Iota-Acyclicity}
\label{section:iota-acyclicity}

In this section, we answer the following question: Which \bcqij queries can be computed in linear time (modulo a polylog factor)? To answer this question, we introduce a new notion of acyclicity, called $\iota$-acylicity, which captures precisely the linear-time computable \bcqij queries. In other words, $\iota$-acyclicity is for \bcqij queries what $\alpha$-acyclicity is for \bcq queries.

\begin{definition}[Iota Acyclic Hypergraph] 
\label{definition:iota-acyclicity}
A hypergraph $\mathcal{H}$ is $\iota$-acyclic if and only if each hypergraph in 
$\tau(\mathcal{H})$ is $\alpha$-acyclic.
\end{definition}

It is immediate to see why Definition~\ref{definition:iota-acyclicity} defines the hypergraphs of {\em some} linear-time computable \bcqij queries, namely those computed via our reduction. Theorem~\ref{theorem:hardness-of-non-iota-acyclic} later shows that Definition~\ref{definition:iota-acyclicity} defines in fact {\em all} linear-time computable \bcqij queries.

Since all hypergraphs in $\tau(\mathcal{H})$ are $\alpha$-acyclic, they correspond to \bcq queries that can be computed in linear time~\cite{vldb/Yannakakis81}. Furthermore, the size of $\tau(\mathcal{H})$ is independent of the input database and only depends on $\mathcal{H}$. Definition~\ref{definition:iota-acyclicity} defines $\iota$-acyclicity indirectly using our reduction. We next show that this is equivalent to a simple syntactic characterisation of the hypergraph of the given \bcqij query.

\begin{definition}[Berge Cycle~\cite{jacm/Fagin83}]
    \label{definition:berge-cycle-main}
    A Berge cycle in $\mathcal{H}$ is a sequence $(e^1, v^{1}$, $e^2, v^{2}, \dots$, $e^n, v^{n}, e^{n+1})$ such that:
    $v^{1}, \dots, v^{n}$ are distinct vertices in $\mathcal{V}$;
    $e^1, \dots, e^n$ are distinct hyperedges in $\mathcal{E}$ and $e^{n+1} = e^1$;
    $n \geq 2$; and $v^{i}$ is in $e^i$ and $e^{i+1}$ for each $1 \leq i \leq n$.
\end{definition}

\begin{theorem}[Iota Acyclicity Characterisation]
    \label{theorem:iota-acyclicity}
    A hypergraph is $\iota$-acyclic if and only if it has no Berge cycle of length strictly greater than two.
\end{theorem}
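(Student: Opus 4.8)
The plan is to prove both directions of the equivalence. The easier direction is that a Berge cycle of length $>2$ in $\mathcal{H}$ forces some hypergraph in $\tau(\mathcal{H})$ to be non-$\alpha$-acyclic. For the converse, I would show that if $\mathcal{H}$ has no long Berge cycle, then \emph{every} hypergraph produced by the full reduction is $\alpha$-acyclic. Throughout I would use the standard facts that $\alpha$-acyclicity is equivalent to the existence of a join tree / GYO-reducibility, and that $\alpha$-acyclicity of a hypergraph is preserved and reflected under certain local operations on hyperedges.

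First, the forward direction ($\iota$-acyclic $\Rightarrow$ no long Berge cycle): I would prove the contrapositive. Suppose $\mathcal{H}$ contains a Berge cycle $(e^1, v^1, e^2, v^2, \dots, e^n, v^n, e^1)$ with $n \geq 3$. The key observation is that the one-step transformation of Definition~\ref{definition:hypergraph-onestepreduction}, when resolving an interval variable $[X]$, replaces each hyperedge $\sigma_i \ni [X]$ by $\tilde\sigma_i$ which \emph{still contains} the new point variable $X_1$ (shared by all of them), and otherwise keeps $\sigma_i \setminus \{[X]\}$ intact; hyperedges not containing $[X]$ are untouched. Hence a Berge cycle whose ``connector'' vertices $v^i$ are all distinct from $[X]$ survives verbatim into every $\tilde{\calH}_{([X],\sigma)}$, and if some connector $v^i = [X]$ then $v^i$ is replaced by $X_1$, which still lies in the images of $e^i$ and $e^{i+1}$ — so a Berge cycle of the same length $n$ persists. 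Iterating over all interval variables, \emph{some} hypergraph $\tilde{\mathcal{H}} \in \tau(\mathcal{H})$ (in fact all of them, if we are careful about which permutations we track) contains a Berge cycle of length $n \geq 3$ among \emph{point} variables. But a $\bcq$ hypergraph with a Berge cycle of length $\geq 3$ is not Berge-acyclic; more to the point, I need it to be not $\alpha$-acyclic. This is not automatic since $\alpha$-acyclicity is weaker than Berge-acyclicity. The resolution is that the reduced hyperedges have a very constrained shape — the intersection pattern along the original cycle is inherited — and I would argue directly that no join tree can exist, e.g.\ by exhibiting a GYO obstruction, or by appealing to the known fact that a minimal such cycle yields a $\beta$-cycle and hence an $\alpha$-cycle in the reduced hypergraph. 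This case analysis of which vertices survive is the main obstacle.

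Second, the converse direction (no long Berge cycle $\Rightarrow$ $\iota$-acyclic): here the hypothesis is strong — a hypergraph with no Berge cycle of length $>2$ is $\gamma$-acyclic and in fact has a rather restricted structure (any two hyperedges share at most one vertex, up to the handling of multi-edges, and the ``intersection graph'' is a forest-like structure). I would show the one-step transformation preserves the ``no long Berge cycle'' property: resolving $[X]$ cannot create a new Berge cycle because (i) the new variables $X_2, \dots, X_k$ each appear in a single hyperedge (so they cannot serve as connectors) and (ii) $X_1$ plays exactly the structural role that $[X]$ played, so any Berge cycle through $X_1$ in $\tilde{\calH}_{([X],\sigma)}$ pulls back to one of the same length through $[X]$ in $\mathcal{H}$. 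By induction on the number of interval variables, every hypergraph in $\tau(\mathcal{H})$ has no Berge cycle of length $>2$, hence is Berge-acyclic, hence $\alpha$-acyclic — which is exactly the definition of $\iota$-acyclicity. The delicate point is the bookkeeping for multi-hyperedges (the hypergraph is a \emph{multi}-hypergraph, and two copies of the same edge already form a ``cycle'' of length two, which the definition explicitly permits), so I would be careful to state the invariant in terms of length strictly greater than two throughout, and to check that the transformation never collapses two distinct reduced edges into a configuration that shortens or lengthens a cycle unexpectedly.

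Summarizing the order of steps: (1) record the structural effect of one reduction step on hyperedges and on Berge cycles (a short lemma); (2) forward direction via contrapositive — lift a long Berge cycle through all reduction steps and argue the resulting $\bcq$ hypergraph is not $\alpha$-acyclic; (3) converse direction — prove the ``no long Berge cycle'' invariant is preserved by each step, induct, and conclude Berge-acyclicity hence $\alpha$-acyclicity of every hypergraph in $\tau(\mathcal{H})$. I expect step (2)'s passage from ``has a long Berge cycle'' to ``not $\alpha$-acyclic'' to be the hardest, since it requires understanding precisely how much of the cyclic structure the reduced hyperedges retain, rather than invoking a black-box equivalence.
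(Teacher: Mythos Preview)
Your proposal has genuine gaps in \emph{both} directions, and in the converse direction the approach is actually wrong rather than just incomplete.

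\textbf{Converse direction.} Your claim (i) that ``the new variables $X_2, \dots, X_k$ each appear in a single hyperedge'' is false. By Definition~\ref{definition:hypergraph-onestepreduction}, $\tilde\sigma_i = \sigma_i \setminus \{[X]\} \cup \{X_1,\dots,X_i\}$, so $X_j$ lies in all of $\tilde\sigma_j,\dots,\tilde\sigma_k$, i.e.\ in $k-j+1$ hyperedges. More damagingly, the invariant ``no Berge cycle of length $>2$'' is \emph{not} preserved by the one-step reduction. Take the paper's own $\iota$-acyclic example with $e_1=e_2=\{[A],[B],[C]\}$, $e_3=\{[A]\}$ (Example~\ref{ex:onestep}); it has no Berge cycle of length $\geq 3$. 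After resolving $[A]$ with $\sigma=(e_1,e_2,e_3)$ you get $\tilde e_1=\{A_1,[B],[C]\}$, $\tilde e_2=\{A_1,A_2,[B],[C]\}$, $\tilde e_3=\{A_1,A_2,A_3\}$, and now $(\tilde e_1,[B],\tilde e_2,A_2,\tilde e_3,A_1,\tilde e_1)$ is a Berge cycle of length $3$. (The hypergraph is still $\alpha$-acyclic --- GYO peels $A_3$, then $\tilde e_3\subseteq\tilde e_2$, then $\tilde e_1\subseteq\tilde e_2$ --- but it is not Berge-acyclic.) So your induction does not go through, and you cannot conclude Berge-acyclicity of every member of $\tau(\mathcal H)$. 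The paper's route is entirely different: it assumes some $\tilde{\mathcal H}$ is not $\alpha$-acyclic, invokes the ``conformal $+$ cycle-free'' characterisation, and in each case extracts a Berge cycle in $\tilde{\mathcal H}$ whose vertices have \emph{pairwise distinct preimages} under $\nu$. The key tool is precisely the nesting property you got backwards (Property~\ref{property:ij-hypergraph-properties}\eqref{property:uj-in-e-implies-ui-in-e}): if two cycle vertices had the same preimage, every edge containing one would contain the other, contradicting the conformality/cycle obstruction. That cycle then pushes forward to $\mathcal H$ via Lemma~\ref{lemma:special-Berge cycle-in-tilde-H-implies-Berge cycle-in-H}.

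\textbf{Forward direction.} You correctly flag that lifting a Berge cycle through the $X_1$ variables only gives a Berge cycle in $\tilde{\mathcal H}$, not non-$\alpha$-acyclicity, and you leave this unresolved. The paper does not use the $X_1$ variables for this. It \emph{chooses} permutations so that for each cycle vertex $v^i$, the edge $\epsilon^{-1}(e^i)$ receives $\tilde v^i_1,\dots,\tilde v^i_{n_{v^i}-1}$ but \emph{not} $\tilde v^i_{n_{v^i}}$, while $\epsilon^{-1}(e^{i+1})$ receives all of $\tilde v^i_1,\dots,\tilde v^i_{n_{v^i}}$. The penultimate variable $\tilde v^i_{n_{v^i}-1}$ then lies in exactly the two edges $\epsilon^{-1}(e^i)$ and $\epsilon^{-1}(e^{i+1})$, and no edge along the cycle can be contained in an adjacent one. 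This is a direct GYO obstruction showing that this particular $\tilde{\mathcal H}\in\tau(\mathcal H)$ is not $\alpha$-acyclic. Without this permutation choice, the argument does not close.
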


The smallest Berge cycle that makes a hypergraph ${\cal H}$ not $\iota$-acyclic has length three. This is a sequence $(e^1,v^1,e^2,v^2,e^3,v^3,e^4)$, where $v^1,v^2,v^3$ are distinct vertices in the hypergraph, $e^1,e^2,e^3$ are distinct hyperedges in the hypergraph, $e^4=e^1$, and $v^i\in e^i\cap e^{i+1}$ for $1\leq i\leq 3$ (Definition~\ref{definition:berge-cycle-main}).

As a corollary of Theorem~\ref{theorem:iota-acyclicity}, $\iota$-acyclicity strictly sits between Berge-acyclicity and $\gamma$-acyclicity~\cite{jacm/Fagin83, csur/Brault-Baron16}. A further corollary is that each $\iota$-acyclic hypergraph is also $\alpha$-acyclic.

\begin{corollary}
    \label{corollary:iota-between-berge-and-gamma}
    The class of $\iota$-acyclic hypergraphs is a strict superset of the class of 
    Berge-acyclic hypergraphs and it is a strict subset of the class of $\gamma$-acyclic hypergraphs.
\end{corollary}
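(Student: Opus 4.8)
The plan is to derive both containments purely combinatorially from the syntactic characterisation in Theorem~\ref{theorem:iota-acyclicity}, i.e., from the property ``no Berge cycle of length $> 2$.'' For the first containment (strict superset of Berge-acyclic), recall that a hypergraph is Berge-acyclic iff it has \emph{no} Berge cycle at all (of any length $\geq 2$), while $\iota$-acyclicity only forbids those of length $\geq 3$. Hence every Berge-acyclic hypergraph is trivially $\iota$-acyclic. For strictness, I would exhibit a small hypergraph with a Berge cycle of length exactly two but none of length $\geq 3$: e.g.\ $\calV = \{X, Y\}$ with two hyperedges $e^1 = e^2 = \{X, Y\}$ (a multi-hypergraph, which the paper explicitly allows), or a single hyperedge $\{X,Y\}$ together with $\{X\},\{Y\}$ arranged to realise the length-two cycle $(e^1, X, e^2, Y, e^1)$ while admitting no longer Berge cycle. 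This hypergraph is $\iota$-acyclic but not Berge-acyclic.

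For the second containment (strict subset of $\gamma$-acyclic), the cleanest route is to invoke the known characterisation of $\gamma$-acyclicity in terms of forbidden patterns. A hypergraph fails $\gamma$-acyclicity iff it contains a $\gamma$-cycle, and one standard formulation (Fagin~\cite{jacm/Fagin83}, see also \cite{csur/Brault-Baron16}) describes a $\gamma$-cycle as a sequence $(e^1, v^1, e^2, v^2, \ldots, e^n, v^n, e^1)$ with $n \geq 3$, the $v^i$ distinct, each $v^i \in e^i \cap e^{i+1}$, and each $v^i \notin e^j$ for $j \neq i, i+1$. The key observation is that any such $\gamma$-cycle \emph{is in particular} a Berge cycle of length $n \geq 3$ (the $\gamma$-condition only adds extra ``non-membership'' requirements on top of the Berge conditions). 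Therefore a hypergraph with no Berge cycle of length $> 2$ has no $\gamma$-cycle, so it is $\gamma$-acyclic. Strictness follows by giving a $\gamma$-acyclic hypergraph that still contains a long Berge cycle; the triangle hypergraph with edges $\{X,Y,Z\}$-style overlaps, or more simply three hyperedges $e^1 = \{X,Y,W\}$, $e^2 = \{Y,Z,W\}$, $e^3 = \{X,Z,W\}$ sharing a common vertex $W$, has a Berge cycle $(e^1, X, e^3, Z, e^2, Y, e^1)$ of length $3$ but is $\gamma$-acyclic (indeed $\alpha$-acyclic with the join tree hung off the shared $W$) — I would double-check the $\gamma$-acyclicity of whichever witness I pick by verifying no $\gamma$-cycle exists, which for such tiny hypergraphs is a finite check.

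Finally, the ``further corollary'' that every $\iota$-acyclic hypergraph is $\alpha$-acyclic follows either from the $\gamma \subseteq \alpha$ containment in the standard acyclicity hierarchy composed with the second part above, or more directly from Definition~\ref{definition:iota-acyclicity}: if $\calH$ is $\iota$-acyclic then every hypergraph in $\tau(\calH)$ is $\alpha$-acyclic, and since $\calH$ itself arises as (or maps to) such a hypergraph when it contains no interval variables — more carefully, one argues the forward reduction on an $\iota$-acyclic $\calH$ cannot turn an $\alpha$-cyclic input into $\alpha$-acyclic outputs, so $\alpha$-acyclicity of $\calH$ is forced. I expect the main obstacle to be pinning down the \emph{exact} definitions of $\gamma$-cycle and $\gamma$-acyclicity used in \cite{jacm/Fagin83, csur/Brault-Baron16} so that the ``$\gamma$-cycle $\Rightarrow$ long Berge cycle'' implication is stated against the right formalisation, and then choosing strictness witnesses that are simultaneously easy to verify on both the Berge side and the $\gamma$/$\alpha$ side; the containments themselves are essentially immediate once the definitions are aligned.
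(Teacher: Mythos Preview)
Your containment arguments are fine in spirit and close to the paper's: both derive everything from Theorem~\ref{theorem:iota-acyclicity}. The paper proves $\iota \Rightarrow \gamma$ by contraposition using the Brault--Baron characterisation (Definition~\ref{definition:gamma-acyclicity}): if $\calH$ is not $\gamma$-acyclic then either it is not cycle-free or it has the pattern $\{\{x,y\},\{x,z\},\{x,y,z\}\}\subseteq\calE[\{x,y,z\}]$, and in either case one extracts a Berge cycle of length $\geq 3$. Your route via Fagin's $\gamma$-cycle is equivalent once the definitions are aligned, though be careful: Fagin's $\gamma$-cycle allows length $m \geq 2$, not $m \geq 3$, and the non-membership condition applies only to $x_1,\ldots,x_{m-1}$, so you would still need a short argument ruling out length-$2$ $\gamma$-cycles as obstructions.

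The real gap is your strictness witness for $\iota \subsetneq \gamma$. The hypergraph with edges $\{X,Y,W\}$, $\{Y,Z,W\}$, $\{X,Z,W\}$ is \emph{not} $\gamma$-acyclic, and in fact not even $\alpha$-acyclic: the induced set on $\{X,Y,Z\}$ is $\{\{X,Y\},\{Y,Z\},\{X,Z\}\}$, which is exactly the forbidden cycle pattern in the definition of cycle-free, and GYO reduction stalls immediately since every vertex lies in at least two edges and no edge is contained in another. So your claimed join tree ``hung off the shared $W$'' does not exist. The paper's witness is the multi-hypergraph with three copies of the same hyperedge $\{x,y,z\}$: it has the Berge cycle $e^1$--$x$--$e^2$--$y$--$e^3$--$z$--$e^1$ of length $3$ (hence not $\iota$-acyclic), yet $\calE[\{x,y,z\}]=\{\{x,y,z\}\}$ so it is trivially cycle-free and avoids the forbidden $\gamma$-pattern, hence $\gamma$-acyclic. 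Your Berge-side witness (two copies of $\{X,Y\}$) is correct.
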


\begin{figure}
    \centering 
    \begin{subfigure}{0.2\textwidth}
      \includegraphics[width=\linewidth]{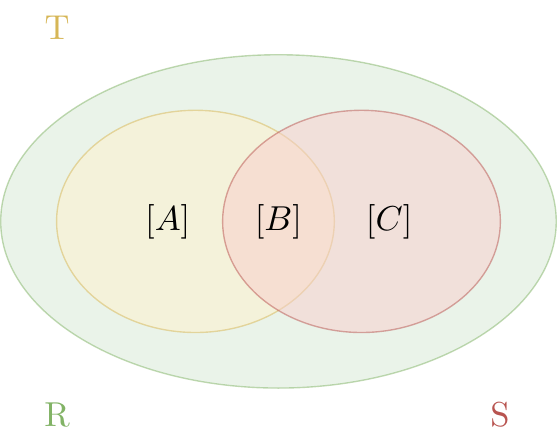}
      \caption{}
      \label{fig:non-iota-hyp-3-main}
    \end{subfigure}
    \begin{subfigure}{0.25\textwidth}
      \includegraphics[width=\linewidth]{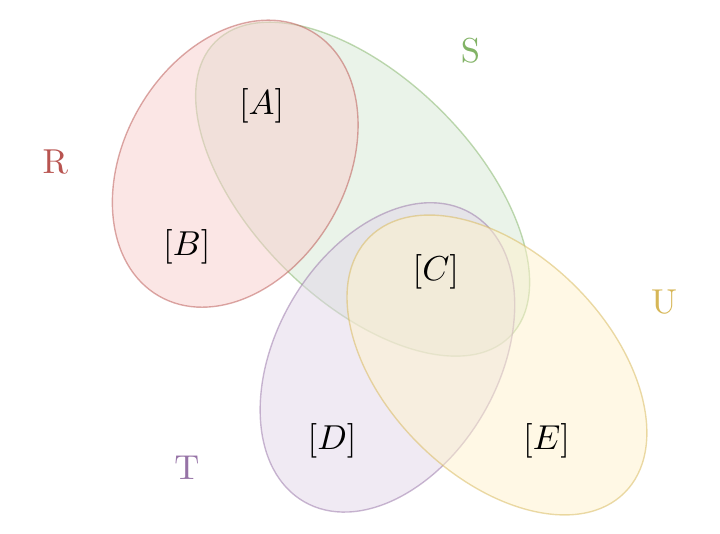}
      \caption{}
      \label{fig:iota-hyp-2-main}
    \end{subfigure}\hfil
    \caption{(a) Hypergraph with Berge cycle of length three and thus not $\iota$-acyclic. (b) Hypergraph without a Berge cycle and thus $\iota$-acyclic. } 
    \Description[Example hypergraphs.]{Example hypergraphs.}
    \label{fig:iota-example-hypergraphs-main}
    \end{figure}

\begin{example}\em
	The hypergraph of the query
    	$Q = R([A], [B], [C])$ $\wedge S([A], [B], [C]) \wedge T([A], [B])$  is not $\iota$-acyclic since it has the following Berge cycle of length three: $R-[C]-S-[B]-T-[A]-R$. 
    It becomes $\iota$-acyclic by removing any of its vertices or hyperedges.
	The hypergraph of the query  $Q = R([A], [B], [C]) \wedge S([A], [B], [C]) \wedge T([A])$ is $\iota$-acyclic since it has no Berge cycle of length strictly greater than two. 
    It only has three Berge cycles of length two. Those are the cycles $R - [A] - S - [B] - R$, $R - [B] - S - [C] - R$, $R - [A] - S - [C] - R$.

        We now turn to the two hypergraphs in Figure~\ref{fig:iota-example-hypergraphs-main}. The hypergraph in Figure~\ref{fig:non-iota-hyp-3-main} has a Berge cycle of length $3$: $R-[A]-T-[B]-S-[C]-R$.
Applying the translation from intersection joins to equality joins produces $2! \cdot 3! \cdot 2! = 24$ hypergraphs. We next analyse the reduced versions of the hypergraphs, where we drop the vertices that appear in one hyperedge only. This reduction is justified in our analysis of the $\iota$-acyclicity, since such vertices cannot contribute to a Berge cycle. There are only three distinct reduced hypergraphs $\calH_1$, $\calH_2$ and $\calH_3$. We next analyse their widths.

The hypergraph $\calH_1$ has hyperedges
$\tilde R(A_1, B_1, C_1)$, 
$\tilde S(B_1, C_1, B_2)$ and
$\tilde T(A_1, B_1, B_2)$.
Its fractional hypertree width is $1.5$. This is obtained using a hypertree decomposition consisting of the bag 
$\{A_1, B_1,$ $C_1, B_2\}$. This bag covers all hyperedges 
and has a fractional edge cover number of $1.5$. This is obtained by assigning the weights $[0.5, 0.5, 0.5]$  to the hyperedges. 
	
$\calH_2$ has hyperedges
$\tilde R(A_1, B_1, C_1, B_2)$, 
$\tilde S(B_1, C_1, B_2)$ and 
$\tilde T(A_1, B_1)$.
Its fractional hypertree width is $1.0$. This is obtained using a hypertree tree decomposition consisting of the bag $\{A_1, B_1, C_1, B_2\}$. This bag covers all hyperedges 
and has a fractional edge cover number of $1.0$. This is obtained by assigning the weights $[1.0, 0.0, 0.0]$ to the hyperedges. 

$\calH_3$ has hyperedges 
$\tilde R(A_1, B_1, C_1, B_2)$,
$\tilde S(B_1, C_1)$ and 
$\tilde T(A_1, B_1, B_2)$.
Its fractional hypertree width is $1.0$, witnessed by the same hypertree decomposition as for $\calH_2$. 

The submodular width is the same as the fractional hypertree width for all three hypergraphs. We conclude that the $\ijw$ is $3/2$, wich is the maximum of the above three fractional hypertree widths.
Therefore, our approach takes time $O(N^{3/2} \cdot \polylog N)$ for the $\bcqij$ query with the hypergraph in Figure~\ref{fig:non-iota-hyp-3}.

The hypergraph in Figure~\ref{fig:iota-hyp-2-main} has no Berge cycle. Applying the reduction produces 
	$2! \cdot 1! \cdot 3! \cdot 1! \cdot 1! = 12$ hypergraphs that are all $\alpha$-acyclic. Our approach thus take time $O(N \cdot \polylog N)$.
    \punto
\end{example}

\bcqij~queries whose hypergraphs are not $\iota$-acyclic cannot be computed in linear time (unless the 3SUM conjecture fails). This is shown by a reduction from the problem of computing the triangle \bcq query, which takes super-linear time~\cite{tods/KhamisNRR16} unless the \textsc{3SUM} conjecture fails~\cite{DBLP:conf/stoc/Patrascu10}. The \textsc{3SUM} problem asks, given a set $S$ of $n$ numbers, to find distinct $x,y,z \in S$ such that $x + y = z$. The problem can be solved in $O(n^2)$ time, and it is a long-standing conjecture that this quadratic complexity is essentially the best possible. 

\begin{theorem}[Iota Acyclicity Dichotomy]~\label{theorem:hardness-of-non-iota-acyclic}
    Let $Q$ be any \bcqij~query with hypergraph $\mathcal{H}$ and let $\D$ be any database.

    If $\mathcal{H}$ is $\iota$-acyclic, then $Q$ can be computed in time $O(\lvert \D \rvert\cdot \polylog \lvert \D \rvert)$.
    
    If $\mathcal{H}$ is not $\iota$-acyclic, then there is no algorithm 
    that can compute $Q$ in time $O(\lvert \D \rvert^{4/3-\epsilon})$ for $\epsilon>0$, unless the \textsc{3SUM} conjecture fails.     
\end{theorem}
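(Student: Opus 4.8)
The plan is to prove the two directions separately; the positive direction will be essentially immediate, and the lower bound will require a gadget reduction from the triangle query.

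For the positive direction, I would observe that $\iota$-acyclicity of $\mathcal{H}$ means, by Definition~\ref{definition:iota-acyclicity}, that every hypergraph in $\tau(\mathcal{H})$ is $\alpha$-acyclic and hence has submodular width $1$; therefore $\ijw(\mathcal{H})=1$ and the bound follows directly from Theorem~\ref{theorem:complexity}. Unfolding this, by Theorem~\ref{theorem:reduction-correctness} $Q(\D)$ is a disjunction of the $\alpha$-acyclic \bcq queries in $\tilde{\mathbf{Q}}$ evaluated over $\tilde{\mathbf{D}}$; each is solvable in time linear in $|\tilde{\mathbf{D}}|$ by Yannakakis' algorithm~\cite{vldb/Yannakakis81}, $|\tilde{\mathbf{D}}|=O(|\D|\cdot\polylog|\D|)$ by iterating Lemma~\ref{lemma:db-x-sigma-size}, and $|\tilde{\mathbf{Q}}|$ depends only on $\mathcal{H}$.

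For the lower bound I would build a linear-size reduction from the Boolean triangle \bcq query $R_A(P,Y)\wedge R_B(Y,Z)\wedge R_C(Z,P)$ to the evaluation of $Q$; this suffices since the triangle query admits no $O(N^{4/3-\epsilon})$ algorithm unless the 3SUM conjecture fails~\cite{tods/KhamisNRR16,DBLP:conf/stoc/Patrascu10}. Since $\mathcal{H}$ is not $\iota$-acyclic, Theorem~\ref{theorem:iota-acyclicity} gives a Berge cycle $(e^1,v^1,e^2,\dots,e^n,v^n,e^1)$ of some length $n\ge 3$ with distinct vertices $v^1,\dots,v^n$, distinct hyperedges $e^1,\dots,e^n$, and $v^i\in e^i\cap e^{i+1}$. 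The key observation is that a point interval $[p,p]$ intersects $[p',p']$ iff $p=p'$ (so intersection joins on point intervals simulate equality joins), whereas a ``wildcard'' interval $[1,N]$ intersects every $[p,p]$ with $p\in\{1,\dots,N\}$ (so it imposes no constraint). Assuming the triangle instance has values in $\{1,\dots,N\}$, I would pick three of the cycle hyperedges to carry $R_A,R_B,R_C$; the three arcs of the Berge cycle between these hyperedges then serve as propagation paths for $P$, $Y$, $Z$. For each chosen hyperedge I put one tuple per tuple of its triangle relation, placing the two relevant values as point intervals on the two incident cycle vertices and the wildcard on the remaining cycle vertices it contains; for each propagator hyperedge $e^i$ on an arc I put one tuple per value $p$, with $[p,p]$ on $v^{i-1}$ and $v^i$ and the wildcard on the others; for each hyperedge not on the cycle I put a single tuple with the wildcard on every cycle vertex; and in every tuple each non-cycle variable is set to the fixed interval $[0,1]$. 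The resulting database $\D$ has size $O(nN)=O(N)$.

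It then remains to check that $Q(\D)$ is true iff the triangle query is. One direction maps a triangle witness to the obvious tuple choices; for the other, the intersection predicate on each cycle vertex $v^i$ forces the $v^i$-values chosen in $e^i$ and $e^{i+1}$ to coincide (wildcards and $[0,1]$'s never obstruct a non-empty intersection), so values propagate consistently along each arc and the chosen tuples of $R_A,R_B,R_C$ form a triangle; composing with the 3SUM lower bound rules out an $O(|\D|^{4/3-\epsilon})$ algorithm for $Q$. The main obstacle I expect is the bookkeeping for cycle vertices that occur in hyperedges beyond the two guaranteed by the Berge-cycle definition (including other cycle hyperedges): one must verify that assigning the wildcard interval there is always safe, i.e., it never identifies two distinct values and never blocks a legitimate intersection. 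A related point is handling the arbitrary cycle length $n$ rather than only $n=3$, ensuring that the non-chosen cycle hyperedges propagate a single value faithfully without interacting with off-arc variables.
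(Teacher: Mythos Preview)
Your proposal is correct and follows the same overall strategy as the paper: the upper bound via $\iota$-acyclicity $\Rightarrow$ $\alpha$-acyclic reducts $\Rightarrow$ Yannakakis, and the lower bound via a gadget reduction using point intervals for equality and a wildcard interval that joins with everything. The one noteworthy difference is in how the Berge cycle of length $n\ge 3$ is exploited. You reduce from the \emph{triangle} query, pick three of the $n$ cycle hyperedges to host $R_A,R_B,R_C$, and fill the remaining $n-3$ cycle hyperedges with ``propagator'' relations $\{([p,p],[p,p],\text{wildcard},\ldots)\mid p\}$ that shuttle a single value along each arc. The paper instead reduces from the \emph{$k$-cycle} \bcq query (which has the same $4/3$ lower bound under 3SUM, cf.~\cite{tods/KhamisNRR16}): it lets the $i$-th cycle hyperedge directly host the $i$-th cycle relation $S_i(X_{i-1},X_i)$, with $([a,a],[b,b],(-\infty,+\infty),\ldots)$ for each $(a,b)\in S_i$, and a single all-wildcard tuple in every non-cycle relation. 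This obviates the propagator construction and the arc bookkeeping you flag as the main obstacle; the concern about extra occurrences of cycle vertices in other hyperedges is handled identically in both proofs (wildcards are harmless). Your route works, but the paper's choice of the $k$-cycle as the source problem makes the reduction a one-liner.
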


\section{Conclusion and Future Work}

This paper pinpoints the complexity of Boolean conjunctive queries with intersection joins and characterises syntactically the class of such queries that can be computed in linear time modulo a poly-logarithmic factor. Core to our approach is a reduction of the evaluation problem for such queries to Boolean queries with equality joins. This reduction is robust: It also works for non-Boolean queries with both intersection and equality joins.

A natural extension of this work is to refine the acyclicity notion in the presence of both intersection joins and equality joins. This notion necessarily lies between $\alpha$-acyclicity and $\iota$-acyclicity: It is the former when all joins are equality joins, as in the literature, and it is the latter when all joins are intersection joins, as in this paper. A further type of join that is naturally supported by the development in this paper is the {\em membership join}: This can be expressed by using a join variable to range over both intervals and points. Our reduction can be optimised to accommodate membership joins, in addition to intersection and equality joins. Characterising the linear-time computable Boolean queries with all three types of joins is an exciting venue of future research.

\begin{acks}
This project has received funding from the European Union's Horizon 2020 research and innovation programme under grant agreement No 682588.
AK gratefully acknowledges support from EPSRC via a CASE grant supported by Ordnance Survey.
\end{acks}

\bibliographystyle{ACM-Reference-Format}
\bibliography{amain}

\newpage
\appendix
\onecolumn

\section{Background}
\label{section:background}

A \textit{hypergraph} is a generalisation of a graph in which an edge can connect any number of vertices. 

\begin{definition}[(Multi-)Hypergraph]
\label{definition:hypergraph}
A hypergraph $\mathcal{H}$ is a pair $(\mathcal{V}, \mathcal{E})$, 
where $\mathcal{V}$ is a finite set of vertices and $\mathcal{E}$ is a set of non-empty subsets of $\mathcal{V}$ called hyperedges, i.e., $\mathcal{E} \subseteq 2^\mathcal{V} \setminus \{\emptyset\}$, where $2^\mathcal{V}$ is the power set of $\mathcal{V}$.
A multi-hypergraph is a hypergraph where several hyperedges may be the same set of vertices, i.e., $\calE$ is a multiset.
\end{definition}

Queries and database schemas are associated with a hypergraph in the following way: Each attribute in the schema is associated with a vertex of the hypergraph and each relation is associated with a hyperedge of the hypergraph. 
Properties of queries and database schemas can be studied on their associated hypergraphs~\cite{jacm/Fagin83, jacm/BeeriFMY83}.

\begin{figure}[t]
	\centering 
	\includegraphics[width=.5\linewidth]{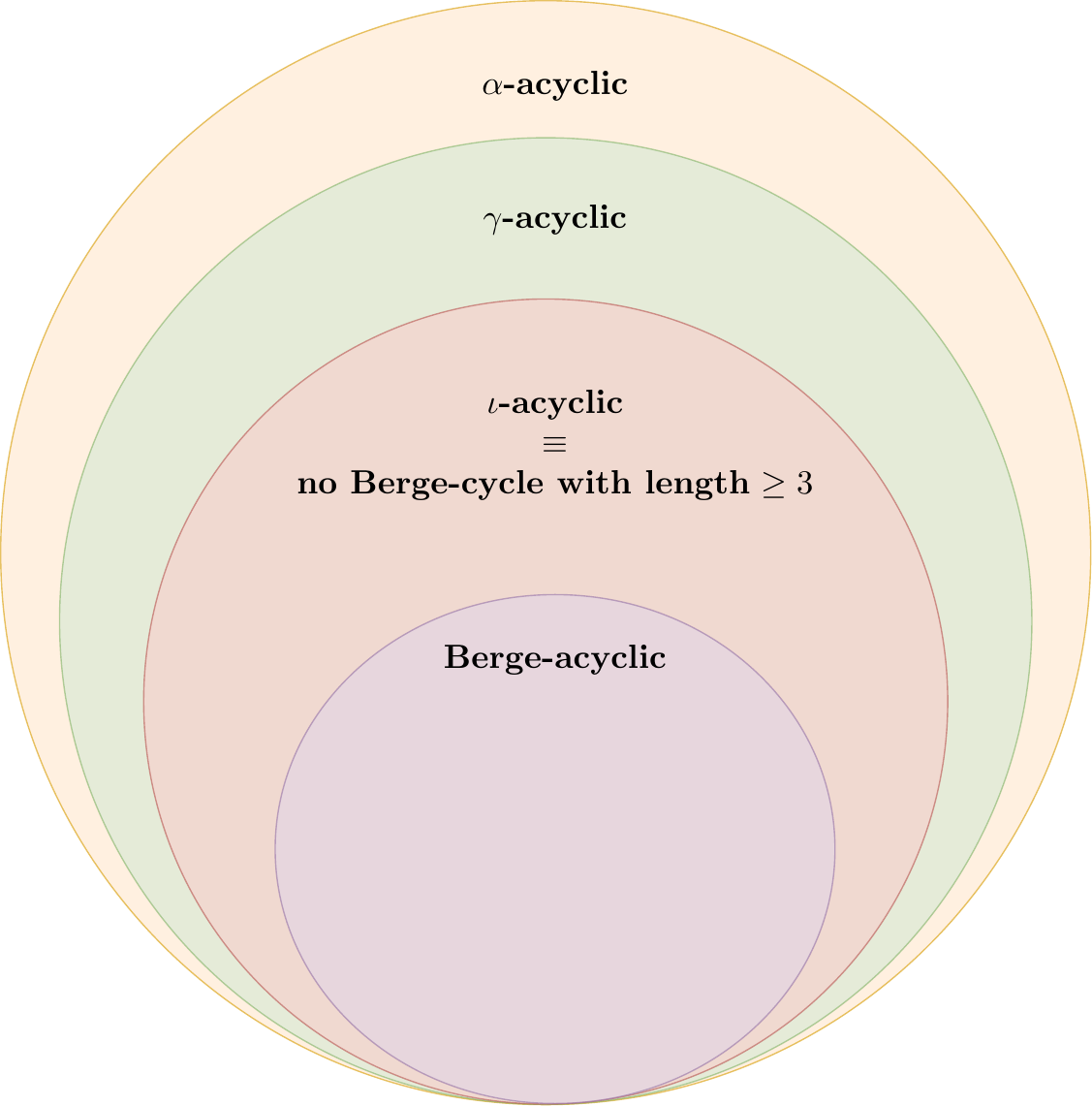}
	\caption{Venn diagram for various notions of acyclicity. Iota acyclicity is the new notion introduced in this work.}
	\label{figure:acyclicity-classes}
	\Description[]{}
\end{figure}

\subsection{Hypergraph Acyclicity}
\label{subsection:hypergraph-acyclicity}

A database schema is \textit{acyclic} if its hypergraph is acyclic. There are several notions of acyclicity: \textit{alpha, beta, gamma} and \textit{Berge}. The relationship between them is depicted by the Venn diagram in Figure~\ref{figure:acyclicity-classes}.
In the following, we describe in detail the notions of acyclicity that are relevant to this work.

\subsubsection{Berge acyclicity} 
A hypergraph $\mathcal{H} = (\mathcal{V}, \mathcal{E})$ can be represented by its \textit{incidence graph}. This is the bipartite graph $(\mathcal{V}, \mathcal{E}, \mathcal{F})$, where $\mathcal{V}$ and $\mathcal{E}$ are the partitions of the vertices of the graph and $\mathcal{F}$ is a set of edges such that $v$ and $e$ are connected with an edge in case the vertex $v$ is contained in a hyperedge $e$ in $\mathcal{E}$. 
A hypergraph is Berge-acyclic if its incidence graph is acyclic~\cite{bergegraphs}. Berge-acyclicity was subsequently expressed using the notion of Berge cycle instead of the notion of cycle in the incidence graph of $\mathcal{H}$~\cite{jacm/Fagin83}.

In the following we use the latter definition.

\begin{definition}[Berge Cycle~\cite{jacm/Fagin83}]
\label{definition:berge-cycle}
A Berge cycle in $\mathcal{H}$ is a sequence $(e^1, v^{1}$, $e^2, v^{2}, \dots$, $e^n, v^{n}, e^{n+1})$ such that:
\begin{itemize}
	\item $v^{1}, \dots, v^{n}$ are distinct vertices in $\mathcal{V}$;
	\item $e^1, \dots, e^n$ are distinct hyperedges in $\mathcal{E}$ and $e^{n+1} = e^1$;
	\item $n \geq 2$, that is, there are at least $2$ hyperedges involved;  and
	\item $v^{i}$ is in $e^i$ and $e^{i+1}(1 \leq i \leq n)$.
\end{itemize} 
\end{definition}

\begin{definition}[Berge Acyclic Hypergraph~\cite{jacm/Fagin83}]
A hypergraph is Berge-acyclic if it has no Berge cycle. 
\end{definition}

The equivalence between a Berge cycle in $\mathcal{H}$ and a cycle in the incidence graph of $\mathcal{H}$ is natural: the traversal of a cycle in the incidence graph of $\mathcal{H}$ is a sequence of alternating vertices from the partitions $\mathcal{V}$ and $\mathcal{E}$ such that the only repeated vertices are the first and last. Since the incidence graph is a bipartite graph, the minimum length of a cycle is 4, so there are at least two hyperedges from the partition $\mathcal{E}$ involved.

\subsubsection{Alpha acyclicity} 
A weaker notion of acyclicity is $\alpha$-acyclicity. 
The class of $\alpha$-acyclic hypergraphs is a superset of the class of Berge-acyclic hypergraphs. 

\begin{definition}[Join Tree of a Query]
\label{definition:join-tree}
A join tree of a conjunctive query $Q$ with hypergraph $\mathcal{H} = (\mathcal{V}, \mathcal{E})$ is a tuple $(\mathcal{T}$, $\chi)$ where $\mathcal{T}$ is a tree and $\chi$ is a bijection of the form $\chi:V(\mathcal{T})\rightarrow\mathcal{E}$ where 
for every vertex $v \in \mathcal{V}$, the set $\{t \mid v \in \chi(t)\}$ is a non-empty connected subtree of $\mathcal{T}$ (connectivity).
\end{definition}

There exist several characterisations of alpha acyclicity~\cite{AbiteboulHV95, csur/Brault-Baron16, jacm/Fagin83}: 
\begin{itemize}
\item A conjunctive query $Q$ is $\alpha$-acyclic iff $Q$ has a join tree (see Definition~\ref{definition:join-tree}), and 
\item A conjunctive query $Q$ is $\alpha$-acyclic iff its hypergraph is GYO reducible to the empty hypergraph. That is, by
repeated application of one of the following two rules:
\begin{enumerate}
	\item if a vertex $v$ occurs in only one edge $e$, then remove $v$ from $e$; and
	\item if two distinct edges $e$ and $f$ satisfy $e \subseteq f$, then remove $e$,
\end{enumerate}
the set of hyperedges of the hypergraph is reduced to the empty set.
\end{itemize}

\begin{definition}[Induced Set~\cite{csur/Brault-Baron16}]
\label{definition:induced-set}
Let $\mathcal{E}$ be a family of sets. The set $\mathcal{E}[S] = \{e \cap S \mid e \in \mathcal{E}\} \setminus \{\emptyset\}$ is the induced set of $\mathcal{E}$ on a set $S \subseteq \bigcup \mathcal{E}$.
\end{definition}
    
\begin{definition}[Minimisation of a Familiy of Sets~\cite{csur/Brault-Baron16}]
\label{definition:minimisation}
Let $\mathcal{E}$ be a family of sets. The set $\mathcal{M}(\mathcal{E}) = \{e \in \mathcal{E} \mid \nexists f \in \mathcal{E}, e \subset f\}$ is the \textit{minimization} of $\mathcal{E}$. That is, the subset of hyperedges that are maximal with respect to the inclusion order. Therefore, we trivially have that $\mathcal{M(E)} \subseteq \mathcal{E}$.
\end{definition}
    
\begin{definition}[Conformal Hypergraph~\cite{csur/Brault-Baron16}]
\label{definition:hypergraph-properties}
A hypergraph $\mathcal{H} = (\mathcal{V}, \mathcal{E})$ is conformal if there is no $S \subseteq \mathcal{V}$, with cardinality $\geq 3$ such that
$\mathcal{M}(\mathcal{E}[S]) = \{S\setminus \{x\} \mid x \in S\}$.
\end{definition}
    
\begin{definition}[Cycle-Free Hypergraph~\cite{csur/Brault-Baron16}]
A hypergraph $\mathcal{H} = (\mathcal{V}, \mathcal{E})$ is cycle-free if there is no tuple $(v_{1}, \dots, v_{n})$ with $n \geq 3$ of pairwise distinct vertices such that
$
\mathcal{M}(\mathcal{E}[\{v^{i} \mid 1 \leq i \leq n\}]) = \{\{v^{i}, v^{i+1}\} \mid 1 \leq i < n\} \cup \{\{v^{n}, v^{1}\}\}.
$ 
\end{definition}
    
\begin{definition}[Alpha Acyclic Hypergraph~\cite{csur/Brault-Baron16}]
\label{definition:alpha-conformal-cycle-free}
A hypergraph $\mathcal{H}$ is $\alpha$-acyclic iff it is conformal and cycle-free.
\end{definition}

Boolean conjunctive queries that are $\alpha$-acyclic can be evaluated in time linear in the size of the input database. 
Moreover, a full conjunctive query on this schema can be evaluated in time linear in the size of the input plus the size of the output. Both these results are achieved using Yannakakis's algorithm~\cite{vldb/Yannakakis81}. 
If we treat the size of the query as part of the problem input, then the time complexity of Yannakakis' algorithm becomes polynomial in the size of the query, the input, and the output respectively~\cite{AbiteboulHV95, vldb/Yannakakis81}.

\subsubsection{Further acyclicity notions} 
There are notions of acyclicity stricter than $\alpha$-acyclicity and weaker than Berge acyclicity: beta acyclicity and gamma acyclicity. 
That is, the class of $\alpha$-acyclic hypergraphs is a superset of the class of $\beta$-acyclic hypergraphs, 
which in turn is a superset of the class of $\gamma$-acyclic hypergraphs, 
which in turn is a superset of the class of Berge-acyclic hypergraphs~\cite{jacm/Fagin83,csur/Brault-Baron16}.

\begin{definition}[Gamma Acyclic Hypergraph~\cite{csur/Brault-Baron16}]
    \label{definition:gamma-acyclicity}
 A hypergraph $\calH = (\calV, \calE)$ is $\gamma$-acyclic if $\calH$ is cycle-free and we cannot 
 find $x, y, z \in \calV$ such that $\{\{x, y\},\{x, z\},\{x, y, z\}\} \subseteq \calE[\{x, y, z\}]$.
\end{definition}

\subsection{Width Measures}
\label{subsection:widths}

\begin{definition}[Fractional Edge Cover Number]
\label{definition:rho-star}
Let $\calH = (\calV, \calE)$ be a hypergraph. The {\em fractional edge covers} of $S \subseteq \calV$ are precisely the feasible solutions $(x_e): e \in \calE$ for the following linear program:
\[
\begin{tabular}{crll}
$L_{Q}$: & $\text{minimise}$ & $\sum_{e \in \calE} x_{e}$ & \\ 
& $\text{subsect to}$ & $\sum_{e: v \in e} x_e \geq 1$ & $\text{for all } v \in S$\\
& & $x_{e} \geq 0$ & $\text{for all } e \in \calE$.
\end{tabular}
\]
and the {\em fractional edge cover number} $\rho_{\calE}^{\ast}(S)$ is the cost of the optimal solution. The minimum exists and it is rational. 
\end{definition}

Let $Q$ be a full conjunctive query with equality joins whose hypergraph is $\calH$. 
The fractional edge cover number of $\calH$ provides a tight bound to the worst-case answer size of $Q$~\cite{siamcomp/AtseriasGM13, talg/GroheM14} for any database.
This means that for any database $D$, the size of $Q(\D)$ is $O(|\D|^{{\rho}_{\calE}^{\ast}(\calV)})$. Moreover, there exist arbitrarily large database instances $\D$ for which the size of $Q(\D)$ is at least $\Omega(|\D|^{{\rho}_{\calE}^{\ast}(\calV)})$. 
There are query evaluation algorithms matching this bound up to a log factor~\cite{jacm/NgoPRR18, Veldhuizen14}. 

\begin{definition}[Hypertree Decomposition]
The {\em (hyper)tree decomposition} of a hypergraph $\calH$ is a pair $(\calT, \chi)$, where $\calT$ is a tree whose vertices are $V(\calT) $and $\chi: V(\calT) \rightarrow 2^{\calV}$ maps each node $t$ of the tree $\calT$ to a subset $\chi(t)$ of vertices such that the following properties hold:
\begin{enumerate}
	\item every hyperedge $e \in \calE$ is a subset of a set $\chi(t)$ for some $t \in V(\calT)$, and
	\item for every vertex $v \in \calV$, the set $\{t \mid v \in \chi(t)\}$ is a non-empty connected subtree of $\calT$.
The sets $\chi(t)$ are called the {\em bags} of the tree decomposition.
\end{enumerate}
We use $\td(\calH)$ to denote the set of tree decompositions of a given hypergraph $\calH$.
\end{definition}

\begin{definition}[Polymatroid~\cite{pods/Khamis0S17}]
Consider the vertex set $\calV$. A function $f: 2^{\calV} \rightarrow \R^+$ is a (non-negative) {\em set function} on $\calV$. A set function $f$ on $\calV$ is: 
\begin{itemize}
	\item {\em modular} if $f(S) = \sum_{v \in S} f(\{v\})$ for all $S \subseteq \calV$;
	\item {\em monotone} if $f(X) \leq f(Y)$ whenever $X \subseteq Y$; and 
	\item {\em submodular} if $f(X \cup Y) + f(X \cap Y) \leq f(X) + f(Y)$ for all $X, Y \subseteq \calV$.	
\end{itemize}

A monotone, submodular set function $h : 2^{\calV} \rightarrow \R^+$ with $h(\emptyset) = 0$ is a {\em polymatroid.}
We use $\Gamma_\calV$ to denote the set of all polymatroids $f: 2^{\calV} \rightarrow \R^+$ over the set $\calV$.
\end{definition}

\begin{definition}[Edge Dominated Set Functions~\cite{jacm/Marx13, pods/Khamis0S17}]
Let $\calH = (\calV, \calE)$ be a hypergraph. The set of edge dominated set functions is defined as follows:
\[
\ed(\calH) := \{h \mid h: 2^{\calV} \rightarrow \R^+, h(S) \leq 1, \forall S \in \calE\}. 
\]
\end{definition}

\subsubsection{Fractional Hypertree Width} 
\label{subsubsection:fhtw}

\begin{definition}[Fractional Hypertree Width]
\label{definition:fhtw}
Consider a hypergraph $\calH = (\calV, \calE)$. Recall that $\td(\calH)$ denote the set of all tree decompositions of $\calH$. The fractional hypertree width of $\calH$ is defined by:
\begin{equation}
\fhtw(\calH) := \min_{(\calT, \chi) \in \td(\calH)} \max_{t \in V(\calT)} \rho^{\ast}_{\calE}(\chi(t)).
\label{eq:fhtw}
\end{equation}
\end{definition}

The following is an alternative characterization for $\fhtw(\calH)$~\cite{pods/Khamis0S17}:
(Recall that $\Gamma_{\calV}$ denotes the set of polymatroids over $\calV$ and $\ed(\calH)$
denotes the set of edge dominated set functions over $\calH$.)
\begin{equation}
\fhtw(\calH) :=
\min_{(\calT, \chi) \in \td(\calH)}
\max_{h \in \ed(\calH) \cap \Gamma_{\calV}}
\max_{t \in V(\calT)}
h(\chi(t)).
\label{eq:fhtw:alt}
\end{equation}
The equivalence of the two characterizations is shown~\cite{pods/Khamis0S17} by proving that
for a fixed tree decomposition $(\calT, \chi)$ and a fixed node $t \in V(\calT)$, the following holds:
\begin{equation} \label{eq:max-rho}
\rho^{\ast}_{\calE}(\chi(t)) =
\max_{h \in \ed(\calH) \cap \Gamma_{\calV}} 
h(\chi(t)).
\end{equation}


It is clear from the definition that $\fhtw(\calH) \leq {\rho}_{\calE}^{\ast}(\calV)$. A Boolean conjunctive query $Q$ with equality joins whose hypergraph is $\calH$ can be computed using its tree decomposition as follows:
\begin{enumerate}
	\item  materialize each bag of the tree decomposition by computing the full conjunctive query associated to it using the Leapfrog Triejoin algorithm, and 
	\item run Yannakakis' algorithm~\cite{vldb/Yannakakis81} on the $\alpha$-acyclic Boolean conjunctive query (i.e., the tree decomposition) which arises from the materialization of the bags. 
\end{enumerate} 
For a database $\D$ of size $N$, the first step takes $O(N^{\fhtw(\calH)}\log N)$ time, and the second step takes time linear in the size of the materialization of the bags. The second step takes time linear in the size of the bags. 
By using this algorithm, $Q(D)$ can be thus computed in time $O(N^{\fhtw(\calH)}\log N)$.

\subsubsection{Submodular Width}
\label{subsubsection:subw}

\begin{definition}[Submodular Width~\cite{jacm/Marx13}]
\label{definition:subw}
Given a hypergraph $\calH = (\calV, \calE)$, the submodular width of $\calH$ is defined by:
\begin{equation}
\subw(\calH) :=
\max_{h \in \ed(\calH) \cap \Gamma_{\calV}}
\min_{(\calT, \chi) \in \td(\calH)}
\max_{t \in V(\calT)} h(\chi(t)),
\label{eq:subw}
\end{equation}
where $\ed(\calH)$ denotes the set of edge dominated set functions over $\calH$, and
$\Gamma_\calV$ denotes the set of polymatroids over $\calV$.
\end{definition}

By comparing~\eqref{eq:subw} to~\eqref{eq:fhtw:alt} and using the minimax inequality, it is easy
to see that $\subw(\calH) \leq \fhtw(\calH)$ for any hypergraph $\calH$~\cite{jacm/Marx13,pods/Khamis0S17}.
Moreover, there are classes of queries with bounded submodular width and unbounded fractional hypertree width~\cite{jacm/Marx13}.

Marx showed that a class $\calC$ of Boolean conjunctive queries with equality joins is fixed-parameter tractable $\fpt$
(with the parameter being the query size) if and only if $\calC$ has a bounded submodular width~\cite{jacm/Marx13}. His result suggests the use of submodular width as a yardstick for optimality of algorithms
solving Boolean conjunctive queries with equality joins. Marx gave an algorithm that can solve a query $Q$ in time
$O(\poly(N^{\subw(\calH)}))$ where $N$ is the input database size and
$\calH$ is the hypergraph of $Q$.
His algorithm decomposes the given input database into a union of ``uniform'' databases
and then uses a different tree decomposition to solve the original query over each database separately.
\citet{pods/Khamis0S17} gave another algorithm, called $\panda$, that can answer such a query
in time $O(N^{\subw(\calH)}\cdot \polylog(N))$.
The $\panda$ algorithm works by writing a sequential proof for the upper bound on $\subw(\calH)$
and then interpreting each proof step as an algorithmic operation. It has recently been extended
to handle count queries as well as queries with inequalities~\cite{FAQAI:TODS:2020}.


\section{Missing Details from Section~\ref{section:preliminaries}}
\label{appendix:preliminaries}

\subsection{Segment Tree}
\label{appendix:subsection:segment-tree}

\tikzset{every node/.style={rectangle, align=center, font=, inner sep=3pt}}
\definecolor{green-2}{RGB}{130,179,102} %
\definecolor{red-2}{RGB}{184,84,80} %

\begin{figure}
\centering
    
\begin{tikzpicture}[->,
 level 1/.style={sibling distance=50mm},
 level 2/.style={sibling distance=23mm},
 level 3/.style={sibling distance=14mm},
  level distance=18mm
]
\node [] {$\varepsilon$\\$(-\infty, \infty)$}
  	child {node [] {$0$\\$(-\infty, 3]$}   
    	child {node [] {$00$\\$(-\infty, 1]$}
      		child {node [] {$000$\\$(-\infty, 1)$}
    	}
      		child {node [, label=above:{\color{red-2} $\square$ \color{black}}] {$001$\\$[1, 1]$}
    	}
    	}
    	child {node [, label=above:{\color{red-2} $\square$ \color{black}}] {$01$\\$(1, 3]$}
      		child {node [] {$010$\\$(1, 3)$}
    	}
      		child {node [, label=above:{\color{green-2} $\bullet$ \color{black}}] {$011$\\$[3, 3]$}
    	}
    	}
    }
  	child {node [] {$1$\\$(3, \infty)$}
    	child {node [, label=above:{\color{red-2} $\square$ \color{green-2} $\bullet$ \color{black}}] {$10$\\$(3, 4]$}
      		child {node [] {$100$\\$(3, 4)$}
    	}
      		child {node [] {$101$\\$[4, 4]$}
    	}
    	}
  		child {node [] {$11$\\$(4, \infty)$}
  		}
	};
	
    \draw[mark=*, red-2, -] (-3,-6.5)-- (2.4,-6.5);
     \foreach \x in {-3,2.4} {
         \draw[red-2, -] (\x,-6.25)-- (\x,-6.75);
    }
    
    \draw[green-2, -] (-0.7,-7.25)-- (2.4,-7.25);
     \foreach \x in {-0.7,2.4} {
         \draw[green-2, -] (\x,-7)-- (\x,-7.5);
    }
\end{tikzpicture}
    \caption{Segment tree on the set of intervals $\mathcal{I} = \{ \color{red-2} \square \color{black} = [1, 4], \color{green-2} \bullet \color{black} = [3, 4]\}$. 
	The interval $[1, 4]$ is contained in the canonical subsets of the nodes $001$, $01$, and $10$. 
	The interval $[3, 4]$ is contained in the canonical subsets of the nodes $011$ and $10$.}
    \label{fig:segment-tree}
	\Description[]{}
\end{figure}

\begin{algorithm}
\caption{Segment Tree Insertion Algorithm\\
\textbf{Input}: A node $v \in V(\mathfrak{T}_{\calI})$ and an interval $i \in \calI$.}
\label{alg:segment-tree-insert}
\begin{algorithmic}[1]
\Procedure{Insert}{$v$, $i$}
    \If{$\text{seg}(v) \subseteq i$}
        \State Insert $i$ into the canonical subset $\calI_v$.
    \Else
        \State \textbf{if} $\text{seg}(\text{left-child}(v)) \cap i \neq \emptyset$ \textbf{then} \Call{Insert}{$\text{left-child}(v)$, $i$}
        \State \textbf{if} $\text{seg}(\text{right-child}(v)) \cap i \neq \emptyset$ \textbf{then} \Call{Insert}{$\text{right-child}(v)$, $i$}
    \EndIf
\EndProcedure
\end{algorithmic}
\end{algorithm}

\begin{algorithm}
	\caption{Segment Tree Query Algorithm\\
	A node $v \in V(\mathfrak{T}_{\calI})$ and a query point $p$}
	\label{alg:segment-tree-query}
	\begin{algorithmic}[1]
	\Procedure{Query}{$v$, $p$}
		\State Report all the intervals in $\calI_v$.
		\If{$v$ is not a leaf}
			\State \textbf{if} $p \in~\text{seg}(\text{left-child}(v))$ \textbf{then} \Call{Query}{$\text{left-child}(v)$, $p$}
			\State \textbf{else} \Call{Query}{$\text{right-child}(v)$, $p$}
		\EndIf
	\EndProcedure
	\end{algorithmic}
	\end{algorithm}

Our formal definition of a segment tree $\mathfrak{T}_{\mathcal{I}}$ on an input set of intervals $\calI$ is given in Section~\ref{section:preliminaries}.
Figure~\ref{fig:segment-tree} shows an example segment tree. 
To construct the segment tree, first, we sort the endpoints of the intervals in $\mathcal{I}$ in time $O(\lvert \calI \rvert \cdot \log \lvert \calI \rvert)$ 
to obtain the elementary segments, and then construct a balanced binary tree such that 
the elementary segments sorted from left to right correspond to the leaves of the tree from the left to right respectively.
Then, we compute the corresponding segments of the nodes in a bottom-up fashion in time $O(N)$. 
To compute the canonical subset $\calI_v$ for each node $v$, we use procedure $\textsc{Insert}$ of Algorithm~\ref{alg:segment-tree-insert}, 
called with $v = \text{root}(\mathfrak{T}_{\mathcal{I}})$.
This procedure inserts each interval $i \in \mathcal{I}$ into the canonical subsets of its corresponding maximal segment tree nodes.
For any interval $i \in \mathcal{I}$, the recursive procedure $\textsc{Insert}(\text{root}(\mathfrak{T}_{\calI}, i))$
visits at most $4$ nodes per level of the tree~\cite{preparata2012computational}. 
Hence, the time complexity to insert a single interval is $O(\log \vert \calI \rvert)$.
Therefore, the total time to construct the segment tree is $O(\lvert \calI \rvert \cdot \log \lvert \calI \rvert)$.

Given a query point $p$ and a segment tree $\mathfrak{T}_{\calI}$ the procedure $\textsc{Query}$ of Algorithm~\ref{alg:segment-tree-query}
reports all the intervals in $\calI$ that contain this query point $p$. 
For any query point $p$, the procedure $\textsc{Query}(\text{root}(\mathfrak{T}_{\calI}), p)$ is called at the root of the segment tree
visits one node per level of the tree, so $O(\log \lvert \calI \rvert)$ nodes in total. 
Therefore, the total time complexity of the query is $O(\log \lvert \calI \rvert + k)$, where $k$ is the number of reported intervals~\cite{preparata2012computational}.

\begin{remark}\label{remark:segtree-modification}
We assume wlog that all input intervals are closed intervals. Since there are finitely many input intervals, there exists a sufficiently small\footnote{$\epsilon $ is less than the distance between any two distinct endpoints of all the intervals.} $\epsilon > 0$ such that any open interval $(x, y)$ can be replaced with the closed interval $[x + \epsilon, y - \epsilon]$.	
\end{remark}

\subsection{Proof of Property~\ref{property:segment-tree}}
{
       \textsc{Property~\ref{property:segment-tree}} (Segment Tree)
	   Let $\calI$ be any set of intervals and $\mathfrak{T}_{\calI}$ be the segment tree for it.
	   \begin{enumerate}
		\item Let $u$ and $v$ be nodes in the segment tree. Then $u \in \text{anc}(v)$ if and only if $\text{seg}(u) \supseteq \text{seg}(v)$. Equivalently, $u$ is a prefix of $v$.

		\item For any interval $x\in \calI$, there cannot be two nodes in $\text{CP}_\mathcal{I}(x)$ such that one of them is an ancestor of the other.

		\item For any interval $x\in \calI$, $\text{CP}_{\calI}(x)$ has size and can be computed in time $O(\log \rvert \calI \lvert)$.
	   \end{enumerate}
}
\begin{proof}
	We prove the three statements separately.
	\begin{enumerate}
		\item This statement holds by the construction of the segment tree.
		\item Assume that there exist two nodes $u, v \in \text{CP}_{I}(x)$ such that $u$ is ancestor of $v$.
		From Property~\ref{property:segment-tree}(1) we have $\text{seg}(u) \supseteq \text{seg}(v)$. This is a contradiction
		since the set of segments $\{\text{seg}(v) \mid v \in \text{CP}_{\mathcal{I}}(x)\}$ forms a partition of $x$.
		\item We claim that given an interval $x \in \calI$, there are no three nodes in $\text{CP}_{\calI}(x)$ 
		that are at the same depth of the tree. Therefore, the size of $\text{CP}_{\calI}(x)$ has size 
		at most $O(\log \lvert \calI \rvert)$.  To see why this is true, let
		$v_{1}, v_{2}, v_{3}$ be three nodes at the same depth, numbered from left to right. Suppose
		$v_1$, $v_3 \in \text{CP}_{\calI}(x)$. This means that $x$ spans the whole interval
		from the left endpoint of $\text{seg}(v_{1})$ to the right endpoint of $\text{seg}(v_{3})$. 
		Because $v_{2}$ lies between $v_{1}$ and $v_{3}$, $\text{seg}(\text{parent}(v_{2}))$ must be contained in 
		$x$. Hence, $v_{2} \notin \text{CP}_{\calI}(x)$. 

		The canonical partition of $x$ can be computed using the procedure $\textsc{Insert}(\text{root}(\mathfrak{T}_{\calI}), x)$
		in Algorithm~\ref{alg:segment-tree-insert}. The recursive procedure visits at most $4$ nodes per level of the tree~\cite{preparata2012computational}. 
		Hence, the time complexity to insert a single interval is $O(\log \vert \calI \rvert)$.
	\end{enumerate}
\end{proof}

\section{Missing Details from Section~\ref{section:ij-to-ej}}
\label{appendex:ij-to-ej}

\subsection{Proof of Lemma~\ref{lemma:intersection-predicate-not-ordered}}
\label{proof:intersection-predicate-not-ordered}
{
	\textsc{Lemma 
	\ref{lemma:intersection-predicate-not-ordered}}
	For any set of intervals $S = \{x_{1}, \dots, x_{k}\} \subseteq \mathcal{I}$, the following equivalence holds
	\begin{align*}
		\left(\bigcap_{i \in[k]} x_{i} \right)\neq \emptyset \equiv 
		\bigvee_{i \in[k]} 
		\left[
			 \bigvee_{(v_1, \dots, v_k) \in \text{anc}(\text{leaf}(x_i))^k} 
			 \left( 
				 \bigwedge_{\substack{j \in [k] \\ j \neq i}} v_j \in \text{CP}_\mathcal{I}(x_j) 
			 \right) 
		\right]
	\end{align*}
}
	
\begin{proof}
	The intersection of the intervals in $S$ is equal to the interval $[l, r]$ if $l \leq r$, and equal to $\emptyset$ otherwise, 
	where $l := \max_{1 \leq i \leq k} \, x_i.l$ and $r := \min_{1 \leq i \leq k} \, x_i.r$. 
	Therefore, the intervals in $S$ intersect if and only if there is an interval $x_i \in S$ (i.e., $i$ is equal to $\text{argmax}_{1 \leq i \leq k} \, x_i.l$) 
	such that the point $x_i.l$ is contained in all the other intervals in $S$. 
	Hence, we have:
	\begin{equation}
		\left(
			\bigcap_{i \in [k]} x_i 
		\right)
		\neq \emptyset \equiv \bigvee_{i \in [k]} 
		\left(
			\bigwedge_{\substack{j \in [k] \\ j \neq i}} x_i.l \in x_j 
		\right)
	\end{equation}
	
	Since the set of segments $\{\text{seg}(v) \mid v \in \text{CP}_{\mathcal{I}}(x_j)\}$ forms a partition of the interval $x_j$, we have:
	\begin{equation}
		\left(
		\bigcap_{i \in [k]} x_i 
		\right)
		\neq \emptyset \equiv 
		\bigvee_{i \in [k]} 
		\left[ 
			\bigwedge_{\substack{j \in [k] \\ j \neq i}} 
			\left( 
				\bigvee_{v \in \text{CP}_{\mathcal{I}}(x_j)} x_i.l \in \text{seg}(v) 
			\right) 
		\right]
	\end{equation}
	
	By the construction of the segment tree, a point $p$ is included in the segment $\text{seg}(v)$ if and only if the node $v$ is an ancestor of $\text{leaf}(p)$. 
	Hence, we have:
	\begin{equation}
		\begin{split}
		\left(\bigcap_{i \in [k]} x_i \right) \neq \emptyset 
		&\equiv \bigvee_{i \in [k]} \left[ \bigwedge_{\substack{j \in [k] \\ j \neq i}} \left( \bigvee_{v \in \text{CP}_{\mathcal{I}}(x_j)} v \in \text{anc}(\text{leaf}(x_i)) \right) \right] \\
		&\equiv \bigvee_{i \in [k]} \left[ \bigwedge_{\substack{j \in [k] \\ j \neq i}} \left( \bigvee_{v_j \in \text{anc}(\text{leaf}(x_i))} v_j \in \text{CP}_{\mathcal{I}}(x_j) \right) \right] \\
		&\equiv \bigvee_{i \in [k]} \left[ \bigvee_{(v_1, \dots, v_k) \in (\text{anc}(\text{leaf}(x_i)))^{k}} \left( \bigwedge_{\substack{j \in [k] \\ j \neq i}} v_j \in \text{CP}_\mathcal{I}(x_j) \right) \right],
		\end{split}
	\end{equation}
	where the second equivalence is due to the fact that $\bigvee_{a \in A} a \in B \equiv \bigvee_{a \in B} a \in A \equiv A \cap B \neq \emptyset$, for any two sets $A$ and $B$, and the third equivalence is due to the distributivity of the conjunction over the disjunction. 
	\nop{
		Note that the tuple $(v_1, \dots, v_k)$ in Equation~\eqref{eq:intersection-predicate-not-ordered} does not include the variable $v_i$.
	}
\end{proof}

\subsection{Proof of Property~\ref{property:unique-tuple-of-nodes}}
\label{appendix:property:unique-tuple-of-nodes}

{
	\textsc{Property 
	\ref{property:unique-tuple-of-nodes}}
	Consider a set of intervals $S = \{x_{1}, \dots, x_{n}\} \subseteq \calI$ and a segment tree $\mathfrak{T}_{\calI}$. 
    For any $x_i \in S$, there can be at most one tuple of nodes $v_j \in (\text{anc}(\text{leaf}(x_i)))$  for $j\in[k], j\neq i$ 
    that satisfy the conjunction of Lemma~\ref{lemma:intersection-predicate-not-ordered}.
}

\begin{proof}
	Assume that there exist two such tuples: $(v_j)_{j\in[k], j\neq i}$ and $(v_j^\prime)_{j\in[k], j\neq i}$ 
	that satisfy the conjunction of Lemma~\ref{lemma:intersection-predicate-not-ordered}. Since $x_{i}$ is fixed, the nodes from the two tuples correspond to the same 
	root-to-leaf path whose leaf is $\text{leaf}(x_{i})$. Therefore, there exists $j \in [k], j \neq i$ such that $v_{j} \neq v_{j}^{\prime}$ 
	and $v_{j}, v_{j}^{\prime} \in \text{CP}_{\calI}(x_{j})$. 
	This is not possible due to Property~\ref{property:segment-tree}(2), 
	as there cannot be distinct nodes in $\text{CP}_{\calI}(x_{j})$ that belong to the same root-to-leaf path .
\end{proof}

\subsection{Proof of Lemma~\ref{lemma:from-intersections-to-equalities}}

{
	\textsc{Lemma 
	\ref{lemma:from-intersections-to-equalities}}
    Consider a subset of intervals $S = \{x_{1}, \dots, x_{k}\} \subseteq \mathcal{I}$. 
    The predicate $\left(\bigcap_{x \in S} x\right) \neq \emptyset$ is true if and only 
    if there exists a permutation $\sigma \in \pi(\{x_{1}, \dots, x_{k}\})$ and a tuple of bit-strings
    $(y_{1}, \dots, y_{k})$ such that:
    \begin{itemize}
    \item for each $1 \leq i < k$ we have $y_{1} \circ \dots \circ y_{i}  \in \text{CP}_{\calI}(\sigma_{i})$, and
    \item for $i = k$ we have $y_{1} \circ \dots \circ y_{i} = \text{leaf}(\sigma_{i})$.
    \end{itemize}
}

\begin{proof}
Let $\mathfrak{T}_{\calI}$ be a segment tree on $\calI$.

$\Rightarrow$: Assume that the predicate $\left(\bigcap_{x \in S} x\right) \neq \emptyset$ is true.
By Lemma~\ref{lemma:intersection-predicate-ordered} there exists a permutation $\sigma \in \pi(\{x_{1}, \dots, x_{k}\})$ and a tuple of nodes
$(u_{1}, \dots, u_{k})$ such that $u_{1} \in \text{anc}(u_{2}),
\dots, u_{k-1} \in \text{anc}(u_{k}), u_{k} = \text{leaf}(\sigma_{k})$ and $u_{i} \in \text{CP}_{\calI}(\sigma_{i})$ for each $1 \leq i < k$.
By Property~\ref{property:segment-tree} (1), 
we have that $u_{1}$ is a prefix of $u_{2}$ is a prefix of $u_{3}$ and so on.
Hence, there exists a tuple of bit-strings $(y_{1}, \dots, y_{k})$ such that $u_{i} = y_{1} \circ \dots \circ y_{i}$ for each $1 \leq i \leq k$.
Since, $u_{i} \in \text{CP}_{\calI}(\sigma_{i})$ for each $1 \leq i < k$ and $u_{k} = \text{leaf}(\sigma_{k})$
we have $y_{1} \circ \dots \circ y_{i} \in \text{CP}_{\calI}(\sigma_{i})$ for each $1 \leq i < k$ 
and $y_{1} \circ \dots \circ y_{k} = \text{leaf}(\sigma_{k})$.
Hence, the statement of Lemma~\ref{lemma:from-intersections-to-equalities} is true.

$\Leftarrow$:
Assume that the statement of Lemma~\ref{lemma:from-intersections-to-equalities} is true. 
That means that there exists a permutation $\sigma \in \pi(\{x_{1}, \dots, x_{k}\})$ and a tuple 
of bit-strings $(y_{1}, \dots, y_{k})$ such that 
for each $1 \leq i < k$ we have $y_{1} \circ \dots \circ y_{i}  \in \text{CP}_{\calI}(\sigma_{i})$
and $y_{1} \circ \dots \circ y_{k} = \text{leaf}(\sigma_{k })$. Let 
$u_{i} = y_{1} \circ \dots, y_{i}$ for each $1 \leq i \leq k$. We have that $u_{i} \in \text{CP}_{\calI}(\sigma_{i})$ for each $1 \leq i < k$
and $u_{k} = \text{leaf}(\sigma_{k})$. Furthermore, $u_{1}$ is prefix of $u_{2}$,
$u_{2}$ is prefix of $u_{3}$ and so on. Hence, by Property~\ref{property:segment-tree} (1), we have $u_{1} \in \text{anc}(u_{2})$, $u_{2} \in \text{anc}(u_3)$, 
$\dots$, $u_{k-1} \in \text{anc}(u_{k})$ and so on. Therefore, the predicate of Lemma~\ref{lemma:intersection-predicate-ordered}
is true. So, $\left(\bigcap_{x \in S} x\right) \neq \emptyset$ is true.
\end{proof}

\subsection{Proof of Lemma~\ref{lemma:db-x-sigma-size}}
\label{appendix:lemma:db-x-sigma-size}

{
	\textsc{Lemma 
	\ref{lemma:db-x-sigma-size}}
	Each new relation $\tilde{R}(\tilde{\sigma_i})$ in database $\tilde{\D}_{([X],\sigma)}$ constructed from the database $\D$ following Definition~\ref{definition:one-step-transformation} has the size:
    \begin{itemize}
        \item $O(\lvert R(\sigma_{i}) \rvert \cdot \log^{i} \lvert \mathcal{I}\rvert)$ if $i\in[k - 1]$ and
        \item  $O(\lvert R(\sigma_{i})\rvert \cdot \log^{i - 1} \lvert \mathcal{I} \rvert)$ if $i = k$,
    \end{itemize}
    and can be constructed in time proportional to its size.
}

\begin{proof}
	Given a node $u$ from $V(\mathfrak{T}_{\calI})$ and an integer $i$, let $\mathfrak{F}(u, i)$ denote the set 
	that contains all the tuples $(x_1, \dots, x_{i})$ such that $x_{1} \circ \dots \circ x_{i} = u$.
	
	\begin{claim}
		\label{claim:db-x-sigma-size}
		Let $u \in V(\mathfrak{T}_{\calI})$ and $i$ be an integer.
		The size of $\mathfrak{F}(u, i)$ is $O(\log^{i-1} \lvert \calI\rvert)  = O(\log^{i-1} \lvert \D\rvert)$.
	\end{claim}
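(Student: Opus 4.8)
The plan is to reduce Claim~\ref{claim:db-x-sigma-size} to a one-line combinatorial count. The ingredients I need are all already available: every node $u\in V(\mathfrak{T}_{\calI})$ is a bitstring whose length $\lvert u\rvert$ equals its depth, which is at most the height of $\mathfrak{T}_{\calI}$; since $\mathfrak{T}_{\calI}$ is a complete binary tree on $O(\lvert\calI\rvert)$ leaves, that height is $O(\log\lvert\calI\rvert)$ (as noted right after Property~\ref{property:segment-tree}); and $\lvert\calI\rvert=O(\lvert\D\rvert)$ because $\calI$ consists of intervals drawn from the relations of $\D$, so $\log\lvert\calI\rvert=O(\log\lvert\D\rvert)$ and the two bounds in the claim coincide. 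Throughout, $i$ is treated as a constant, since by construction $i\le k=\lvert\calE_{[X]}\rvert$ depends only on the query and not on $\D$; this is what makes the $O(\cdot)$ bounds meaningful.

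First I would set up a bijection. A tuple $(x_1,\dots,x_i)$ of bitstrings lies in $\mathfrak{F}(u,i)$ iff $x_1\circ\dots\circ x_i=u$, and such a tuple is completely determined by its sequence of fragment lengths $(\ell_1,\dots,\ell_i)$, $\ell_j=\lvert x_j\rvert$: given the $\ell_j$, each $x_j$ is the unique substring of $u$ at the appropriate offset. Conversely, every sequence of non-negative integers summing to $\lvert u\rvert$ arises from exactly one such tuple. Hence
\[
\lvert\mathfrak{F}(u,i)\rvert \;=\; \Bigl\lvert\bigl\{(\ell_1,\dots,\ell_i)\in\mathbb{Z}_{\ge 0}^{i}\;:\;\ell_1+\dots+\ell_i=\lvert u\rvert\bigr\}\Bigr\rvert \;=\; \binom{\lvert u\rvert+i-1}{i-1},
\]
the standard count of compositions of $\lvert u\rvert$ into $i$ non-negative parts. (For $i=1$ the right-hand side is $1$, matching $\mathfrak{F}(u,1)=\{(u)\}$.)

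It then remains to bound this quantity. Since $i$ is constant, $\binom{\lvert u\rvert+i-1}{i-1}\le(\lvert u\rvert+i-1)^{i-1}=O(\lvert u\rvert^{i-1})$, and $\lvert u\rvert=O(\log\lvert\calI\rvert)$ by the height bound above, so $\lvert\mathfrak{F}(u,i)\rvert=O(\log^{i-1}\lvert\calI\rvert)=O(\log^{i-1}\lvert\D\rvert)$, as claimed. There is no genuine obstacle here; the only point that merits a sentence of care is why the exponent is $i-1$ rather than $i$: the last fragment $x_i$ is forced once $x_1,\dots,x_{i-1}$ are fixed, so one of the $i$ length parameters is redundant, and accordingly the number of compositions of an integer $m$ into $i$ non-negative parts is $\binom{m+i-1}{i-1}=\Theta(m^{i-1})$, not $\Theta(m^{i})$. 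I would also make explicit once, for cleanliness, that all $O(\cdot)$'s hold in the data-complexity regime, treating the query and hence $i$ and $k$ as fixed.
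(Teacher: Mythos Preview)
Your proof is correct. The paper actually states Claim~\ref{claim:db-x-sigma-size} without proof inside the argument for Lemma~\ref{lemma:db-x-sigma-size}, so there is no paper proof to compare against; your stars-and-bars count $\lvert\mathfrak{F}(u,i)\rvert=\binom{\lvert u\rvert+i-1}{i-1}=O(\lvert u\rvert^{i-1})$ combined with the $O(\log\lvert\calI\rvert)$ height bound is exactly the intended one-line justification.
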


	We prove each statement separately:
	
	\begin{itemize}
        \item Let $i$ be an integer such that $1 \leq i < k$. 
		The relation $\tilde{R}(\tilde{\sigma}_{i})$
        can be constructed using the following procedure: 
		for each tuple $t \in R(\sigma_{i})$,
		for each $u \in \text{CP}_{\calI}(t([X]))$, for each $(x_1, \dots, x_i) \in \mathfrak{F}(u, i)$
		construct the tuple $\tilde{t}$ over the schema 
		$\tilde{\sigma}_{i} = (\sigma_{i}\setminus \{[X]\}) \cup \{X_{1}, \dots, X_{i}\}$
		such that $\tilde{t}[\tilde{\sigma}_{i}\setminus \{X_1, \dots, X_{i}\}] = t[\sigma_{i}\setminus \{[X]\}]$
		and $\tilde{t}(X_{j}) = x_{j}$ for each $1 \leq j \leq i$.  
		Then, insert the tuple $\tilde{t}$ into $\tilde{R}(\tilde{\sigma}_{i})$.
		
        By Property~\ref{property:segment-tree} (3) we have that $\lvert \text{CP}_{\calI}(t([X]))\rvert = O(\log \lvert \calI \rvert) = O(\log \lvert \D \rvert)$ 
		and by Claim~\ref{claim:db-x-sigma-size} we have that $\lvert \mathfrak{F}(u, i) \rvert = O(\log^{i-1} \lvert \D \rvert)$. 
		Therefore, $\lvert \tilde{R}(\tilde{\sigma}_{i}) \rvert = 
		O(\lvert R(\sigma_{i})\rvert \cdot \log \lvert \D \rvert \cdot \log^{i-1} \lvert \D \rvert)
		= O(\lvert R(\sigma_{i})\rvert \cdot \log^{i} \lvert \D \rvert)$. 
		Furthermore, its construction time is proportional to its size.
		
		\item Let $i = k$. 
		The relation $\tilde{R}(\tilde{\sigma}_{i})$ is constructed using the following procedure: 
		for each tuple in $t \in R(\sigma_{i})$,
		for each $(x_1, \dots, x_i) \in \mathfrak{F}(\text{leaf}(t([X])), i)$
		construct the tuple $\tilde{t}$ over the schema 
		$\tilde{\sigma}_{i} = (\sigma_{i}\setminus \{[X]\}) \cup \{X_{1}, \dots, X_{i}\}$
		such that $\tilde{t}[\tilde{\sigma}_{i}\setminus \{X_{1}, \dots, X_{i}\}] = t[\sigma_{i}\setminus \{[X]\}]$
		and $\tilde{t}(X_{j}) = x_{j}$ for each $1 \leq j \leq i$.    
		Then, insert the tuple $\tilde{t}$ into $\tilde{R}(\tilde{\sigma}_{i})$.
		
		By Claim~\ref{claim:db-x-sigma-size} we have that $\lvert \mathfrak{F}(\text{leaf}(t([X])), i) \rvert = O(\log^{i-1} \lvert \D \rvert)$. 
        Therefore, $\lvert \tilde{R}(\tilde{\sigma}_{i}) \rvert  
		 = O(\lvert R(\sigma_{i})\rvert  \cdot \log^{i-1} \lvert \D \rvert)$. 
		 Furthermore, its construction time is proportional to its size.
    \end{itemize}

\end{proof}

\subsection{Proof of Lemma~\ref{lemma:correctness-one-step}}
\label{appendix:lemma:correctness-one-step}

{
	\textsc{Lemma 
	\ref{lemma:correctness-one-step}}
	Given any \query query $Q$, interval variable $[X]$ in $Q$, 
	and any database $\D$, let the \query query $\tilde{Q}_{[X]}$ and database $\tilde{\D}_{[X]}$ be constructed as per Definitions~\ref{definition:one-step-transformation} and \ref{definition:one-step-rewriting}. 
	Then, $Q(\D)$ is true if and only if $\tilde{Q}_{[X]}(\tilde{\D}_{[X]})$ is true.
}

\begin{proof}

For simplicity let $k = \lvert \calE_{[X]} \rvert$.

$\Rightarrow$:  
Assume that $Q(\D)$ is true. 
That means that there exists a set of tuples $\{t_e \in R(e) \mid e \in \calE \}$ that satisfy 
\[
	\left(\bigcap_{e \in \calE_{[X]}} t_{e}(X)\right) \neq \emptyset,
\]
and also they satisfy the rest of the join conditions of the query $Q$, 
i.e. the (intersection or equality) joins on
the variables in $\calV \setminus \{[X]\}$.

By Definition~\ref{definition:one-step-transformation}, 
for each $i \in [k]$ there exists a tuple
$\tilde{t}_{\tilde{\sigma}_{i}} \in \tilde{R}(\tilde{\sigma}_{i})$, where $\tilde{R}(\tilde{\sigma}_{i})$ is a relation over schema 
$\tilde{\sigma}_{i} = (\sigma_{i} \setminus \{[X]\})\cup \{X_{1}, \dots, X_{i}\}$ in $\tilde{\D}_{([X], \sigma)}$, 
such that $\tilde{t}_{\tilde{\sigma}_{i}}(\tilde{\sigma}_{i} \setminus \{X_{1}, \dots, X_{i}\}) = t_{\sigma_{i}}(\sigma_{i}\setminus\{[X]\})$ 
and $\tilde{t}_{\tilde{\sigma}_{i}}(X_{1}) = x_{1}, \dots, \tilde{t}_{\tilde{\sigma}_{i}}(X_{i}) = x_{i}$. 
Furthermore, by Definition~\ref{definition:one-step-transformation} for each $e \in \calE \setminus \calE_{[X]}$ we have $R(e) \in \tilde{\D}_{([X], \sigma)}$.
For each $i \in [k]$ we have $\tilde{t}_{\tilde{\sigma}_{i}}(X_{i}) =
\tilde{t}_{\tilde{\sigma}_{i+1}}(X_{i}) =
\dots = \tilde{t}_{\tilde{\sigma}_{k}}(X_{i})$,
i.e. the tuples $\tilde{t}_{\tilde{\sigma}_{1}}, \dots, \tilde{t}_{\tilde{\sigma}_{k}}$ satisfy 
the equi-join conditions for the variables $X_{1}, \dots, X_{k}$ in the query $\tilde{Q}_{([X], \sigma)}$. 
Furthermore, all the tuples $\tilde{t}_{\tilde{\sigma}_{1}}, \dots, \tilde{t}_{\tilde{\sigma}_{k}}$ and $\tilde{t}_{e}$ for each $e \in \calE\setminus \calE_{[X]}$
satisfy the rest of the join conditions of query $\tilde{Q}_{([X], \sigma)}$ i.e. the (intersection or equality) joins on
variables in $\calV \setminus \{[X]\}$. 
Hence, $\tilde{Q}_{([X], \sigma)}(\tilde{\D}_{([X], \sigma)})$ is true. Therefore, $\tilde{Q}_{[X]}(\tilde{\D}_{[X]})$ is true.

$\Leftarrow$: 
Assume that $\tilde{Q}_{[X]}(\tilde{\D}_{[X]})$ is true.  
That means that there exists a permutation $\sigma \in \pi(\calE_{[X]})$ and
there exist tuples $\tilde{t}_{\tilde{\sigma}_{i}} \in \tilde{R}(\tilde{\sigma}_{1}), \dots, \tilde{t}_{\tilde{\sigma}_{k}} \in \tilde{R}(\tilde{\sigma}_{k})$ 
and $\tilde{t}_{e} \in R(e)$ for each $e \in \calE\setminus \calE_{[X]}$
such that for each $i \in [k]$ we have
$\tilde{t}_{\tilde{\sigma}_{i}}(X_{i}) =
\tilde{t}_{\tilde{\sigma}_{i+1}}(X_{i}) =
\dots = \tilde{t}_{\tilde{\sigma}_{k}}(X_{i})$  
and also the tuples satisfy the rest of the join conditions of the query $\tilde{Q}_{[X]}$, 
i.e. the (intersection or equality) joins on variables in $\calV \setminus \{[X]\}$.

By Definition~\ref{definition:one-step-transformation}, for each $i \in [k]$ there exists a tuple $t_{\sigma_{i}} \in R(\sigma_{i})$,
where $R(\sigma_{i})$ is a relation over schema $\sigma_{i} = (\tilde{\sigma_{i}} \cup \{[X]\}) \setminus \{X_{1}, \dots, X_{k}\}$ 
in $\D$, such that $t_{\sigma_{i}}(\sigma_{i} \setminus \{[X]\}) = \tilde{t}_{\tilde{\sigma}_{i}}(\tilde{\sigma}_{i} \setminus \{X_{1}, \dots, X_{i}\})$
and $\tilde{t}_{\tilde{\sigma}_{i}}(X_{1}) \circ \dots \circ \tilde{t}_{\tilde{\sigma}_{i}}(X_{i}) \in \text{CP}_{\calI}(t_{\sigma_{i}}(X))$ 
for each $i \in [k-1]$ and $\tilde{t}_{\tilde{\sigma}_{i}}(X_{1}) \circ \dots \circ \tilde{t}_{\tilde{\sigma}_{i}}(X_{i}) = \text{leaf}(t_{\sigma_{i}}(X))$ for $i=k$. 
Furthermore, by Definition~\ref{definition:one-step-transformation} we have $R(e) \in \D$ for each $e \in \calE$.
By Lemma~\ref{lemma:from-intersections-to-equalities}, the predicate 
\[
	\left(\bigcap_{i \in [k]} t_{\sigma_{i}}(X) \right) \neq \emptyset
\]
is true. Therefore, the tuples $t_{\sigma_{1}}, \dots, t_{\sigma_{k}}$ satisfy the intersection join condition on variable $[X]$ in query $Q$.
Furthermore, all the tuples $t_{\sigma_{1}}, \dots, t_{\sigma_{k}}$ and $t_{e}$ for each $e \in \calE \setminus \calE_{[X]}$ 
satisfy the rest of the join conditions of $Q$, i.e. the (intersection or equality) joins on
variables in $\calV \setminus \{[X]\}$.  
Hence $Q(\D)$ is true.
\end{proof}

\subsection{Proof of Theorem~\ref{theorem:reduction-correctness}}
\label{appendix:theorem:theorem:reduction-correctness}

{
	\textsc{Theorem 
	\ref{theorem:reduction-correctness}}
	For any $\bcqij$ query $Q$ with hypergraph $\calH$ and any database $\D$, 
	it holds that $Q(\D)$ is true if and only if $\bigvee_{\tilde{Q}\in\tilde{\mathbf{Q}}} \tilde{Q}(\tilde{\mathbf{D}})$ is true, 
	where $(\tilde{\mathbf{H}},\tilde{\mathbf{Q}},\tilde{\mathbf{D}}) = \textsc{Reduce}(\{\calH\}, \{Q\},\D)$.
}

\begin{proof}
	Without loss of generality assume that $Q$ includes the interval variables $[X_{1}], \dots, [X_{n}]$ and that the 
	procedure $\textsc{Reduce}(\calH, Q, \D)$ iterates over them in the listed order.  
	We use a proof by induction.
	Let $P(j)$ denote the statement $Q(\D)$ if and only if $\bigvee_{\tilde{Q}\in\tilde{\mathbf{Q}}} \tilde{Q}(\tilde{\mathbf{D}})$ 
	after the $j$-th iteration of the reduction (procedure $\textsc{Reduce}(\{\calH\}, \{Q\}, \D)$, Algorithm~\ref{algorithm:reduction}).

	\textbf{Base case.} We prove that $P(1)$ is true. Note that $P(1)$ is equivalent to the statement $Q(\D)$ 
	if and only if $\bigvee_{\tilde{Q}\in\tilde{\mathbf{Q}}} \tilde{Q}(\tilde{\mathbf{D}})$ 
	after the $1$-st iteration of the reduction. 
	During the $1$-st iteration of the reduction we have $\tilde{\Q}_{0} = \{Q\}, \tilde{\Q} = \emptyset$ and $\tilde{\D} = \D$.
	Hence, after the $1$-st iteration we have 
	\begin{equation}
		\label{equation:reduction-correctness-1}
		\tilde{\Q} 
		= \bigcup_{Q \in \tilde{\Q}_{0}} \left(\bigcup_{\sigma \in \pi(\calE_{[X_{1}]})} \{\tilde{Q}_{([X_1], \sigma)}\} \right)
		= \bigcup_{Q \in \{Q\}} \left(\bigcup_{\sigma \in \pi(\calE_{[X_{1}]})} \{\tilde{Q}_{([X_1], \sigma)}\} \right)
		= \bigcup_{\sigma \in \pi(\calE_{[X_{1}]})} \{\tilde{Q}_{([X_1], \sigma)}\}
	\end{equation}
	where $\tilde{Q}_{([X_1], \sigma)}$ follows Definition~\ref{definition:query-onestepreduction}, and 
	$\tilde{\D} = \tilde{\D}_{[X_{1}]}$ where $\tilde{\D}_{[X_{1}]}$ follows Definition~\ref{definition:one-step-transformation}.
	Hence, by Equation~\eqref{equation:reduction-correctness-1} we have
	\begin{equation}
		\label{equation:reduction-correctness-2}
		\bigvee_{\tilde{Q} \in \tilde{\Q}} \tilde{Q}(\tilde{\D}) 
		\equiv
		\bigvee_{\sigma \in \pi(\calE_{[X_{1}]})} \tilde{Q}_{([X_1], \sigma)}(\tilde{\D}_{[X_1]})
		\equiv
		\tilde{Q}_{[X_{1}]}(\tilde{\D}_{[X_1]}).
	\end{equation}
	The second equivalence is due to Definition~\ref{definition:query-onestepreduction}.
	By Lemma~\ref{lemma:correctness-one-step} we have $\tilde{Q}_{[X_{1}]}(\tilde{\D}_{[X_1]}) \equiv Q(\D)$. 
	Hence, $P(1)$ is true.

	\textbf{Inductive step.} The statement $P(j)$ is equivalent to the statement $Q(\D)$ if and only if $\bigvee_{\tilde{Q}\in\tilde{\mathbf{Q}}} \tilde{Q}(\tilde{\mathbf{D}})$ 
	after the $j$-th iteration of the reduction. 
	We prove that if $P(j)$ is true then $P(j+1)$ is true for any $1 \leq j < n$.
	
	Assume $P(j)$ is true. Let $\tilde{\Q}_{0}^{\prime}$ denote the $\tilde{\Q}_{0}$, $\tilde{\Q}^{\prime}$ denote the $\tilde{\Q}$ 
	and $\tilde{\D}^{\prime}$ denote $\tilde{\D}$ during the $j$-th iteration. After the $j$-th iteration we have 
	\begin{equation}
		\tilde{\Q}^{\prime} = \bigcup_{Q^{\prime} \in \tilde{\Q}_{0}^{\prime}} 
		\left(
			\bigcup_{\sigma^{\prime} \in \pi(\calE_{[X_{j}]})} \{\tilde{Q}^{\prime}_{([X_j], \sigma^{\prime})}\}
		\right)
	\end{equation}
	where $\tilde{Q}^{\prime}_{([X_{j}], \sigma^{\prime})}$ follows Definition~\ref{definition:query-onestepreduction},
	and $\tilde{\D} = \tilde{\D}^{\prime}_{[X_{j}]}$ where $\tilde{\D}^{\prime}_{[X_{j}]}$ follows Definition~\ref{definition:one-step-transformation}.  
	During the $j+1$-th iteration we have $\tilde{\Q}_{0} = \tilde{\Q}^{\prime}$, $\tilde{\Q}^{\prime} = \emptyset$.
	Hence, after the $j+1$-th iteration we have:
	\begin{equation}
		\label{equation:reduction-correctness-3}
		\tilde{\Q} 
		= \bigcup_{Q \in \tilde{\Q}_{0}} \left(\bigcup_{\sigma \in \pi(\calE_{[X_{j+1}]})} \{\tilde{Q}_{([X_{j+1}], \sigma)}\}\right)
		= \bigcup_{Q^{\prime} \in \tilde{\Q}_{0}^{\prime}} 
		\left( 
		\bigcup_{\sigma^{\prime} \in \pi(\calE_{[X_{j}]})}	
		\left( 
			\bigcup_{\sigma \in \pi(\calE_{[X_{j+1}]})} \{\tilde{J}_{([X_{j+1}], \sigma)}\}
		\right)
		\right)
	\end{equation}
	where $\tilde{J} = \tilde{Q}^{\prime}_{([X_j], \sigma^{\prime})}$, and $\tilde{J}_{([X_{j+1}], \sigma)}$ follows Definition~\ref{definition:query-onestepreduction}, and 
	$\tilde{\D} = \tilde{\D}_{[X_{j+1}]}$ where $\tilde{\D}_{[X_{j+1}]}$ follows Definition~\ref{definition:one-step-transformation}.
	Hence, by Equation~\eqref{equation:reduction-correctness-3} we have:
	\begin{equation}
		\begin{split}
		\bigvee_{\tilde{Q} \in \tilde{\Q}} \tilde{Q}(\D)
		& \equiv
		\bigvee_{Q^{\prime} \in \tilde{\Q}_{0}^{\prime}} 
		\left( 
		\bigvee_{\sigma^{\prime} \in \pi(\calE_{[X_{j}]})}	
		\left( 
			\bigvee_{\sigma \in \pi(\calE_{[X_{j+1}]})} \tilde{J}_{([X_{j+1}], \sigma)}(\tilde{\D}_{[X_{j+1}]})
		\right)
		\right)
		\\ 
		& \equiv 
		\bigvee_{Q^{\prime} \in \tilde{\Q}_{0}^{\prime}} 
		\left( 
		\bigvee_{\sigma^{\prime} \in \pi(\calE_{[X_{j}]})}	
			\tilde{J}_{[X_{j+1}]}(\tilde{\D}_{[X_{j+1}]})
		\right).
	\end{split}
	\end{equation}
	The second equivalence is obtained by Definition~\ref{definition:query-onestepreduction}.
	Since $\tilde{J} = \tilde{Q}^{\prime}_{([X_j], \sigma^{\prime})}$, by Lemma~\ref{lemma:correctness-one-step}, we have that 
	$\tilde{J}_{[X_{j+1}]}(\tilde{\D}_{[X_{j+1}]}) \equiv  \tilde{Q}^{\prime}_{([X_{j}], \sigma^{\prime})}(\tilde{\D}_{[X_{j}]})$.
	Hence, we have:

	\begin{equation}
		\begin{split}
		\bigvee_{\tilde{Q} \in \tilde{\Q}} \tilde{Q}(\D)
		\equiv
		\bigvee_{Q^{\prime} \in \tilde{\Q}_{0}^{\prime}} 
		\left( 
		\bigvee_{\sigma^{\prime} \in \pi(\calE_{[X_{j}]})}	
			\tilde{Q}^{\prime}_{([X_{j}], \sigma^{\prime})}(\tilde{\D}_{[X_{j}]})
		\right)
		\equiv 
		\tilde{\Q}^{\prime}(\tilde{\D}^{\prime})
		\equiv
		Q(\D)
	\end{split}
	\end{equation}
	The third equivalence is due to the assumption that $P(j)$ is true.
	Therefore, the statement $P(j+1)$ also holds true.

	\textbf{Conclusion.}
	Since both the base case and the inductive step have been proved as true, 
	by induction the statement $P(n)$ is true. 
\end{proof}

\subsection{Proof of Theorem~\ref{theorem:complexity}}

{
	\textsc{Theorem 
	\ref{theorem:complexity}}
	Given any \bcqij query $Q$ with hypergraph $\calH$ and database $\D$, 
	$Q(\D)$ can be computed in time $O(|\D|^{\ijw(\calH)}\cdot\polylog |\D|)$.
}

\begin{proof}
	By Theorem~\ref{theorem:reduction-correctness} we have $Q(\D)$ 
	if and only if 
	\begin{equation}
		\label{equation:theorem-complexity-1}
		\bigvee_{\tilde{Q}\in\tilde{\mathbf{Q}}} \tilde{Q}(\tilde{\D}), 
	\end{equation}
	where $(\tilde{\mathbf{H}},\tilde{\mathbf{Q}},\tilde{\mathbf{D}}) = \textsc{Reduce}(\{\calH\}, \{Q\},\D)$.
	Therefore, the upper bound for the computation of $Q(\D)$ is given by the upper bound of 
	the query with the maximum upper bound among the queries in the disjunction of Equation~\eqref{equation:theorem-complexity-1}. 
	The query with the maximum upper bound among the queries in the disjunction of Equation~\eqref{equation:theorem-complexity-1}, is the one
	whose hypergraph has the maximum submodular width~\cite{pods/Khamis0S17}. Hence, the time complexity of $Q(\D)$
	is upper bounded by 
	\[
		Q(\lvert \D \rvert^{\max_{\tilde{\mathcal{H}} \in \tau(\mathcal{H})} 
		\subw(\tilde{\mathcal{H}})} \cdot \polylog \lvert \D \rvert),
	\]
	given that $\tau(\calH)$ is the set of hypergraphs that correspond to the queries in the disjunction 
	of Equation~\eqref{equation:theorem-complexity-1}. 
	By Definition~\ref{definition:ij-width} we have:
	\[
		O(\lvert \D \rvert^{\max_{\tilde{\mathcal{H}} \in \tau(\mathcal{H})} 
		\subw(\tilde{\mathcal{H}})} \cdot \polylog \lvert \D \rvert) = 
		O(\lvert \D \rvert^{\ijw(\mathcal{H}) } \cdot \polylog \lvert \D \rvert)
	\]
	Hence, $Q(\D)$ can be computed in time $O(\lvert \D \rvert^{\ijw(\mathcal{H}) } \cdot \polylog \lvert \D \rvert)$.
\end{proof}

\section{Missing Details from Section~\ref{section:reverse-reduction}}
\label{appendix:reverse:reduction}

\nop{
So far, we have seen how to create an algorithm for solving any $\bcqij$
query $Q$ with hypergraph $\calH = (\calV, \calE)$ over any database  $\D$. This algorithm is based on our reduction and the best known algorithms for $\bcq$ queries, such as those from $\tilde{\mathbf{Q}}$, where
$(\tilde{\mathbf{H}},\tilde{\mathbf{Q}},\tilde{\mathbf{D}}) = \textsf{REDUCE}(\{\calH\}, \{Q\},\D)$.
A natural question is how good is this algorithm? We show next, by using a backward reduction, that our
algorithm is the best possible algorithm, modulo a poly-logarithmic factor, given optimal solutions to the \bcq queries $\tilde Q\in\tilde \Q$. That is, we show that any lower bound time complexity for the
$\bcq$ queries from $\tilde{\mathbf{Q}}$ is also a lower bound time
complexity for $Q$. And since our reduction solves $Q$ in a time
complexity equal to this lower bound time
complexity~\footnote{Subject to existence of the best possible
algorithms that solve the queries from $\tilde{\mathbf{Q}}$ in the
corresponding lower bound time complexities.}, our algorithm based on
our reduction is the best possible algorithm in terms of time
complexity, modulo a poly-logarithmic factor.\\
}

{
	\textsc{Theorem~\ref{theorem:lower-bound}}
	Let $Q$ be any self-join-free $\bcqij$ query with hypergraph $\calH$. Let $\tilde{Q}$ be any $\bcq$ query whose hypergraph is in $\tau(\calH)$. For any database $\tilde\D$, let $\Omega(T(|\tilde\D|))$ be a lower bound on the time complexity for computing $\tilde{Q}$, where $T$ is a function of the size of the database $\tilde\D$.
	There cannot be an algorithm $\mathcal{A}_Q$ that computes $Q(\D)$ in time $o(T(|\D|))$ (i.e., asymptotically strictly smaller), for any database $\D$.
}

\begin{proof}
Suppose, for a contradiction, that there is such an algorithm $\mathcal{A}_Q$.
We will show that we can construct an algorithm $\mathcal{A}_{\tilde{Q}}$
based on $\mathcal{A}_Q$ that can solve $\tilde{Q}(\tilde\D)$ in time
complexity $o(T(|\tilde\D|))$ (i.e., asymptotically
strictly smaller), for any input database  $\tilde\D$ of $\tilde{Q}$.

Let $\tilde\D$ be any input database for $\bcq$ query $\tilde{Q}$.
We will base our construction on the structure of the segment tree.
Let $\text{brep}(x)$ be the binary representation of the natural number $x$.
WLOG we can assume that each value in $\tilde\D$ is a binary string of length exactly $b$ for some
constant $b$.
Let $n = |\calE| = |\tilde{\calE}|$.


Consider a slightly modified version of a perfect segment tree with
$2^{n \cdot b}$ leaves (so with height equal to $n \cdot b$)  where, for
each node $u$, we have $\text{seg}(u) := [x, y]$, where $x$ and $y$
are natural numbers such that
$\text{brep}(x) := \onestring \circ u \circ \zerostring^\ell$ and
$\text{brep}(y) := \onestring \circ u \circ \onestring^\ell$,
where $\ell := n \cdot b - |u|$ and $\zerostring^\ell$ represents the string
$\zerostring$ repeated $\ell$ times (and the same for $\onestring^\ell$).
Figure~\ref{fig:reverse-reduction-segment-tree} depicts this segment
tree for $n = 2$ and $b = 2$. 
Note that all the properties of a segment tree that are relevant for this proof hold for this version too.

\tikzset{every node/.style={rectangle, align=center, font=\footnotesize, inner sep=2pt}}

\begin{figure}
    \centering
    
\begin{tikzpicture}[->,
 level 1/.style={sibling distance=88mm},
 level 2/.style={sibling distance=44mm},
 level 3/.style={sibling distance=22mm},
  level 4/.style={sibling distance=11mm},
  level distance=17mm
]
\node [] {$\varepsilon$\\$[16, 31]$}
  	child {node [] {0\\$[16, 23]$}   
    	child {node [] {00\\$[16, 19]$}
      		child {node [] {000\\$[16, 17]$}
          		child {node[] {0000\\$[16, 16]$}} 
          		child {node[] {0001\\$[17, 17]$}}
    	    }
      		child {node [] {001\\$[18, 19]$}
          		child {node[] {0010\\$[18, 18]$}} 
          		child {node[] {0011\\$[19, 19]$}}
    	    }
    	}
    	child {node [] {01\\$[20, 23]$}
      		child {node [] {010\\$[20, 21]$}
          		child {node [] {0100\\$[20, 20]$}}
          		child {node [] {0101\\$[21, 21]$}}
    	    }
      		child {node [] {011\\$[22, 23]$}
          		child {node [] {0110\\$[22, 22]$}}
          		child {node [] {0111\\$[23, 23]$}}
    	    }
    	}
    }
  	child {node [] {1\\$[24, 31]$}
    	child {node [] {10\\$[24, 27]$}
      		child {node [] {100\\$[24, 25]$}
          		child {node [] {1000\\$[24, 24]$}}
          		child {node [] {1001\\$[25, 25]$}}
    	    }
      		child {node [] {101\\$[26, 27]$}
          		child {node [] {1010\\$[26, 26]$}}
          		child {node [] {1011\\$[27, 27]$}}
    	    }
    	}
  		child {node [] {11\\$[28 31]$}
    		child {node [] {110\\$[28, 29]$}
          		child {node [] {1100\\$[28, 28]$}}
          		child {node [] {1101\\$[29, 29]$}}
    	    }
    		child {node [] {111\\$[30, 31]$}
         		child {node [] {1110\\$[30, 30]$}}
         		child {node [] {1111\\$[31, 31]$}}
    	    }
  		}
	};
\end{tikzpicture}
    
    \caption{A slightly modified version of a perfect segment tree with $2^{n \cdot b}$ leaves, where $n=2$ and $b=2$, to be used as a tool to prove Theorem~\ref{theorem:lower-bound}.}
    \label{fig:reverse-reduction-segment-tree}
	\Description[]{}
\end{figure}
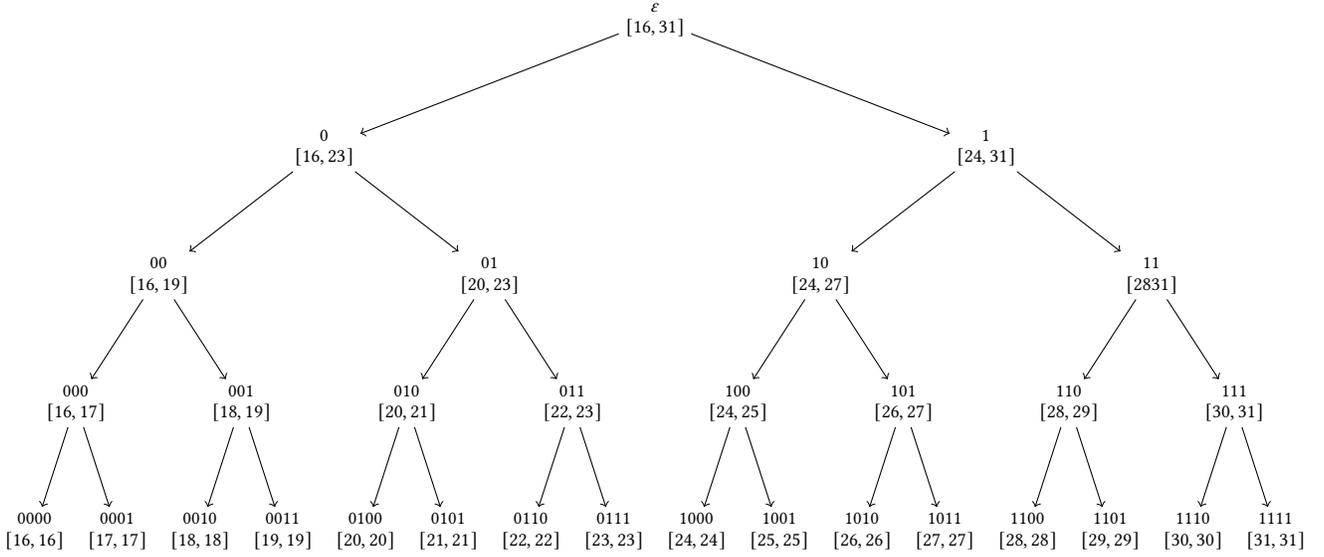

\begin{remark}
We will not construct this segment tree explicitly, since its size is $O(|\tilde\D|^n)$ and it thus cannot be constructed explicitly in the desired runtime bound. It will only be used as a theoretical tool for the proof. We choose this version of the segment tree because it enables us to easily compute the segment corresponding to any node, without having to explicitly construct the tree upfront.
\end{remark}

Similar to the one-step forward reduction from Section~\ref{section:reduction-onestep}, we define
a one-step backward reduction and then apply it repeatedly.

\begin{definition}[One-step backward database transformation]
    Given an \query $Q$, let $[X]$ be an interval variable of $Q$ and let $\sigma\in\pi(\calE_{[X]})$.
    Let $\tilde Q_{([X], \sigma)}$ be the \query resulting from the one-step query rewriting from
    Definition~\ref{definition:one-step-rewriting}.
    Let $\tilde \D_{([X], \sigma)}$ be an arbitrary database instance over the schema of $\tilde Q_{([X], \sigma)}$.
    We construct a database instance $\D$ over the schema of $Q$ as follows.
    For each tuple $\tilde t\in\tilde{R}(\tilde{\sigma_i})$, we construct a tuple $t\in R(\sigma_i)$ such that:
    \begin{itemize}
    	\item $t(\sigma_i\setminus\{[X]\}) = \tilde t (\tilde{\sigma_i}\setminus\{X_1,\ldots,X_i\})$
    	\item $t([X]) = \text{seg}(\tilde t(X_1)\circ\cdots\circ \tilde t(X_i))$
    \end{itemize}
    All relations in $\tilde{\D}_{([X],\sigma)}$ other than $\{R(\sigma_i) \suchthat i \in[k]\}$
    where $k := |\calE_{[X]}|$
    are copied directly to $\D$.
    \label{defn:reverse-one-step}
\end{definition}
Note that by the above definition, we have a bijection between tuples $\tilde t$ from $\tilde \D$
and tuples $t$ from $\D$. Let $g$ be a function that maps a tuple $\tilde t$ in $\tilde \D$
to the corresponding tuple $t$ from $\D$.
Because of this bijection, we also have $|\D| = |\tilde \D_{([X], \sigma)}|$.
\begin{claim}
    Given $Q, \D, \tilde Q_{([X], \sigma)}$ and $\tilde \D_{([X], \sigma)}$ from Definition~\ref{defn:reverse-one-step},
    $Q(\D)$ holds if and only if $\tilde Q_{([X], \sigma)}(\tilde \D_{([X], \sigma)})$ holds.
    \label{clm:reverse-one-step}
\end{claim}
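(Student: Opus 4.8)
The plan is to mirror the structure of the proof of Lemma~\ref{lemma:correctness-one-step}, but running the equivalence in the reverse direction using the bijection $g$ between tuples of $\tilde\D_{([X],\sigma)}$ and tuples of $\D$. The key observation that makes the whole argument go through is the following fact about the modified segment tree from Figure~\ref{fig:reverse-reduction-segment-tree}: for any two nodes $u_1, u_2$, the segments $\seg(u_1)$ and $\seg(u_2)$ intersect if and only if one of $u_1, u_2$ is a prefix of the other. This holds because a node $u$ at depth $d$ has $\seg(u) = [\,\onestring\circ u \circ\zerostring^\ell,\ \onestring\circ u\circ\onestring^\ell\,]$ with $\ell = n\cdot b - |u|$, so $\seg(u_1)\cap\seg(u_2)\neq\emptyset$ forces one of the bitstrings $\onestring\circ u_1$, $\onestring\circ u_2$ to be a prefix of the other, hence one of $u_1$, $u_2$ to be a prefix of the other. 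This is exactly Property~\ref{property:segment-tree}(1) transported to this version of the tree, and it is all we need.

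First I would fix a permutation $\sigma\in\pi(\calE_{[X]})$ and set $k := |\calE_{[X]}|$. For the forward direction, suppose $\tilde Q_{([X],\sigma)}(\tilde\D_{([X],\sigma)})$ holds: there are witnessing tuples $\tilde t_{\tilde\sigma_i}\in\tilde R(\tilde\sigma_i)$ for $i\in[k]$ and $\tilde t_e\in R(e)$ for $e\in\calE\setminus\calE_{[X]}$ satisfying the equi-joins on $X_1,\dots,X_k$ and the remaining joins on $\calV\setminus\{[X]\}$. Let $t_{\sigma_i} := g(\tilde t_{\tilde\sigma_i})$. The equi-join conditions give a single well-defined bitstring $y_j$ for each $j\in[k]$ with $\tilde t_{\tilde\sigma_i}(X_j) = y_j$ for all $i\geq j$; hence for each $i$, the node $u_i := y_1\circ\cdots\circ y_i$ is well-defined and $t_{\sigma_i}([X]) = \seg(u_i)$ by Definition~\ref{defn:reverse-one-step}. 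Since $u_1$ is a prefix of $u_2$ is a prefix of $\cdots$ of $u_k$, all the segments $\seg(u_1),\dots,\seg(u_k)$ lie on a common root-to-node chain, so they have nonempty intersection (the smallest one, $\seg(u_k)$, is contained in all of them). Thus the intersection-join condition on $[X]$ is satisfied by $t_{\sigma_1},\dots,t_{\sigma_k}$, and since $g$ preserves the values of all non-$[X]$ attributes and copies the other relations verbatim, the remaining joins still hold; hence $Q(\D)$ holds.

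For the converse, suppose $Q(\D)$ holds with witnessing tuples $t_e\in R(e)$ for $e\in\calE$. For $e\in\calE_{[X]}$, write $e = \sigma_i$ and let $\tilde t_{\tilde\sigma_i} := g^{-1}(t_{\sigma_i})$; for $e\notin\calE_{[X]}$ the relation is shared, so keep $\tilde t_e := t_e$. Each interval $t_{\sigma_i}([X])$ equals $\seg(u_i)$ for some node $u_i$ of the modified tree (by construction of $\D$ via $g$), and the intersection-join condition $\bigcap_{i\in[k]}\seg(u_i)\neq\emptyset$ together with the prefix characterisation above forces the $u_i$ to be pairwise prefix-comparable, i.e.\ to lie on a common root-to-leaf path; the bijection fixes which node goes with which relation, so after possibly noting that $\sigma$ was chosen to match the path order (this matching is exactly the role $\sigma$ plays), we get $u_1$ a prefix of $u_2$, \dots, $u_{k-1}$ a prefix of $u_k$. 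Writing each $u_i$ as a concatenation $y_1\circ\cdots\circ y_i$ of the incremental bitstrings, and reading off $\tilde t_{\tilde\sigma_i}(X_j) := y_j$, we obtain tuples satisfying the equi-joins on $X_1,\dots,X_k$ (the $y_j$ are shared across all $\tilde t_{\tilde\sigma_i}$ with $i\geq j$ by construction), and the non-$[X]$ joins carry over through $g$; hence $\tilde Q_{([X],\sigma)}(\tilde\D_{([X],\sigma)})$ holds. Finally $|\D| = |\tilde\D_{([X],\sigma)}|$ is immediate from the bijection $g$.

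The main obstacle I anticipate is the bookkeeping around the permutation $\sigma$: in the $\Leftarrow$ direction one must argue that the prefix-comparable nodes $u_1,\dots,u_k$ can be listed in an order consistent with both the path in the tree and the labelling $\sigma$ of the hyperedges, and that the split of each $u_i$ into incremental pieces $y_1,\dots,y_i$ is consistent across different $i$ (so that the equi-joins on the $X_j$ genuinely hold). This is precisely the mirror of the subtlety handled in the forward reduction, and the self-join-freeness assumption of Theorem~\ref{theorem:lower-bound} (under which this claim will ultimately be applied) is what guarantees $g$ is a genuine per-relation bijection with no ambiguity. Everything else is a routine transcription of the forward-reduction argument with $\CP_\calI$ replaced by "node whose segment equals the given interval" and with Property~\ref{property:segment-tree}(1) replaced by its analogue for the modified tree.
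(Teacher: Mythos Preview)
Your $\Leftarrow$ direction (from $\tilde Q_{([X],\sigma)}$ to $Q$) is correct and matches the paper's argument.

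The $\Rightarrow$ direction has a real gap. You correctly observe that the intersection of the intervals $\seg(u_i)$ forces the nodes $u_i$ to be pairwise prefix-comparable. But you then write ``after possibly noting that $\sigma$ was chosen to match the path order'' to conclude that $u_1$ is a prefix of $u_2$ is a prefix of $\cdots$ of $u_k$. This move is not available: $\sigma$ is \emph{fixed} in the statement of the claim, not chosen after the fact. Prefix-comparability alone does not tell you \emph{which} $u_i$ is a prefix of which; nothing so far rules out, say, $u_2$ being a strict prefix of $u_1$, in which case the equi-join $\tilde t_{\tilde\sigma_1}(X_1) = \tilde t_{\tilde\sigma_2}(X_1)$ would simply fail even though $Q(\D)$ holds. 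You also cannot ``read off $\tilde t_{\tilde\sigma_i}(X_j) := y_j$'': those values are already determined by $g^{-1}$, and the task is to \emph{verify} they agree, not assign them. You flag both concerns in your final paragraph but then misattribute their resolution to self-join-freeness, which plays no role at this step.

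The ingredient you are missing is the WLOG assumption, made just before Definition~\ref{defn:reverse-one-step} in the proof of Theorem~\ref{theorem:lower-bound}, that every value in $\tilde\D$ is a binary string of the \emph{same fixed length} $b$. Under this assumption $|u_i| = i\cdot b$, so the lengths strictly increase with $i$ and the prefix order among the $u_i$ is forced to coincide with the index order $1<2<\cdots<k$. Moreover, since each $\tilde t_{\tilde\sigma_i}(X_j)$ has length $b$, the decomposition $u_i = \tilde t_{\tilde\sigma_i}(X_1)\circ\cdots\circ\tilde t_{\tilde\sigma_i}(X_i)$ is the unique partition of $u_i$ into length-$b$ blocks; combined with $u_j$ being a prefix of $u_i$, this gives $\tilde t_{\tilde\sigma_i}(X_j) = \tilde t_{\tilde\sigma_{i'}}(X_j)$ for all $i,i'\geq j$ automatically. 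This fixed-length trick is precisely how the paper closes the $\Rightarrow$ direction.
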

First we prove both directions of the above claim:
\begin{proof}[Proof of Claim~\ref{clm:reverse-one-step}]
I) $\Leftarrow$: Assume that $\tilde Q_{([X], \sigma)}(\tilde\D_{([X], \sigma)})$ holds.
Let $\tilde \calH_{([X], \sigma)}$ be the corresponding hypergraph whose hyperedges are
$\tilde\calE_{([X], \sigma)}$.
This means that each relation
$\tilde R(\tilde e)$ in $\tilde\D_{([X], \sigma)}$
contains a tuple $\tilde t_{\tilde e}$ such that the tuples $\left(\tilde t_{\tilde e}\right)_{\tilde e \in \tilde\calE_{([X], \sigma)}}$ satisfy $\tilde Q_{([X], \sigma)}$.
Let $k := |\calE_{[X]}|$.
There must exist a tuple $(x_1, \ldots, x_k)$ where for each $i \in [k]$, we have
$\tilde t_{\tilde \sigma_i}(X_1) = x_1, \ldots, \tilde t_{\tilde \sigma_i}(X_i) = x_i$.
Therefore the binary strings $\{\tilde t_{\tilde \sigma_i}(X_1)\circ\cdots\circ \tilde t_{\tilde \sigma_i}(X_i)\suchthat i \in [k]\}$
are a prefix of one another.
By Property (1) from~\ref{property:segment-tree}, this means that the line segments
$\{\text{seg}(\tilde t_{\tilde \sigma_i}(X_1)\circ\cdots\circ \tilde t_{\tilde \sigma_i}(X_i))\suchthat i \in [k]\}$ intersect.
Therefore, the tuples $\left(t_e := g(\tilde t_{\tilde e})\right)_{\tilde e \in \tilde\calE_{([X], \sigma)}}$ satisfy $Q$.

II) $\Rightarrow$: Assume that $Q(\D)$ is true.
This means that each relation $R_e$ contains a tuple $t_e$ such that the tuples
$(t_e)_{e \in \calE}$ satisfy $Q$.
In particular, the intervals
$\{t_{\sigma_i}([X])\suchthat i \in [k]\}$ intersect. By Definition~\ref{defn:reverse-one-step},
the tuples $(\tilde t_{\tilde \sigma_i}:= g^{-1}(t_{\sigma_i})) \in \tilde R(\tilde \sigma_i)$ satisfy
$t_{\sigma_i}([X]) = \text{seg}(\tilde t_{\tilde \sigma_i}(X_1)\circ\cdots\circ \tilde t_{\tilde \sigma_i}(X_i))$ for $i \in [k]$.
By Property (1) from~\ref{property:segment-tree}, the binary strings
$\{\tilde t_{\tilde\sigma_i}(X_1)\circ\cdots\circ \tilde t_{\tilde\sigma_i}(X_i) \suchthat i \in [k]\}$ are a prefix of one another.
Moreover for each $i\in [k]$, the binary strings $\tilde t_{\tilde\sigma_i}(X_1), \ldots, \tilde t_{\tilde\sigma_i}(X_i)$
have the same length which is $b$ (by our assumption about $\tilde \D$).
Hence, for each $j \in [k]$, $\tilde t_{\tilde\sigma_j}(X_j)= \tilde t_{\tilde\sigma_{j+1}}(X_{j}) = \cdots = \tilde t_{\tilde\sigma_k}(X_{j})$.
Therefore, the tuples $\left(\tilde t_{\tilde e}=g^{-1}(t_e)\right)_{e\in\calE}$ satisfy $\tilde Q_{([X], \sigma)}$.
\end{proof}

Finally we use the above claim to finish the proof of Theorem~\ref{theorem:lower-bound}.
By repeatedly applying the above reduction on the \bcq~query $\tilde Q$ and its database instance $\tilde\D$,
we construct a database instance $\D$ such that $Q(\D)$ holds if and only if $\tilde Q(\tilde \D)$ holds.
Moreover by Definition~\ref{defn:reverse-one-step}, we have $|\D|= |\tilde \D|$.
Now we use the algorithm $\mathcal{A}_Q$ to answer $Q(\D)$ in time $o(T(|\D|))$, thus resulting in an
algorithm $\mathcal{A}_{\tilde{Q}}$ that can answer $\tilde Q(\tilde \D)$ in time $o(T(|\tilde\D|))$.
But this is a contradiction since $\tilde Q(\tilde \D)$ has a lower bound of $\Omega(T(|\tilde \D|))$.
\end{proof}

\section{Missing Details from Section~\ref{section:iota-acyclicity}}
\label{appendix:iota-acyclicity}

In this section we prove the statements of Section~\ref{section:iota-acyclicity}. 
Consider an \bcqij $Q$ as per Definition~\ref{definition:bcqij}, and let $\mathcal{H} = (\mathcal{V}, \mathcal{E})$ be the hypergraph of $Q$.
We denote the vertices of $\mathcal{H}$ by letters (e.g. $u$) and the vertices of any $\tilde{\mathcal{H}} = (\tilde{\mathcal{V}}, \tilde{\mathcal{E}})\in\tau(\mathcal{H})$ by letters with tilde (e.g. $\tilde{u}$). For each vertex $u \in \calV$, let $n_u = |\calE_u|$ be the number of hyperedges that contain $u$. Each vertex $u$ that occurs in $n_u$ hyperedges in $\mathcal{H}$ corresponds to $n_u$ vertices in $\tilde{\mathcal{H}}$ denoted by $\tilde{u}_1,\ldots,\tilde{u}_{n_u}$. Figure~\ref{fig:non-iota-example} exemplifies this notation: The hypergraph $\mathcal{H}$ has vertices $[A], [B], [C]$, with $n_A=3$, $n_B=2$, and $n_C=2$. The corresponding vertices in the hypergraph $\tilde{\mathcal{H}}$ are:  $\tilde{A}_{1}, \tilde{A}_{2}, \tilde{A}_{3}$ for $[A]$; $\tilde{B}_{1}, \tilde{B}_{2}$ for $[B]$; and $\tilde{C}_{1}, \tilde{C}_{2}$ for $[C]$.

\begin{definition}
    \label{def:help-functions-from-tilde-to-normal}
    Let $\mathcal{H = (V, E)}$ be a hypergraph and $\tilde{\mathcal{H}} = (\tilde{\mathcal{V}}, \tilde{\mathcal{E}})$ be any member of $\tau(\mathcal{H})$.
    \begin{enumerate}
    \item  Let $\nu_{\tilde{\mathcal{H}}, \mathcal{H}} : \tilde{\mathcal{V}} \rightarrow \mathcal{V}$ be the surjective function that maps each vertex $\tilde{u} \in \tilde{\mathcal{V}}$ to the corresponding vertex $u \in \mathcal{V}$. 
    \item Let $\epsilon_{\tilde{\mathcal{H}}, \mathcal{H}} : \tilde{\mathcal{E}} \rightarrow \mathcal{E}$ be the bijective function that maps each hyperedge $\tilde{e} \in \tilde{\mathcal{E}}$ to the corresponding hyperedge $e = \{u \mid \tilde{u} \in \tilde{e}\} \in \mathcal{E}$.
    \end{enumerate}
    Whenever it is clear from the context, we omit the subscript from the names of the functions.
\end{definition}

The following properties hold immediately by Definition~\ref{def:help-functions-from-tilde-to-normal} and Algorithm~\ref{algorithm:reduction}.

\begin{property}[Properties of \bcqij-to-\bcq Reduction]
    \label{property:ij-hypergraph-properties}
    Let $\mathcal{H = (V, E)}$ be a hypergraph and $\tilde{\mathcal{H}} = (\tilde{\mathcal{V}}, \tilde{\mathcal{E}})\in\tau(\mathcal{H})$.  
    \begin{enumerate}
    \item 
    \label{property:u-in-tilde-e-implies-u-in-e}
    For each hyperedge $\tilde{e} \in \tilde{\mathcal{E}}$ and each vertex $\tilde{u} \in \tilde{\mathcal{V}}$, if $\tilde{u} \in \tilde{e}$ in $\tilde{\mathcal{H}}$ then $\nu(\tilde{u}) \in \epsilon(\tilde{e})$ in $\mathcal{H}$. 

    \item 
    \label{property:u1-in-tilde-e-iff-u-in-e}
    For each hyperedge $\tilde{e} \in \tilde{\mathcal{E}}$ and each vertex $u \in \mathcal{V}$, $\tilde{u}_1 \in \tilde{e}$ in $\tilde{\mathcal{H}}$ if and only if $u \in \epsilon(\tilde{e})$ in $\mathcal{H}$. 
    
    \item 
    \label{property:uj-in-e-implies-ui-in-e}
    For any two vertices $\tilde{u}_{i}, \tilde{u}_{j} \in \tilde{\mathcal{V}}$ with $i < j$ and $\tilde{e} \in \tilde{\mathcal{E}}$, we have that if $\tilde{u}_{j} \in \tilde{e}$ then $\tilde{u}_{i} \in \tilde{e}$.
    
    \end{enumerate}
\end{property}

The first property states that any node in $\tilde{\mathcal{H}}$ is mapped back to precisely one node in ${\cal H}$ and there is a bijection between the hyperedges of these two nodes. The second property is a strengthening of the first property: there is always a node $\tilde{u}_1$ in $\tilde{\mathcal{H}}$ for every node $u$ in $\mathcal{H}$ and there is a bijection between the hyperedges of these two nodes. Finally, the third property states that whenever we have a node $\tilde{u}_{j}$ in a hyperedge in $\tilde{\mathcal{H}}$, which corresponds to a node $u$ in $\mathcal{H}$, we also have all nodes $\tilde{u}_{1},\ldots,\tilde{u}_{j-1}$ in that hyperedge.

The following lemma is an essential building block of the proofs of the main statements.

\begin{lemma}
    \label{lemma:special-Berge cycle-in-tilde-H-implies-Berge cycle-in-H}
    If $\tilde{\mathcal{H}}$ has a Berge cycle $(\tilde{e}^1, \tilde{v}^{1}, \tilde{e}^2, \tilde{v}^{2}, \dots, \tilde{e}^k, \tilde{v}^{k}, \tilde{e}^{k+1} = \tilde{e}^1)$ of length $k$ such that 
	$\nu(\tilde{v}^{1}), \dots, \nu(\tilde{v}^{k})$ are pairwise distinct vertices from $\mathcal{V}$, 
	then $\mathcal{H}$ also has a Berge cycle of length $k$.
	\begin{proof}
		We use a proof by construction. Assume that the above statement is true. Since $(\tilde{e}^1, \tilde{v}^{1}, \tilde{e}^2, \tilde{v}^{2}, \dots, \tilde{e}^k, \tilde{v}^{k}, \tilde{e}^{k+1} = \tilde{e}^1)$ is a Berge cycle, for each $1 \leq i \leq k$, 
		we have $\tilde{v}^{i} \in \tilde{e}^i$ and $\tilde{v}^{i} \in \tilde{e}^{i+1}$.
		Hence, by Property~\eqref{property:u-in-tilde-e-implies-u-in-e} of \ref{property:ij-hypergraph-properties}, for each $1 \leq i \leq k$, we get that $\nu(\tilde{v}^{i}) \in \epsilon(\tilde{e}^i)$ and $\nu(\tilde{v}^{i}) \in \epsilon(\tilde{e}^{i+1})$. 
		Since $\epsilon$ is a bijection and $\tilde{e}^1, \dots, \tilde{e}^k$ are pairwise distinct hyperedges of $\tilde{\mathcal{E}}$, we get that $\epsilon(\tilde{e}^1), \dots, \epsilon(\tilde{e}^k)$ are pairwise distinct hyperedges from $\mathcal{E}$. 
		Therefore, the sequence $(\epsilon(\tilde{e}^1), \nu(\tilde{v}^{1}), \dots, \epsilon(\tilde{e}^k), \nu(\tilde{v}^{k}), \epsilon(\tilde{e}^{k+1}) = \epsilon(\tilde{e}^1))$ is a Berge cycle of length $k$ in $\mathcal{H}$.
	\end{proof}
\end{lemma}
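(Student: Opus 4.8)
The plan is to transport the given Berge cycle of $\tilde{\mathcal{H}}$ forward to $\mathcal{H}$ by applying the hyperedge bijection $\epsilon$ to its edges and the vertex map $\nu$ to its vertices, obtaining the candidate sequence $(\epsilon(\tilde{e}^1), \nu(\tilde{v}^{1}), \epsilon(\tilde{e}^2), \nu(\tilde{v}^{2}), \dots, \epsilon(\tilde{e}^k), \nu(\tilde{v}^{k}), \epsilon(\tilde{e}^{k+1}))$, and then to verify, clause by clause, that this candidate satisfies Definition~\ref{definition:berge-cycle-main}. The proof is a direct construction; no induction or case analysis is needed, because the hypotheses have been chosen so that every clause transports cleanly.

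First I would check the two incidence conditions. Since the source sequence is a Berge cycle in $\tilde{\mathcal{H}}$, for each $1 \leq i \leq k$ we have $\tilde{v}^{i} \in \tilde{e}^i$ and $\tilde{v}^{i} \in \tilde{e}^{i+1}$. Applying Property~\ref{property:ij-hypergraph-properties}(\ref{property:u-in-tilde-e-implies-u-in-e}) to each membership yields $\nu(\tilde{v}^{i}) \in \epsilon(\tilde{e}^i)$ and $\nu(\tilde{v}^{i}) \in \epsilon(\tilde{e}^{i+1})$, which is exactly the incidence requirement for the image sequence. Next I would check distinctness of the two kinds of entries. Hyperedge distinctness is immediate: $\tilde{e}^1, \dots, \tilde{e}^k$ are distinct and $\epsilon$ is a bijection (Definition~\ref{def:help-functions-from-tilde-to-normal}), so their images are distinct, and the closing identity $\epsilon(\tilde{e}^{k+1}) = \epsilon(\tilde{e}^1)$ follows from $\tilde{e}^{k+1} = \tilde{e}^1$. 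Vertex distinctness of $\nu(\tilde{v}^{1}), \dots, \nu(\tilde{v}^{k})$ is precisely the extra hypothesis of the lemma, and the length condition $k \geq 2$ is inherited unchanged. With all clauses met, the image sequence is a Berge cycle of length $k$ in $\mathcal{H}$.

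The step I expect to carry the real weight is the vertex-distinctness clause, and the way the lemma isolates it as a hypothesis is the whole point. The map $\nu$ is only surjective, not injective: a single interval variable $u$ of $\mathcal{H}$ spawns the point variables $\tilde{u}_1, \dots, \tilde{u}_{n_u}$ of $\tilde{\mathcal{H}}$, all collapsed to $u$ by $\nu$. So a Berge cycle of $\tilde{\mathcal{H}}$ could in principle visit two distinct vertices lying over the same vertex of $\mathcal{H}$, and their common $\nu$-image would violate distinctness and prevent the transported sequence from being a cycle. Everything else is a mechanical push-forward under the pair $(\nu, \epsilon)$; the only genuinely needed structural facts are Property~\ref{property:ij-hypergraph-properties}(\ref{property:u-in-tilde-e-implies-u-in-e}) for incidence preservation and the injectivity of $\epsilon$ for hyperedge distinctness. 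This is why I would state and dispatch the distinctness clause last and flag it explicitly, rather than folding it in with the routine incidence checks.
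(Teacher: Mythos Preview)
Your proposal is correct and follows essentially the same approach as the paper: push the cycle forward via $(\nu,\epsilon)$, use Property~\ref{property:ij-hypergraph-properties}(\ref{property:u-in-tilde-e-implies-u-in-e}) for incidence, the bijectivity of $\epsilon$ for hyperedge distinctness, and the stated hypothesis for vertex distinctness. If anything, your write-up is slightly more explicit than the paper's (you spell out the $k\geq 2$ clause and the closing identity), but the argument is the same.
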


\subsection{Proof of Theorem~\ref{theorem:iota-acyclicity}}

{
	\textsc{Theorem 
	\ref{theorem:iota-acyclicity}}
	A hypergraph $\mathcal{H = (V, E)}$ is $\iota$-acyclic if and only if $\mathcal{H}$ has no Berge cycle of length strictly greater than two.
}

\begin{proof}
		$\Rightarrow$: Assume for a contradiction that $\mathcal{H}$ has a Berge cycle of length strictly greater than two, or equivalently at least three. 
		Hence, there exist a cyclic sequence $(e^1, v^{1}, e^2, v^{2}, \dots,$ $e^k, v^{k}, e^{k+1} = e^{1})$ 
		such that $k \geq 3$, $v^{1}, \dots, v^{k}$ are pairwise distinct vertices from $\mathcal{V}$, $e^1, \dots, e^k$ 
		are pairwise distinct hyperedges from $\mathcal{E}$, and for each $1 \leq i \leq k$, we have $v^{i} \in e^i$ and $v^{i} \in e^{i+1}$. 

		By our construction in Algorithm~\ref{algorithm:reduction}, there exists a hypergraph $\tilde{\mathcal{H}} = (\tilde{\mathcal{V}}, \tilde{\mathcal{E}})\in\tau(\mathcal{H})$ such that for each $1 \leq i \leq k$ we have:
		\[
			\{\tilde{v}^{i}_{1}, \dots, \tilde{v}^{i}_{n_{v^{i}} - 1} \} \subseteq \epsilon^{-1}(e^i) 
			\text{ and } \tilde{v}^{i}_{n_{v^{i}}} \notin \epsilon^{-1}(e^i)
		\] 
		and
		\[
			\{\tilde{v}^{i}_{1}, \dots, \tilde{v}^{i}_{n_{v^{i}}} \} \subseteq \epsilon^{-1}(e^{i+1}).
		\]
		Since $k \geq 3$, the hypergraph $\tilde{\mathcal{H}}$ has the following three properties:
		\begin{enumerate}
			\item For each $1 \leq i \leq k$ the vertex $\tilde{v}^{i}_{n_{v^{i}} - 1}$ belongs to precisely two hyperedges from $\tilde{\mathcal{E}}$. 
			These hyperedges are $\epsilon^{-1}(e^i)$ and $\epsilon^{-1}(e^{i+1})$;
			
			\item For each $1 \leq i, j \leq k$ with $|i - j| \leq 2$, the hyperedge $\epsilon^{-1}(e^i)$ cannot be contained in hyperedge $\epsilon^{-1}(e^j)$ because, 
			by (1) the vertex $\tilde{v}^{i}_{n_{v^{i}} - 1}$ belongs to $\epsilon^{-1}(e^i)$ but cannot belong to $\epsilon^{-1}(e^j)$;
			
			\item For each $1 \leq i \leq k$, the hyperedges $\epsilon^{-1}(e^i)$ and $\epsilon^{-1}(e^{i+1})$ cannot be subset of each other because by (1) we have:
				\begin{itemize}
					\item $\tilde{v}^{i-1}_{n_{v^{i-1}} - 1}$ belongs to $\epsilon^{-1}(e^i)$ but cannot belong to $\epsilon^{-1}(e^{i+1})$, and
					\item $\tilde{v}^{i+1}_{n_{v^{i+1}} - 1}$ belongs to $\epsilon^{-1}(e^{i+1})$ but cannot belong to $\epsilon^{-1}(e^i)$.
				\end{itemize}
		\end{enumerate}
		Let $\tilde{\mathcal{V}}^{\prime} = \{\tilde{v}^{i}_{n_{v^i} - 1} \mid 1 \leq i \leq k\}$ and $\tilde{\mathcal{E}}^{\prime} = \{\epsilon^{-1}(e^i) \mid 1 \leq i \leq k\}$. 
		Note that $\tilde{\calV}^{\prime} \subseteq \tilde{\calV}$ and $\tilde{\calE}^{\prime} \subseteq \tilde{\calE}$.
		Therefore, no matter what other steps are taken during the runtime of the GYO reduction on $\tilde{\mathcal{H}}$, by (1), no vertex from $\tilde{\mathcal{V}}^{\prime}$ will become candidate for removal, 
		and by (2) and (3), no hyperedge from $\tilde{\mathcal{E}}^{\prime}$ will become candidate for removal. 
		Hence, $\tilde{\mathcal{H}}$ cannot be GYO reducible to the empty hypergraph. 
		In other words, the hypergraph $\tilde{\calH}$ includes the cycle 
		\[
			\{\{\tilde{v}^{i}_{n_{v^{i}}-1}, \tilde{v}^{i+1}_{n_{v^{i+1}}-1}\} \mid 1 \leq i < k\} \cup \{\{\tilde{v}^{k}_{n_{v^{k}}-1}, \tilde{v}^{1}_{n_{v^{1}}-1}\}\},
		\]
		where, by the above property (1), the edge $\{\tilde{v}^{i}_{n_{v^{i}}-1}, \tilde{v}^{i+1}_{n_{v^{i+1}}-1}\}$ for each $1 \leq i < k$ is included in 
		precisely one hyperedge from $\tilde{\calE}$, that is $\epsilon^{-1}(e^{i+1})$, and $\{\tilde{v}^{k}_{n_{v^{k}}-1}, \tilde{v}^{1}_{n_{v^{1}}-1}\}$ is included in 
		precisely one hyperedge from $\tilde{\calE}$, that is $\epsilon^{-1}(e^{1}) (= \epsilon^{-1}(e^{k+1}))$. 
		Thus, $\tilde{\mathcal{H}}$ is not $\alpha$-acyclic. 
		Therefore, by Definition~\ref{definition:iota-acyclicity} of $\iota$-acyclicity, $\mathcal{H}$ is not $\iota$-acyclic. 
		This contradicts the initial assumption.
		
		$\Leftarrow$: Assume for a contradiction that $\mathcal{H}$ is not $\iota$-acyclic. 
		Hence, by Definition~\ref{definition:iota-acyclicity}, there exists $\tilde{\mathcal{H}} = (\tilde{\mathcal{V}}, \tilde{\mathcal{E}}) \in \tau(\mathcal{H})$ that is not $\alpha$-acyclic. 
		Therefore, by Definition~\ref{definition:alpha-conformal-cycle-free}, $\tilde{\mathcal{H}}$ is not conformal or not cycle-free. 
		Next, we prove that each of the two cases leads to a contradiction.
				 
		 {\bf Case 1.} The hypergraph $\tilde{\mathcal{H}}$ is not conformal. 
		Therefore, there exists a subset $S \subseteq \tilde{\mathcal{V}}$ with $|S| \geq 3$ such that 
		\[	
			\mathcal{M}(\tilde{\mathcal{E}}[S]) = \{S \setminus \{x\} \mid x \in S\}.
		\] 
		According to Definition~\ref{definition:induced-set} of an \textit{induced set}, we have $\tilde{\calE}[S] = \{e \cap S \mid e \in \tilde{\calE}\} \setminus \emptyset$.
		According to Definition~\ref{definition:minimisation} of the \textit{minimisation of a family of sets}, we have $\mathcal{M}(\tilde{\mathcal{E}}[S]) = \{e \in \tilde{\calE}[S] \mid \nexists f \in \tilde{\calE}[S], e \subset f\}$.
		Let $\tilde{x}, \tilde{y}, \tilde{z} \in S$ be distinct vertices from $S$. 
		Let $\tilde{e}_{\tilde{x}} = S \setminus \{\tilde{x}\}$, $\tilde{e}_{\tilde{y}} = S \setminus \{\tilde{y}\}$, and $\tilde{e}_{\tilde{z}} = S \setminus \{\tilde{z}\}$ be distinct hyperedges from $\mathcal{M}(\tilde{\mathcal{E}}[S])$. 
		Since $S \subseteq \tilde{\mathcal{V}}$, we also have that $\tilde{x}, \tilde{y}, \tilde{z} \in \tilde{\mathcal{V}}$.
				
		We now need the following claim at this point in the proof; its own proof is given at the end of this section.
		
		\begin{claim}
			\label{claim:pairwise-distinct-vertices-1}
		The vertices $\nu(\tilde{x})$, $\nu(\tilde{y})$, and $\nu(\tilde{z})$ are pairwise distinct vertices of $\mathcal{V}$.
		\end{claim}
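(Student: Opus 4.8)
The plan is to argue by contradiction. Suppose two of the three images coincide, say $\nu(\tilde{x}) = \nu(\tilde{y}) = u$ for some vertex $u \in \mathcal{V}$. Recall that $\nu$ sends all the copies $\tilde{u}_1, \dots, \tilde{u}_{n_u}$ of $u$ to $u$, so the assumption together with $\tilde{x} \neq \tilde{y}$ forces $\tilde{x}$ and $\tilde{y}$ to be two distinct copies of the same vertex, say $\tilde{x} = \tilde{u}_a$ and $\tilde{y} = \tilde{u}_b$ with $a \neq b$; without loss of generality I take $a < b$.

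Next I would exploit the defining property $\mathcal{M}(\tilde{\mathcal{E}}[S]) = \{S \setminus \{x\} \mid x \in S\}$. In particular $S \setminus \{\tilde{x}\} \in \mathcal{M}(\tilde{\mathcal{E}}[S]) \subseteq \tilde{\mathcal{E}}[S]$, so by Definition~\ref{definition:induced-set} there is a hyperedge $\tilde{e} \in \tilde{\mathcal{E}}$ with $\tilde{e} \cap S = S \setminus \{\tilde{x}\}$. Since $\tilde{y} \in S$ and $\tilde{y} \neq \tilde{x}$, we get $\tilde{y} = \tilde{u}_b \in \tilde{e}$. Now apply Property~\ref{property:ij-hypergraph-properties}(\ref{property:uj-in-e-implies-ui-in-e}) with $a < b$: from $\tilde{u}_b \in \tilde{e}$ it follows that $\tilde{u}_a = \tilde{x} \in \tilde{e}$, hence $\tilde{x} \in \tilde{e} \cap S = S \setminus \{\tilde{x}\}$, which is impossible. (In the case $a > b$ one runs the symmetric argument using the hyperedge realizing $S \setminus \{\tilde{y}\}$, deriving the contradiction $\tilde{y} \in S \setminus \{\tilde{y}\}$.) This shows $\nu(\tilde{x}) \neq \nu(\tilde{y})$, and the same reasoning applied to the pairs $\{\tilde{x}, \tilde{z}\}$ and $\{\tilde{y}, \tilde{z}\}$ gives pairwise distinctness of $\nu(\tilde{x})$, $\nu(\tilde{y})$, $\nu(\tilde{z})$.

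The proof is short, and I do not expect a genuine obstacle; the only point requiring care is the orientation of Property~\ref{property:ij-hypergraph-properties}(\ref{property:uj-in-e-implies-ui-in-e}) — it is the lower-indexed copy that is dragged into any hyperedge containing a higher-indexed copy — which is why the reduced hyperedge $S \setminus \{\tilde{u}_a\}$ with $a$ the \emph{smaller} index is the one that produces the contradiction, and why the roles of $\tilde{x}$ and $\tilde{y}$ must be swapped when $a > b$.
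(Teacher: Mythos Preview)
Your proof is correct and follows essentially the same approach as the paper: assume two of $\tilde{x},\tilde{y},\tilde{z}$ map to the same vertex under $\nu$, invoke Property~\ref{property:ij-hypergraph-properties}(\ref{property:uj-in-e-implies-ui-in-e}) to force the lower-indexed copy into any hyperedge containing the higher-indexed one, and derive a contradiction with the shape of $\mathcal{M}(\tilde{\mathcal{E}}[S])$. Your version is slightly more explicit than the paper's in pulling back from the induced set $\tilde{\mathcal{E}}[S]$ to an actual hyperedge $\tilde{e}\in\tilde{\mathcal{E}}$ before applying the property, and in tracking the index orientation, but the argument is the same.
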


		By Definition~\ref{definition:minimisation} of the minimization of a family of sets, we have $\mathcal{M}(\tilde{\mathcal{E}}[S]) \subseteq \tilde{\mathcal{E}}[S]$, hence, $e_{\tilde{x}}$, $e_{\tilde{y}}$, and $e_{\tilde{z}}$ belong also to $\tilde{\mathcal{E}}[S]$. By Definition~\ref{definition:induced-set} of the induced set, 
		this means that there exist three distinct hyperedges $\tilde{c}_{\tilde{x}}, \tilde{c}_{\tilde{y}}, \tilde{c}_{\tilde{z}} \in \tilde{\mathcal{E}}$ such that $\tilde{e}_{\tilde{x}} \subseteq \tilde{c}_{\tilde{x}}$, $e_{\tilde{y}} \subseteq \tilde{c}_{\tilde{y}}$, and $e_{\tilde{z}} \subseteq \tilde{c}_{\tilde{z}}$. 
		Therefore, the sequence $(\tilde{c}_{\tilde{x}}, \tilde{z}, \tilde{c}_{\tilde{y}}, \tilde{x}, \tilde{c}_{\tilde{z}}, \tilde{y}, \tilde{e}_{\tilde{x}})$ is a Berge cycle of length 3 in $\tilde{\mathcal{H}}$ where, by Claim~\ref{claim:pairwise-distinct-vertices-1}, $\nu(\tilde{x}), \nu(\tilde{y}), \nu(\tilde{z})$ are pairwise distinct vertices of $\mathcal{H}$. 
		Therefore, by Lemma~\ref{lemma:special-Berge cycle-in-tilde-H-implies-Berge cycle-in-H}, $\mathcal{H}$ has also a Berge cycle of length 3. This statement contradicts the initial assumption that $\mathcal{H}$ is $\iota$-acyclic.
		 		 
		{\bf Case 2.} The hypergraph $\tilde{\mathcal{H}}$ is non-cycle-free. Hence, there exist $S = \{\tilde{v}^{1}, \dots, \tilde{v}^{k}\} \subseteq \tilde{\mathcal{V}}$ where $k \geq 3$ of pairwise distinct vertices such that 
		\[ 
			\mathcal{M}(\tilde{\mathcal{E}}[S]) = \{\{\tilde{v}^{i}, \tilde{v}^{i+1}\} \mid 1 \leq i < k\} \cup \{\{\tilde{v}^{k}, \tilde{v}^{1}\}\}.
		\]
		Let $\tilde{e}^{i+1} := \{\tilde{v}^{i}, \tilde{v}^{i+1}\}$ for each $1 \leq i < k$, and $\tilde{e}^1 := \{\tilde{v}^{k}, \tilde{v}^{1}\}$.
			
		We now need the following claim at this point in the proof; its own proof is given at the end of this section.
				
		\begin{claim}
			\label{claim:pairwise-distinct-vertices-2}
			The vertices $\nu(\tilde{v}^{1}), \dots, \nu(\tilde{v}^{k})$ are pairwise distinct vertices of $\mathcal{V}$.
		\end{claim}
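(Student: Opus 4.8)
The plan is to argue by contradiction. Suppose the claim fails, so there are distinct cycle positions $i \neq j$ with $\nu(\tilde{v}^{i}) = \nu(\tilde{v}^{j}) =: u$. Since $\nu$ is injective on the set $\{\tilde{u}_1,\dots,\tilde{u}_{n_u}\}$ of vertices of $\tilde{\mathcal{H}}$ lying over $u$, and $\tilde{v}^{i} \neq \tilde{v}^{j}$, we may write $\tilde{v}^{i} = \tilde{u}_a$ and $\tilde{v}^{j} = \tilde{u}_b$ with $a \neq b$; after possibly swapping the roles of $i$ and $j$, we assume $a < b$.

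The next step is to recover genuine hyperedges of $\tilde{\mathcal{H}}$ from the minimisation hypothesis. Recall that $\tilde{e}^{m} = \{\tilde{v}^{m-1},\tilde{v}^{m}\}$ with indices read cyclically modulo $k$, so among the minimal sets the vertex $\tilde{v}^{j}$ lies in exactly $\tilde{e}^{j}$ and $\tilde{e}^{j+1}$. Since $\tilde{e}^{j},\tilde{e}^{j+1} \in \mathcal{M}(\tilde{\calE}[S]) \subseteq \tilde{\calE}[S]$ and $\tilde{\calE}[S] = \{e \cap S \mid e \in \tilde{\calE}\}\setminus\{\emptyset\}$, there are hyperedges $\tilde{c}^{j},\tilde{c}^{j+1} \in \tilde{\calE}$ with $\tilde{c}^{j} \cap S = \tilde{e}^{j}$ and $\tilde{c}^{j+1} \cap S = \tilde{e}^{j+1}$. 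Now $\tilde{v}^{j} = \tilde{u}_b$ lies in both $\tilde{c}^{j}$ and $\tilde{c}^{j+1}$, so Property~\ref{property:ij-hypergraph-properties}\eqref{property:uj-in-e-implies-ui-in-e}, applied with $a < b$, forces $\tilde{u}_a = \tilde{v}^{i}$ into both $\tilde{c}^{j}$ and $\tilde{c}^{j+1}$ as well. Since $\tilde{v}^{i} \in S$, intersecting with $S$ gives $\tilde{v}^{i} \in \tilde{e}^{j} \cap \tilde{e}^{j+1}$.

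Finally I would compute this intersection. For $k \geq 3$ the vertices $\tilde{v}^{j-1},\tilde{v}^{j},\tilde{v}^{j+1}$ are pairwise distinct (an equality $\tilde{v}^{j-1} = \tilde{v}^{j+1}$ would force $k \mid 2$), so $\tilde{e}^{j} \cap \tilde{e}^{j+1} = \{\tilde{v}^{j-1},\tilde{v}^{j}\} \cap \{\tilde{v}^{j},\tilde{v}^{j+1}\} = \{\tilde{v}^{j}\}$. Hence $\tilde{v}^{i} = \tilde{v}^{j}$, which contradicts $i \neq j$ together with the pairwise distinctness of $\tilde{v}^{1},\dots,\tilde{v}^{k}$, and this proves the claim. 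Note that Lemma~\ref{lemma:special-Berge cycle-in-tilde-H-implies-Berge cycle-in-H} can then be applied to the Berge cycle $(\tilde{e}^{1},\tilde{v}^{1},\dots,\tilde{e}^{k},\tilde{v}^{k},\tilde{e}^{k+1}=\tilde{e}^1)$ to finish Case~2.

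The one delicate point — the part most worth stating carefully rather than a genuine obstacle — is the direction in which the monotonicity Property~\ref{property:ij-hypergraph-properties}\eqref{property:uj-in-e-implies-ui-in-e} is applied: it is the higher-sub-index copy $\tilde{u}_b = \tilde{v}^{j}$ that drags the lower-sub-index copy $\tilde{u}_a = \tilde{v}^{i}$ into every hyperedge it belongs to, not the reverse, which is why the normalisation $a < b$ is performed at the outset. The remaining ingredients — translating $\tilde{e}^{j},\tilde{e}^{j+1}$ into actual hyperedges of $\tilde{\calE}$ via the definition of the induced set, and invoking $k \geq 3$ to make $\tilde{e}^{j}\cap\tilde{e}^{j+1}$ a singleton — are routine.
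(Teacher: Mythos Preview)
Your proof is correct and follows essentially the same approach as the paper's. Both argue by contradiction using Property~\ref{property:ij-hypergraph-properties}\eqref{property:uj-in-e-implies-ui-in-e} to force one of the two coincident-image vertices into the cycle edges adjacent to the other, then use $k\geq 3$ to obtain a contradiction; the only cosmetic differences are that you normalise $a<b$ up front and work with the edges $\tilde e^{j},\tilde e^{j+1}$ around position $j$ (concluding $\tilde v^{i}\in\tilde e^{j}\cap\tilde e^{j+1}=\{\tilde v^{j}\}$), whereas the paper keeps the ``or vice versa'' symmetry and looks at the edges $\tilde e^{i},\tilde e^{i+1}$ around position $i$ (phrasing the contradiction as ``creating a chord''). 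One tiny wording slip: you say ``$\nu$ is injective on $\{\tilde u_1,\dots,\tilde u_{n_u}\}$'', but $\nu$ is constant there; you mean that these are exactly the distinct elements of $\nu^{-1}(u)$, which is what you actually use.
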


		By Definition~\ref{definition:minimisation} of the minimization of a family of sets we have $\mathcal{M}(\tilde{\mathcal{E}}[S]) \subseteq \tilde{\mathcal{E}}[S]$, this means $\tilde{e}^i \in \tilde{\mathcal{E}}[S]$ for each $1 \leq i \leq k$. 
		By Definition~\ref{definition:induced-set} of the induced set, there exist $k$ pairwise distinct hyperedges $\tilde{c}^1, \dots, \tilde{c}^k \in \tilde{\mathcal{E}}$ such that $\tilde{e}^i \subseteq \tilde{c}^i$ for each $1 \leq i \leq k$. 
		Since $\tilde{c}^1, \dots, \tilde{c}^k$ are distinct hyperedges in $\tilde{\mathcal{E}}$ and $\tilde{v}^{1}, \dots, \tilde{v}^{k}$ are distinct vertices in $\tilde{\calV}$, the sequence $(\tilde{c}^1, \tilde{v}^{1}, \tilde{c}^2, \tilde{c}^{2}, \dots, \tilde{c}^k, \tilde{v}^{k}, \tilde{c}^1)$ 
		is a Berge cycle of length $k \geq 3$ in $\tilde{\calH}$. Moreover, by Claim~\ref{claim:pairwise-distinct-vertices-2}, $\nu(\tilde{v}^{1}), \dots, \nu(\tilde{v}^{k})$ are pairwise distinct vertices of $\mathcal{H}$. 
		Therefore, by Lemma~\ref{lemma:special-Berge cycle-in-tilde-H-implies-Berge cycle-in-H}, $\mathcal{H}$ has a Berge-cycle
		of length $k\geq 3$. This statement contradicts the initial assumption that $\mathcal{H}$ is $\iota$-acyclic.
				
		Finally, we give the proofs of Claims~\ref{claim:pairwise-distinct-vertices-1} and \ref{claim:pairwise-distinct-vertices-2}.

		\textsc{Proof of Claim~\ref{claim:pairwise-distinct-vertices-1}}. Assume for contradiction that there are two distinct vertices $\tilde{u}, \tilde{v} \in \{\tilde{x}, \tilde{y}, \tilde{z}\}$ such that $\nu(\tilde{u}) = \nu(\tilde{v})$. 
			By Property~\ref{property:ij-hypergraph-properties} \eqref{property:uj-in-e-implies-ui-in-e}, this means that any hyperedge $\tilde{e} \in \tilde{\mathcal{E}}$ that contains $\tilde{u}$ contains also vertex $\tilde{v}$ (or vice versa). 
			Note that, since $\mathcal{M}(\tilde{\mathcal{E}}[S]) \subseteq \tilde{\mathcal{E}}[S]$ and since $\tilde{u}, \tilde{v} \in S$, the property of the previous statement holds also for the hyperedges of $\mathcal{M}(\tilde{\mathcal{E}}[S])$.
			This violates the condition that $\mathcal{M}(\tilde{\mathcal{E}}[S]) = \{S \setminus \{x\} \mid x \in S\}$ since in this case the hyperedge $S \setminus \{\tilde{v}\}$ (which contains vertex $\tilde{u}$) would actually need to include vertex $\tilde{v}$ as well. 
			The reverse case is analogous due to symmetry. Contradiction.

		\textsc{Proof of Claim~\ref{claim:pairwise-distinct-vertices-2}}.
			Assume for contradiction that there are $1 \leq i < j \leq k$ such that $\nu(\tilde{v}^{i}) = \nu(\tilde{v}^{j})$. 
			By Property~\ref{property:ij-hypergraph-properties} \eqref{property:uj-in-e-implies-ui-in-e}, this means that any hyperedge $\tilde{e} \in \tilde{\mathcal{E}}$ that contains vertex $\tilde{v}^{i}$ also contains vertex $\tilde{v}^{j}$ (or vice versa). 
			Since $\mathcal{M}(\tilde{\mathcal{E}}[S]) \subseteq \tilde{\mathcal{E}}[S]$ and since $\tilde{v}^{i}, \tilde{v}^{j} \in S$, we get that the hyperedges from $\mathcal{M}(\tilde{\calE}[S])$ satisfy this property too. That is, any hyperedge $\tilde{e} \in \mathcal{M}(\tilde{\calE}[S])$ that contains vertex $\tilde{v}^i$ also contains vertex $\tilde{v}^j$ (or vice versa).
			This violates the condition that $\mathcal{M}(\tilde{\mathcal{E}}[S]) = \{\tilde{e}^1, \dots, \tilde{e}^k\}$ since, in this case, the hyperedge $\tilde{e}^i = \{\tilde{v}^{i-1}, \tilde{v}^{i}\}$ (in case $\tilde{v}^{i-1} \neq \tilde{v}^{j}$) 
			or the hyperedge $\tilde{e}^{i+1} = \{\tilde{v}^{i}, \tilde{v}^{i+1}\}$ (in case $\tilde{v}^{i+1} \neq \tilde{v}^{j}$) would also include the vertex $\tilde{v}^{j}$, creating a chord in the cycle\footnote{We define $\tilde{v}^0 := \tilde{v}^k$, $\tilde{e}^0 := \tilde{e}^k$, $\tilde{v}^{k+1} := \tilde{v}^1$, and $\tilde{e}^{k+1} := \tilde{e}^1$ (i.e., the sequence is cyclic).}. 
			The reverse case is analogous due to symmetry. Contradiction. 
\end{proof}

\subsection{Proof of Corollary~\ref{corollary:iota-between-berge-and-gamma}}
{
	\textsc{Corollary 
	\ref{corollary:iota-between-berge-and-gamma}}
	The class of $\iota$-acyclic hypergraphs is a strict superset of the class of Berge-acyclic hypergraphs and it is a strict subset of the class of $\gamma$-acyclic hypergraphs.
}
	
\begin{proof}
	The statement that $\iota$-acyclicity strictly includes Berge-acyclicity follows immediately from  Theorem~\ref{theorem:iota-acyclicity}, since $\iota$-acyclicity allows for Berge cycles of length up to two. 

	We next prove the statement that $\iota$-acyclicity is strictly included in  $\gamma$-acyclicity.
	Assume, for a contradiction, that $\mathcal{H}$ is not $\gamma$-acyclic. 
	Then, by Definition~\ref{definition:gamma-acyclicity}, either $\mathcal{H}$ is non-cycle-free or there exist three distinct vertices $x, y, z \in \mathcal{V}$ such that $\{\{x, y\}, \{y, z\}, \{x, y, z\}\} \subseteq \mathcal{E}[\{x, y, z\}]$. 
	Since a Berge cycle consists of at least $3$ distinct vertices and $3$ distinct hyperedges, 
	it means that in both cases the hypergraph $\mathcal{H}$ contains a Berge cycle of length at least $3$. Contradiction. 
	
	To check the strictness of the inclusion, consider the following hypergraph: $\{\{x, y, z\}, \{x, y, z\}, \{x, y, z\}\}$. 
	This hypergraph is: 
	\begin{enumerate}
		\item not $\iota$-acyclic, since it contains the Berge cycle 1-$x$-2-$y$-3-$z$-1 of length 3, where we denote the three hyperedges by 1, 2, and 3;	
		\item $\gamma$-acyclic, since it is cycle-free and there are no three distinct vertices that satisfy the condition from above.
	\end{enumerate}
\end{proof}

A further immediate corollary is the following (we nevertheless give its proof).
\begin{corollary}
	\label{corollary:iota-subset-of-alpha}
	Let $\mathcal{H}$ be a hypergraph. 
	If $\mathcal{H}$ is $\iota$-acyclic, then $\mathcal{H}$ is $\alpha$-acyclic. 	
\end{corollary}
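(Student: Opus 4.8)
The plan is to prove the contrapositive using the syntactic characterisation from Theorem~\ref{theorem:iota-acyclicity} (a hypergraph is $\iota$-acyclic iff it has no Berge cycle of length $>2$) together with the characterisation of $\alpha$-acyclicity as ``conformal and cycle-free'' (Definition~\ref{definition:alpha-conformal-cycle-free}). Concretely, I would assume $\mathcal{H}=(\mathcal{V},\mathcal{E})$ is \emph{not} $\alpha$-acyclic and exhibit a Berge cycle of length at least three in $\mathcal{H}$; by Theorem~\ref{theorem:iota-acyclicity} this shows $\mathcal{H}$ is not $\iota$-acyclic, which is the desired contrapositive.

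Since $\mathcal{H}$ is not $\alpha$-acyclic, by Definition~\ref{definition:alpha-conformal-cycle-free} it is either not conformal or not cycle-free, and I would treat the two cases separately. If $\mathcal{H}$ is not conformal, there is $S\subseteq\mathcal{V}$ with $|S|\geq 3$ and $\mathcal{M}(\mathcal{E}[S])=\{S\setminus\{x\}\mid x\in S\}$; picking three distinct $x,y,z\in S$ and hyperedges $c_x\supseteq S\setminus\{x\}$, $c_y\supseteq S\setminus\{y\}$, $c_z\supseteq S\setminus\{z\}$ in $\mathcal{E}$ (distinct, because distinct elements of $\mathcal{E}[S]$ must come from distinct elements of $\mathcal{E}$), the sequence $(c_x,z,c_y,x,c_z,y,c_x)$ is a Berge cycle of length three, since $z\in c_x\cap c_y$, $x\in c_y\cap c_z$, and $y\in c_z\cap c_x$. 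If $\mathcal{H}$ is not cycle-free, there are pairwise distinct $v^1,\dots,v^n\in\mathcal{V}$ with $n\geq 3$ such that $\mathcal{M}(\mathcal{E}[\{v^1,\dots,v^n\}])$ equals the ``$n$-cycle'' family $\{\{v^i,v^{i+1}\}\mid 1\leq i<n\}\cup\{\{v^n,v^1\}\}$; taking distinct hyperedges $c^i\in\mathcal{E}$ containing the corresponding two-element sets yields a Berge cycle $(c^1,v^1,\dots,c^n,v^n,c^1)$ of length $n\geq 3$. This is exactly the bookkeeping already carried out inside the proof of Theorem~\ref{theorem:iota-acyclicity}, only applied directly to $\mathcal{H}$ rather than to a member of $\tau(\mathcal{H})$, so the remaining details are routine.

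An even shorter route, which I would also record, is to invoke Corollary~\ref{corollary:iota-between-berge-and-gamma}: $\iota$-acyclicity implies $\gamma$-acyclicity, and $\gamma$-acyclicity implies $\beta$-acyclicity, which implies $\alpha$-acyclicity (see Section~\ref{subsection:hypergraph-acyclicity}). I do not expect any genuine obstacle here; the only point requiring a little care is ensuring the hyperedges chosen in the ``not conformal'' and ``not cycle-free'' cases are pairwise distinct so that the constructed alternating sequences really qualify as Berge cycles, and this follows because distinct induced edges in $\mathcal{E}[S]$ necessarily arise from distinct hyperedges of $\mathcal{E}$.
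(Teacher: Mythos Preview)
Your argument is correct, but it is genuinely different from the paper's own proof. The paper does \emph{not} go through Theorem~\ref{theorem:iota-acyclicity} or Corollary~\ref{corollary:iota-between-berge-and-gamma}; instead it works directly from Definition~\ref{definition:iota-acyclicity}. It picks any $\tilde{\mathcal{H}}\in\tau(\mathcal{H})$, which is $\alpha$-acyclic by hypothesis and therefore admits a join tree $(\tilde{\mathcal{T}},\tilde\chi)$, and then builds a join tree $(\mathcal{T},\chi)$ for $\mathcal{H}$ by setting $\mathcal{T}:=\tilde{\mathcal{T}}$ and $\chi(t):=\epsilon(\tilde\chi(t))$, using the hyperedge bijection $\epsilon$ from Definition~\ref{def:help-functions-from-tilde-to-normal} together with Property~\ref{property:ij-hypergraph-properties}\eqref{property:u1-in-tilde-e-iff-u-in-e} to verify connectivity.

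Your second route (chain $\iota\Rightarrow\gamma\Rightarrow\beta\Rightarrow\alpha$) is the most economical once Corollary~\ref{corollary:iota-between-berge-and-gamma} is available, and your first route essentially replays the Berge-cycle extraction from the ``$\Leftarrow$'' half of Theorem~\ref{theorem:iota-acyclicity}, only applied to $\mathcal{H}$ itself rather than to some $\tilde{\mathcal{H}}$; both are sound, and your remark that distinct induced edges in $\mathcal{E}[S]$ force distinct preimages in $\mathcal{E}$ is exactly the point needed. What the paper's approach buys is independence from Theorem~\ref{theorem:iota-acyclicity} and Corollary~\ref{corollary:iota-between-berge-and-gamma} (it uses only the raw definition of $\iota$-acyclicity and the structural maps $\nu,\epsilon$), plus an explicit join tree for $\mathcal{H}$. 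What your approaches buy is brevity: once the syntactic characterisation and the $\gamma$-inclusion are in hand, the corollary is a one-line consequence.
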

   
\begin{proof} 
	We prove by construction that $\mathcal{H}$ has a join tree, and hence, by Definition~\ref{definition:join-tree}, $\mathcal{H}$ is $\alpha$-acyclic.
		
	Assume that $\mathcal{H}$ is $\iota$-acyclic. By Definition~\ref{definition:iota-acyclicity} this means that all members of $\tau(\mathcal{H})$ are $\alpha$-acyclic. 
	Let $\tilde{\mathcal{H}} = (\tilde{\mathcal{V}}, \tilde{\mathcal{E}})$ be a member of $\tau(\mathcal{H})$.
	Since all members of $\tau(\mathcal{H})$ are $\alpha$-acyclic, then $\tilde{\mathcal{H}}$ is $\alpha$-acyclic, 
	and hence, it has a join tree $(\tilde{\mathcal{T}}, \tilde{\chi})$ where $\tilde{\mathcal{T}}$ is a tree and $\tilde{\chi}$ is a bijection $\tilde{\chi}: V(\tilde{\mathcal{T}}) \rightarrow \tilde{\mathcal{E}}$ 
	such that the connectivity property holds (see Definition~\ref{definition:join-tree}). 
		
	It is possible to construct a join tree $(\mathcal{T}, \chi)$ for $\mathcal{H} = (\mathcal{V}, \mathcal{E})$ as follows: assign $\mathcal{T} := \tilde{\mathcal{T}}$ and for each node $t \in \mathcal{T}$ assign $\chi(t) := \epsilon(\tilde{\chi}(t))$.  
	Next, we show that $(\mathcal{T}, \chi)$ is a valid join tree, that is (1) $\chi$ is a bijection of the from $\chi:V(\mathcal{T})\rightarrow\mathcal{E}$ and (2) connectivity property holds (see Definition~\ref{definition:join-tree}).
		
	\begin{enumerate}
		\item $\chi$ is the composition of the bijections $\tilde{\chi}: V(\tilde{\mathcal{T}}) \rightarrow \tilde{\mathcal{E}}$ and $\epsilon: \tilde{\mathcal{E}} \rightarrow \mathcal{E}$. 
		Hence, it is a bijection of the form $\chi: V(\tilde{\mathcal{T}}) \rightarrow \mathcal{E}$. 
		Since $\mathcal{T}= \tilde{\mathcal{T}}$, we get that $\chi$ is a bijection of the form $\chi: V(\mathcal{T}) \rightarrow \mathcal{E}$.  
		\item Let $v \in \mathcal{V}$ be any vertex of $\mathcal{H}$. 
		By Property~\eqref{property:u1-in-tilde-e-iff-u-in-e} of \ref{property:ij-hypergraph-properties}, for each hyperedge $\tilde{e} \in \tilde{\mathcal{E}}$ we have $v_{1} \in \tilde{e}$ if and only if $v \in \epsilon(\tilde{e})$. 
		Since $\mathcal{T} = \tilde{\mathcal{T}}$ and $\chi(t) = \epsilon(\tilde{\chi}(t))$ for each node $t \in \mathcal{T}$, the set of nodes $\{t \in \tilde{\mathcal{T}} \mid v_{1} \in \tilde{\chi}(t)\}$ is equal to the set of nodes $\{t \in \mathcal{T} \mid v \in \epsilon(\tilde{\chi}(t)) = \chi(t)\}$. 
		Since $(\tilde{\mathcal{T}}, \tilde{\chi})$ is a join tree, the former set of nodes is a non-empty connected subtree of $\tilde{\mathcal{T}}$ (by Definition~\ref{definition:join-tree}). 
		Therefore, the latter set of nodes is also a non-empty connected subtree of $\mathcal{T}$.
	\end{enumerate}
\end{proof}

Figure~\ref{figure:acyclicity-classes} shows the relationship of $\iota$-acyclicity with the notions of Berge-acyclicity, $\gamma$-acyclicity 
and $\alpha$-acyclicity, that is discussed in Corollaries~\ref{corollary:iota-between-berge-and-gamma} and~\ref{corollary:iota-subset-of-alpha}.

\subsection{Proof of Theorem~\ref{theorem:hardness-of-non-iota-acyclic}}
{
	\textsc{Theorem 
	\ref{theorem:hardness-of-non-iota-acyclic} [Iota Acyclicity Dichotomy]}
    Let $Q$  be any~\bcqij query with hypergraph $\mathcal{H}$ and let $D$ be any database.
    \begin{itemize}
    	\item If $\mathcal{H}$ is $\iota$-acyclic, then $Q$ can be computed in time $O(\lvert \D \rvert\cdot \polylog \lvert \D \rvert)$.
    	\item If $\mathcal{H}$ is not $\iota$-acyclic, then there is no algorithm that can compute $Q$ in time $O(\lvert \D \rvert^{4/3-\epsilon})$ for $\epsilon>0$, unless the \textsc{3SUM} conjecture fails.   
    \end{itemize}
}

\begin{proof}
The linear-time complexity in case $\mathcal{H}$ is $\iota$-acyclic follows immediately: Since each hypergraph in $\tau({\cal H})$ is $\alpha$-acyclic, its corresponding \bcq query can be computed in linear time using Yannakakis's algorithm~\cite{vldb/Yannakakis81}. Furthermore, the size of $\tau({\cal H})$ is independent of the size of input database $\D$.

We next prove the hardness in case $\mathcal{H}$ is not $\iota$-acyclic. Suppose, for a contradiction, that there exists an algorithm $\mathcal{A}_Q$ that can solve $Q(\D)$ in time $O(|\D|^{4/3 - \epsilon})$, for some $\epsilon > 0$. 
Since $\mathcal{H}$ is not $\iota$-acyclic, by Definition \ref{definition:iota-acyclicity}, $\mathcal{H}$ has a Berge cycle $(e^1, v^{1}, e^2, v^{2}$, $\dots, e^k, v^{k}, e^{k+1} = e^1)$ of length $k \geq 3$. 
This means that $v^{1}, \dots, v^{k}$ are pairwise distinct vertices from $\mathcal{V}$, $e^1, \dots, e^k$ are pairwise distinct hyperedges from $\mathcal{E}$, and for each $1 \leq i \leq k$, $v^{i} \in e^i$ and $v^{i} \in e^{i+1}$. 
Let us denote the relations corresponding to the hyperedges $e^1, \dots, e^k$ by $R_1, \dots, R_k \in \D$, respectively.
Assume, without loss of generality, that for each $1 \leq i \leq k$, the first two variables in the relation schema $R_i$ are $v^{i-1}$ and $v^i$ (we define $v^0 := v^k$ since the sequence is cyclic).
Let 
\[
Q^{\prime} := S_1(X_k, X_1) \land S_2(X_1, X_2) \land \dots \land S_k(X_{k-1}, X_k)
\] 
be the $k$-cycle \bcq query, i.e., the $k$-cycle Boolean conjunctive query with equality joins. 
We will show that we can construct an algorithm $\mathcal{A}_{Q^{\prime}}$ based on $\mathcal{A}_Q$ that can solve $Q'(\D')$ in time $O(|\D'|^{4/3 - \epsilon})$, for any input database $\D'=(S_1,\ldots,S_k)$.

We construct the following input database $\D$ for the \bcqij query $Q$:

\begin{itemize}
\item For each $1 \leq i \leq k$ and for each tuple $(a, b) \in S_i$ from $\D^{\prime}$, include in the relation $R_i$ from $\D$ the tuple $([a, a], [b, b], (-\infty, +\infty)$, $\dots, (-\infty, +\infty))$. 
That is, the tuple where the value of $v^{i-1}$ is the point interval $[a, a]$, the value of $v^i$ is the point interval $[b, b]$, and the value of each other variable from $R_i$ is the interval $(-\infty, +\infty)$;
\item Each relation $R$ other than $R_1, \dots, R_k$ from $\D$ consists of \textit{exactly one} tuple: $((-\infty, +\infty)$, $\dots, (-\infty, +\infty))$. 
That is, the tuple where the value of each variable from $R$ is the interval $(-\infty, +\infty)$.
\end{itemize}
	
By construction, $|\D| = O(|\D^{\prime}|)$. Moreover, since the interval $(-\infty, +\infty)$ joins with any other interval and since the intervals $[a, a]$ and $[b, b]$ join if and only if $a=b$, we have that
the set of satisfying assignments of the variables $X_1, \dots, X_k$ for the \bcq query $Q^{\prime}(\D')$ is in bijection with the set of satisfying assignments of the variables in $\calV$ for the \bcqij query $Q(\D)$. That is, the satisfying assignment $(X_1, \dots, X_k) = (x_1, \dots, x_k)$ for $Q'(\D')$ maps to the satisfying assignment for $Q(\D)$ which, for each $1 \leq i \leq k$, sets the value of attribute $v^i$ to $[x_i, x_i]$, and which sets the value of all other variables to $(-\infty, +\infty)$.
		
Therefore, on any input database  $\D^{\prime}$, the answer of $Q^{\prime}(\D^{\prime})$ is equal to the answer of $Q(\D)$. Hence, on the input database  $\D^{\prime}$, the algorithm $\mathcal{A}_{Q'}$ first constructs the database $\D$ in time $O(\lvert \D^{\prime}\rvert)$ and then calls the algorithm $\mathcal{A}_Q$ on input $\D$ (which runs in time $O(|\D|^{4/3 - \epsilon}) = O(|\D'|^{4/3 - \epsilon}|)$, since $|\D| = O(|\D'|)$), and returns its answer.
		
Thus, the algorithm $\mathcal{A}_{Q^{\prime}}$ solves $Q'(\D')$ in time $O(|\D'|^{4/3 - \epsilon})$, for any input database $\D^{\prime}$. However, the $k$-cycle query $Q^{\prime}$ is not $\alpha$-acyclic and cannot be computed in time $O(\lvert \D^{\prime} \rvert^{4/3-\epsilon})$ for $\epsilon >0$~\cite{tods/KhamisNRR16}, unless the widely-held \textsc{3SUM} conjecture fails~\cite{DBLP:conf/stoc/Patrascu10}. Contradiction.
\end{proof}

\subsection{Examples}
\label{appendix:examples-iota}

\begin{figure}
	\centering 
	\begin{subfigure}{0.3\textwidth}
		\includegraphics[width=\linewidth]{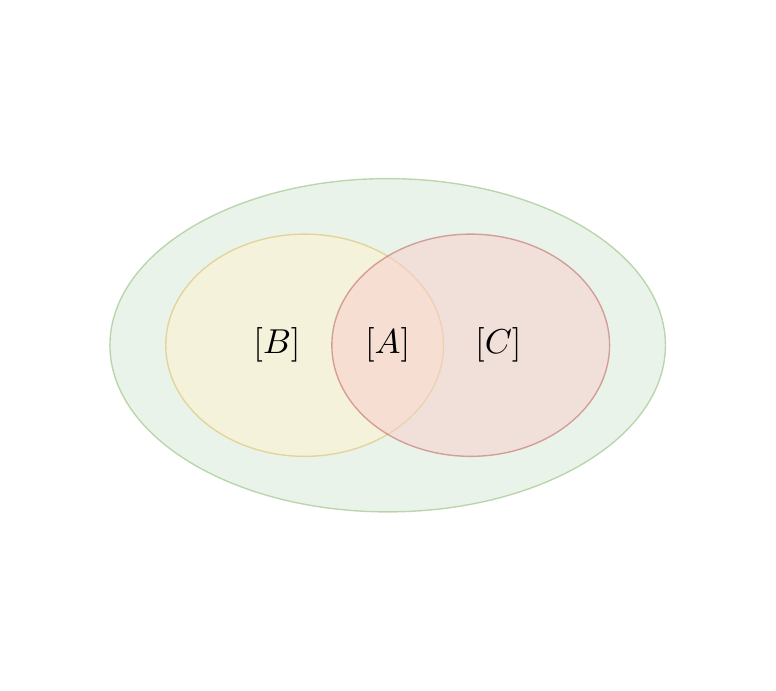}
		\caption{$\mathcal{H} = (\calV, \calE)$}
		\label{fig:non-iota-acyclic}
	\end{subfigure}\hfil 
	\begin{subfigure}{0.3\textwidth}
		\includegraphics[width=\linewidth]{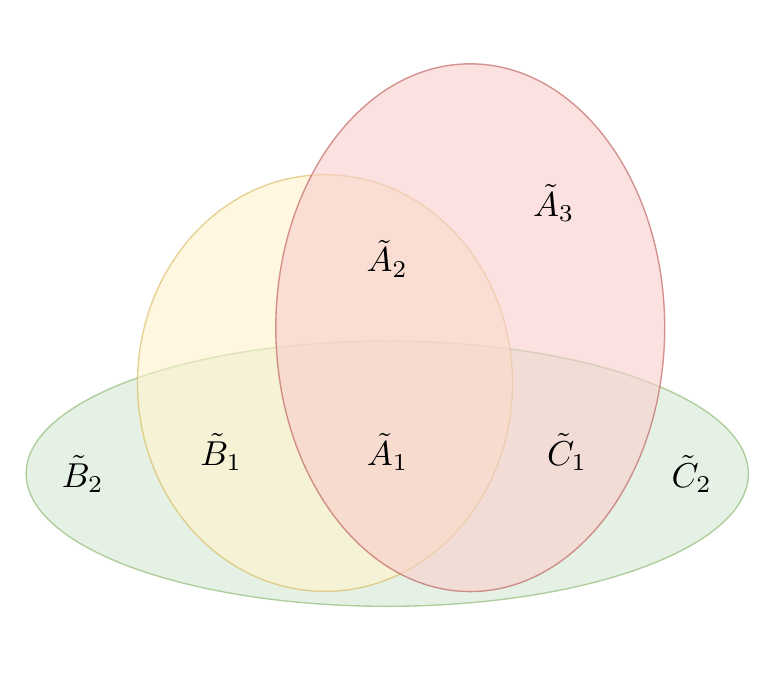}
		\caption{$\tilde{\mathcal{H}} = (\tilde{\calV}, \tilde{\calE}) \in \text{IJ}(\mathcal{H})$}
		\label{fig:non-iota-acyclic-1}
	\end{subfigure} 
	\caption{
		(a) Hypergraph $\mathcal{H}$ ($\alpha$-acyclic but not $\gamma$-acyclic). (b) Hypergraph $\tilde{\mathcal{H}}$ from $\tau(\mathcal{H})$ (not $\alpha$-acyclic). 
		There is a correspondence between each node $u$ in $\mathcal{H}$ and nodes $\tilde{u}_1,\ldots,\tilde{u}_{n_u}$ in $\tilde{\mathcal{H}}$, where $n_u$ is the number of edges containing $u$ in $\mathcal{H}$. For instance, vertex $[A]$ occurs in three hyperedges in $\mathcal{H}$ so there are three corresponding nodes $\tilde{A}_1$, $\tilde{A}_2$, and $\tilde{A}_3$ in $\tilde{\mathcal{H}}$.
			} 
	\label{fig:non-iota-example}
	\Description[]{}
\end{figure}

According to Corollary~\ref{corollary:iota-between-berge-and-gamma}, the class of $\iota$-acyclic hypergraphs is a strict subset of the class of $\gamma$-hypergraphs. 
Figure~\ref{fig:iota-example-hypergraphs} depicts six $\alpha$-acyclic hypergraphs. 
The hypergraph of Figure~\ref{fig:non-iota-hyp-3} is $\alpha$-acyclic but not $\gamma$-acyclic.
The hypergraphs of Figures~\ref{fig:non-iota-hyp-1}-~\ref{fig:non-iota-hyp-2} are $\gamma$-acyclic but not $\iota$-acyclic. 
The hypergraphs of Figures~\ref{fig:iota-hyp-1}-~\ref{fig:iota-hyp-3} are $\iota$-acyclic. 

Below we explain why these hypergraphs belong or do not belong to the class of $\iota$-acyclic hypergraphs. 
We argue by using the characterisation in Theorem~\ref{theorem:iota-acyclicity}. 
Furthermore, we analyse the complexity of the corresponding queries. 
To simplify our complexity analysis, we reduce the number of the EJ queries by dropping {\em singleton variables},
i.e. variables that occur in only one atom in an EJ. Such variables do not 
change the fractional hypertree and submodular widths of an EJ and do not affect the
overall time complexity~\cite{FAQ:PODS:2016,pods/Khamis0S17}. 

\subsubsection{Query in Figure~\ref{fig:non-iota-hyp-1}}

\begin{equation*}
	Q_{1} := \textcolor{c82b366}{R}([A], [B], [C]) \wedge 
			 \textcolor{cb85450}{S}([A], [B], [C]) \wedge 
			 \textcolor{cd6b656}{T}([A], [B], [C]).
\end{equation*}
The hypergraph has a Berge cycle of length $3$: $\textcolor{c82b366}{R}-[A]-\textcolor{cb85450}{S}-[B]-\textcolor{cd6b656}{T}-[C]-\textcolor{c82b366}{R}$.
Applying the reduction produces $3! \cdot 3! \cdot 3! = 216$  
equality join queries. Dropping singleton variables and collapsing
EJ queries that become identical afterwards reduces the total number of EJ queries to
$27$. We further simplify our analysis by grouping the different IJ queries into isomorphic classes. 
Then, we take a representative query from each isomorphism class and 
present its complexity (the complexity is the same for all queries in the same
isomorphism class). We derive the following $3$ isomorphic classes.

\paragraph*{Class 1: $\fhtw = 1.0, \subw = 1.0$}
\begin{equation*}
    \tilde Q_{1}^{(1)} :=
    \tilde R(A_1, B_1, C_1) \wedge
	\tilde S(A_1, B_1, C_1, A_2, B_2, C_2) \wedge
	\tilde T(A_1, B_1, C_1, A_2, B_2, C_2) 
\end{equation*}

The above query has a $\fhtw$ of 1.0 obtained through a tree decomposition consisting of the following bag:
$\{A_1, B_1, C_1, A_2, B_2, C_2\}$ containing relations $\tilde R, \tilde S$ and $\tilde T$
and has a fractional edge cover number $\rho^*$ of 1.0 obtained by assigning the following coefficients: $[0.0, 1.0, 0.0]$. 
	
\paragraph*{Class 2: $\fhtw = 1.0, \subw = 1.0$}
\begin{equation*}
    \tilde Q_{1}^{(2)} :=
    \tilde R(A_1, B_1, C_1, A_2) \wedge
	\tilde S(A_1, B_1, C_1, B_2, C_2) \wedge
	\tilde T(A_1, B_1, C_1, A_2, B_2, C_2) 
\end{equation*}

The above query has a $\fhtw$ of 1.0 obtained through a tree decomposition consisting of the following bag:
$\{A_1, B_1, C_1, A_2, B_2, C_2\}$ containing relations $\tilde R, \tilde S$ and $\tilde T$
and has a fractional edge cover number $\rho^*$ of 1.0 obtained by assigning the following coefficients: $[0.0, 0.0, 1.0]$. 

\paragraph*{Class 3: $\fhtw = 1.5, \subw = 1.5$}

\begin{equation*}
    \tilde Q_{1}^{(3)} :=
    \tilde R(A_1, B_1, C_1, A_2, B_2) \wedge
	\tilde S(A_1, B_1, C_1, A_2, C_2) \wedge
	\tilde T(A_1, B_1, C_1, B_2, C_2) 
\end{equation*}

The above query has a $\fhtw$ of 1.5 obtained through a tree decomposition consisting of the following bag:
$\{A_1, B_1, C_1, A_2, B_2, C_2\}$ containing relations $\tilde R, \tilde S$ and $\tilde T$
and has a fractional edge cover number $\rho^*$ of 1.5 obtained by assigning the following coefficients: $[0.5, 0.5, 0.5]$. 

We have $\ijw(Q_{1}) = 3/2$. Therefore, our approach has takes time $O(N^{3/2} \cdot \polylog N)$.

\subsubsection{Query in Figure~\ref{fig:non-iota-hyp-2}}	
	
	\begin{equation*}
		Q_{2} := \textcolor{c82b366}{R}([A], [B], [C]) \wedge 
				 \textcolor{cb85450}{S}([A], [B], [C]) \wedge
				 \textcolor{cd6b656}{T}([A], [B]).
	\end{equation*}

The hypergraph has a Berge cycle of length $3$: $\textcolor{c82b366}{R}-[A]-\textcolor{cd6b656}{T}-[B]-\textcolor{cb85450}{S}-[C]-\textcolor{c82b366}{R}$.
Applying the reduction produces $3! \cdot 3! \cdot 2! = 72$  
equality join queries. Dropping singleton variables and collapsing
EJ queries that become identical afterwards reduces the total number of EJ queries to
$9$. We further simplify our analysis by grouping the different IJ queries into isomorphic classes. 
Then, we take a representative query from each isomorphism class and 
present its complexity (the complexity is the same for all queries in the same
isomorphism class). We the following $3$ isomorphic classes.
	
\paragraph*{Class 1: $\fhtw = 1.0, \subw = 1.0$}
\begin{equation*}
    \tilde Q_{2}^{(1)} :=
    \tilde R(A_1, B_1, C_1) \wedge
	\tilde S(A_1, B_1, C_1, A_2, B_2) \wedge
	\tilde T(A_1, B_1, A_2, B_2) 
\end{equation*}

The above query has a $\fhtw$ of 1.0 obtained through a tree decomposition consisting of the following bag:
$\{A_1, B_1, C_1, A_2, B_2\}$ containing relations $\tilde R, \tilde S$ and $\tilde T$
and has a fractional edge cover number $\rho^*$ of 1.0 obtained by assigning the following coefficients: $[0.0, 1.0, 0.0]$. 
	
\paragraph*{Class 2: $\fhtw = 1.5, \subw = 1.5$}
\begin{equation*}
    \tilde Q_{2}^{(2)} :=
    \tilde R(A_1, B_1, C_1, A_2) \wedge
	\tilde S(A_1, B_1, C_1, B_2) \wedge
	\tilde T(A_1, B_1, A_2, B_2) 
\end{equation*}

The above query has a $\fhtw$ of 1.5 obtained through a tree decomposition consisting of the following bag:
$\{A_1, B_1, C_1, A_2, B_2\}$ containing relations $\tilde R, \tilde S$ and $\tilde T$
and has a fractional edge cover number $\rho^*$ of 1.5 obtained by assigning the following coefficients: $[0.5, 0.5, 0.5]$. 

\paragraph*{Class 3: $\fhtw = 1.0, \subw = 1.0$}

\begin{equation*}
    \tilde Q_{2}^{(3)} :=
    \tilde R(A_1, B_1, C_1, A_2, B_2) \wedge
	\tilde S(A_1, B_1, C_1, A_2, B_2) \wedge
	\tilde T(A_1, B_1) 
\end{equation*}

The above query has a $\fhtw$ of 1.0 obtained through a tree decomposition consisting of the following bag:
$\{A_1, B_1, C_1, A_2, B_2\}$ containing relations $\tilde R, \tilde S$ and $\tilde T$
and has a fractional edge cover number $\rho^*$ of 1.0 obtained by assigning the following coefficients: $[1.0, 0.0, 0.0]$. 

We have $\ijw(Q_{2}) = 3/2$. Therefore, our approach takes time $O(N^{3/2} \cdot \polylog N)$.

\subsubsection{Query in Figure~\ref{fig:non-iota-hyp-3}}	
	\begin{equation*}
		Q_{3} := \textcolor{c82b366}{R}([A], [B], [C]) \wedge 
				 \textcolor{cb85450}{S}([B], [C]) \wedge 
				 \textcolor{cd6b656}{T}([A], [B]).
	\end{equation*}
The hypergraph has a Berge cycle of length $3$: $\textcolor{c82b366}{R}-[A]-\textcolor{cd6b656}{T}-[B]-\textcolor{cb85450}{S}-[C]-\textcolor{c82b366}{R}$.
Applying the reduction produces $2! \cdot 3! \cdot 2! = 24$ 
equality join queries. Dropping singleton variables and collapsing
EJ queries that become identical afterwards reduces the total number of EJ queries to
$3$. In the following we analyse each of the three cases separately.

\paragraph*{Case 1: $\fhtw = 1.5, \subw = 1.5$}
\begin{equation*}
    \tilde Q_{3}^{(1)} :=
    \tilde R(A_1, B_1, C_1) \wedge
	\tilde S(B_1, C_1, B_2) \wedge
	\tilde T(A_1, B_1, B_2) 
\end{equation*}

The above query has a $\fhtw$ of 1.5 obtained through a tree decomposition consisting of the following bag:
$\{A_1, B_1, C_1, B_2\}$ containing relations $\tilde R, \tilde S$ and $\tilde T$
and has a fractional edge cover number $\rho^*$ of 1.5 obtained by assigning the following coefficients: $[0.5, 0.5, 0.5]$. 
	
\paragraph*{Case 2: $\fhtw = 1.0, \subw = 1.0$}
\begin{equation*}
    \tilde Q_{3}^{(2)} :=
    \tilde R(A_1, B_1, C_1, B_2) \wedge
	\tilde S(B_1, C_1, B_2) \wedge
	\tilde T(A_1, B_1) 
\end{equation*}

The above query has a $\fhtw$ of 1.0 obtained through a tree decomposition consisting of the following bag:
$\{A_1, B_1, C_1, B_2\}$ containing relations $\tilde R, \tilde S$ and $\tilde T$
and has a fractional edge cover number $\rho^*$ of 1.0 obtained by assigning the following coefficients: $[1.0, 0.0, 0.0]$. 

\paragraph*{Case 3: $\fhtw = 1.0, \subw = 1.0$}
\begin{equation*}
    \tilde Q_{3}^{(3)} :=
    \tilde R(A_1, B_1, C_1, B_2) \wedge
	\tilde S(B_1, C_1) \wedge
	\tilde T(A_1, B_1, B_2) 
\end{equation*}

The above query has a $\fhtw$ of 1.0 obtained through a tree decomposition consisting of the following bag:
$\{A_1, B_1, C_1, B_2\}$ containing relations $\tilde R, \tilde S$ and $\tilde T$
and has a fractional edge cover number $\rho^*$ of 1.0 obtained by assigning the following coefficients: $[1.0, 0.0, 0.0]$. 

We have $\ijw(Q_{3}) = 3/2$. Therefore, our approach takes time $O(N^{3/2} \cdot \polylog N)$.

\subsubsection{Query in Figure~\ref{fig:iota-hyp-1}}	
	
	\begin{equation*}
		Q_{4} := \textcolor{c82b366}{R}([A], [B], [C]) \wedge 
				 \textcolor{cb85450}{S}([A], [B], [C]) \wedge 
				 \textcolor{cd6b656}{T}([A]).
	\end{equation*}
	The hypergraph has Berge cycles of length 2 but no Berge cycle of length $\geq 3$. 
	This is because it has three distinct nodes but one of them, namely $[A]$, only belongs to one edge so it cannot be part of a cycle.
	Applying the reduction produces $3! \cdot 2! \cdot 1! = 12$ queries.
	All the queries in the reduction are $\alpha$-acyclic, hence our approach takes time $O(N \cdot \polylog N)$.

\subsubsection{Query in Figure~\ref{fig:iota-hyp-2}}	
	
\begin{equation*}
		Q_{5} := \textcolor{c82b366}{R}([A], [B]) \wedge 
				 \textcolor{cb85450}{S}([A], [C]) \wedge 
				 \textcolor{cd6b656}{T}([C], [D]) \wedge 
				 \textcolor{cd6b656}{T}([C], [E]).
	\end{equation*}
	The hypergraph has no Berge cycle.
	Applying the reduction produces $2! \cdot 1! \cdot 3! \cdot 1! \cdot 1! = 12$ queries.
	All the queries in the reduction are $\alpha$-acyclic, hence our approach takes time $O(N \cdot \polylog N)$.

\subsubsection{Query in Figure~\ref{fig:iota-hyp-3}}	

	\begin{equation*}
		Q_{6} := \textcolor{c82b366}{R}([A], [B], [C]) \wedge 
				 \textcolor{cb85450}{S}([A], [B]).
	\end{equation*}
	The hypergraph has one Berge cycle of length 2 but no Berge cycle of length $\geq 3$.
	Applying the reduction produces $2! \cdot 2! \cdot 1! = 4$ queries.
	All the queries in the reduction are $\alpha$-acyclic, hence our approach takes time $O(N \cdot \polylog N)$.

\begin{figure}
\centering 
\begin{subfigure}{0.27\textwidth}
  \includegraphics[width=\linewidth]{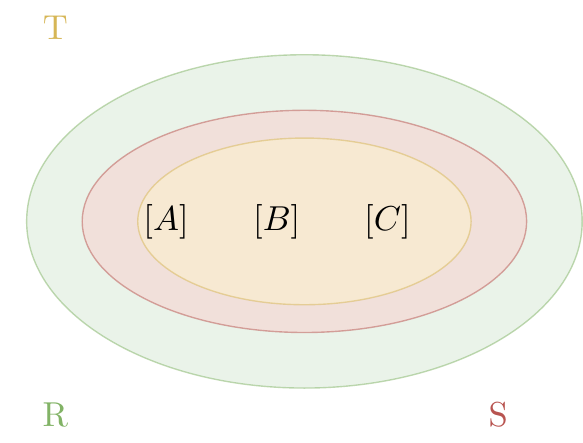}
  \caption{}
  \label{fig:non-iota-hyp-1}
\end{subfigure}\hfil 
\begin{subfigure}{0.27\textwidth}
  \includegraphics[width=\linewidth]{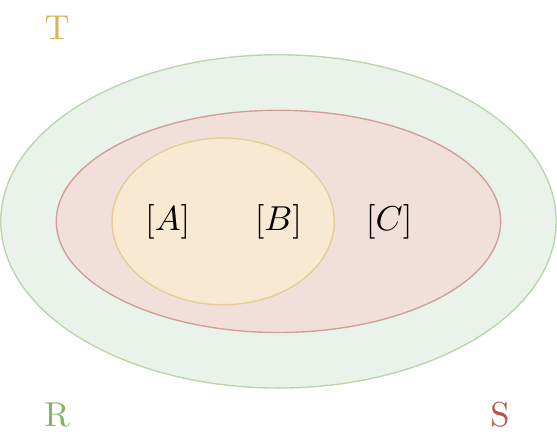}
  \caption{}
  \label{fig:non-iota-hyp-2}
\end{subfigure}\hfil 
\begin{subfigure}{0.27\textwidth}
  \includegraphics[width=\linewidth]{hyp-3.pdf}
  \caption{}
  \label{fig:non-iota-hyp-3}
\end{subfigure}
\begin{subfigure}{0.27\textwidth}
  \includegraphics[width=\linewidth]{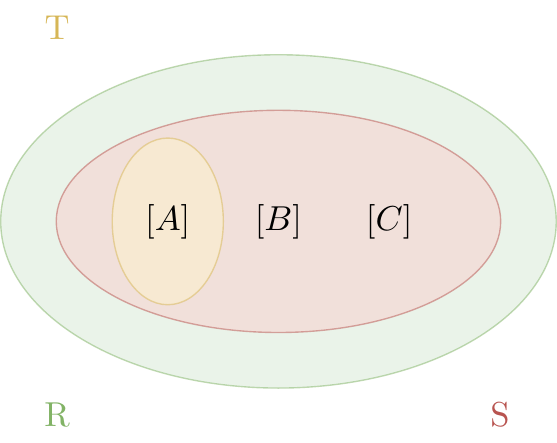}
  \caption{}
  \label{fig:iota-hyp-1}
\end{subfigure}\hfil 
\begin{subfigure}{0.3\textwidth}
  \includegraphics[width=\linewidth]{hyp-5.pdf}
  \caption{}
  \label{fig:iota-hyp-2}
\end{subfigure}\hfil
\begin{subfigure}{0.27\textwidth}
  \includegraphics[width=\linewidth]{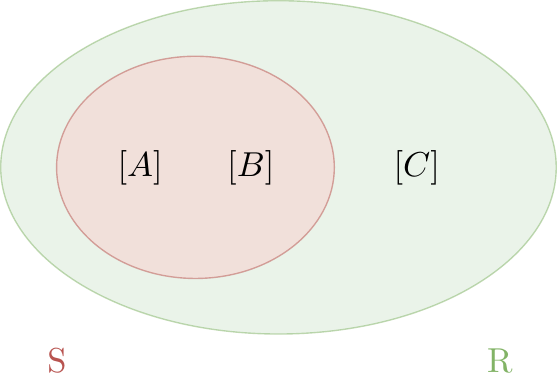}
  \caption{}
  \label{fig:iota-hyp-3}
\end{subfigure}
\caption{Example hypergraphs. Hypergraphs~\ref{fig:non-iota-hyp-1} -~\ref{fig:non-iota-hyp-3}
are $\alpha$-acyclic but not $\iota$-acyclic. Hypergraphs~\ref{fig:iota-hyp-1} -~\ref{fig:iota-hyp-3}
are $\iota$-acyclic.} 
\label{fig:iota-example-hypergraphs}
\Description[]{}
\end{figure}

\section{Complexity Analysis for Three Cyclic \bcqij Queries: Our Approach versus FAQ-AI}
\label{appendix:examples}

We next exemplify the upper bound on the time complexity obtained by our approach versus FAQ-AI~\cite{FAQAI:TODS:2020} for three cyclic queries with intersection joins: the triangle query, the Loomis Whitney query with four variables, and the 4-clique query.
See Table~\ref{tab:examples} for a summary of the comparison.

\begin{table}[th]
    \begin{tabular}{|c|c|c|}
        \hline
        \bcqij Query &
        $\faqai$ approach~\cite{FAQAI:TODS:2020} &
        Our approach\\
        \hline
        Triangle query &
        $O(N^2 \log^3 N)$ &
        $O(N^{3/2} \log^3 N$)\\
        Loomis-Whitney query 4 &
        $O(N^2 \log^k N)$, for $k \geq 9$ &
        $O(N^{5/3} \log^8 N)$ \\
        4-clique &
        $O(N^3 \log^k N)$, for $k \geq 5$ &
        $O(N^2 \log^8 N)$\\
        \hline
    \end{tabular}
    \caption{Comparison of the runtimes of our approach versus the $\faqai$ approach
    on three intersection join queries. See Appendix~\ref{appendix:examples} for more details.}
    \label{tab:examples}
\end{table}

\subsection{The triangle intersection join query}
\label{app:ex:triangle}

Consider the triangle intersection join query from Section~\ref{sec:example}:
\[
	Q_{\triangle} = R([A],[B]) \wedge S([B], [C]) \wedge T([A], [C])
\]
While we showed in Section~\ref{sec:example} that our approach solves this query in time $O(N^{3/2}\log^3 N)$,
we show here that the $\faqai$ approach~\cite{FAQAI:TODS:2020} needs time $O(N^2\log^3 N)$ for this query.

To apply the $\faqai$ approach to query $Q_\triangle$ above, we have to express it as
a query with inequality joins. In particular, each interval variable (say $[A]$) is
going to be replaced by two scalar variables ($A.l$ and $A.r$) representing the beginning and end
of $[A]$, i.e. $[A] = [A.l, A.r]$.
Let $A.l^{(R)}$ and $A.r^{(R)}$ be the beginning and end of interval $[A]$ in relation $R$.
Similarly, we define $A.l^{(T)}$ and $A.r^{(T)}$.
For the intervals $[A.l^{(R)}, A.r^{(R)}]$ and $[A.l^{(T)}, A.r^{(T)}]$
to overlap, the following condition must hold:
\begin{eqnarray}
    \left(A.l^{(T)} \leq A.l^{(R)} \leq A.r^{(T)}\right) \vee
    \left(A.l^{(R)} \leq A.l^{(T)} \leq A.r^{(R)}\right)\label{eq:triangle:faqai:intersect}
\end{eqnarray}
For each variable $X\in \{A, B, C\}$, let $F(X)$ denote the set of relations containing $X$, i.e.
\begin{eqnarray*}
    F(A) &:=& \{T, R\},\\
    F(B) &:=& \{R, S\},\\
    F(C) &:=& \{S, T\}.
\end{eqnarray*}
Note that~\eqref{eq:triangle:faqai:intersect} can be written equivalently as:
\begin{eqnarray}
    \bigvee_{\substack{V \in F(A)\\W \in F(A)-\{V\}}} A.l^{(W)} \leq A.l^{(V)} \leq A.r^{(W)}
    \label{eq:triangle:faqai:intersect2}
\end{eqnarray}
The same applies to the other two interval variables $[B]$ and $[C]$.
After distributing disjunctions over conjunctions, $Q_\triangle$ can be written as follows:
\begin{eqnarray}
    Q_\triangle =
    \bigvee_{\substack{
        (V_A, V_B, V_C)\in F(A)\times F(B)\times F(C)\\
        W_A \in F(A) -\{V_A\}\\
        W_B \in F(B) -\{V_B\}\\
        W_C \in F(C) -\{V_C\}
    }}
    &R\left(A.l^{(R)}, A.r^{(R)}, B.l^{(R)}, B.r^{(R)}\right) \wedge
     S\left(B.l^{(S)}, B.r^{(S)}, C.l^{(S)}, C.r^{(S)}\right) \wedge
     T\left(A.l^{(T)}, A.r^{(T)}, C.l^{(T)}, C.r^{(T)}\right) \wedge\nonumber\\
    &\left(A.l^{(W_A)} \leq A.l^{(V_A)} \leq A.r^{(W_A)}\right)\wedge
     \left(B.l^{(W_B)} \leq B.l^{(V_B)} \leq B.r^{(W_B)}\right)\wedge
     \left(C.l^{(W_C)} \leq C.l^{(V_C)} \leq C.r^{(W_C)}\right)\nonumber\\
    \label{eq:triangle:faqai}
\end{eqnarray}
For each $(V_A, V_B, V_C)\in F(A)\times F(B)\times F(C)$,
the inner conjunction in~\eqref{eq:triangle:faqai} is an $\faqai$ query~\cite{FAQAI:TODS:2020}.
While solving each such query, it is possible to relax the definition of tree decompositions
thus extending the set of valid tree decompositions and potentially reducing the
fractional hypertree and submodular widths, ultimately resulting in the {\em relaxed} versions of these widths
$\fhtw_\ell$ and $\subw_\ell$ respectively~\cite{FAQAI:TODS:2020}.
In particular, in a relaxed tree decomposition, we no longer require each inequality to have
its variables contained in one bag of the tree. Instead, it suffices to have its variables
contained in two adjacent bags in the tree.

Fix an arbitrary $(V_A, V_B, V_C)\in F(A)\times F(B)\times F(C)$ and let $\bar Q$
be the resulting $\faqai$ query corresponding to the inner conjunction in~\eqref{eq:triangle:faqai}.
Note that for every pair of the relations $R, S$ and $T$,
the query $\bar Q$ contains at least one inequality between two variables from that pair.
Hence if we distribute the relations $R, S$ and $T$ among three or more bags,
there will be an inequality between two non-adjacent bags thus violating the condition for a
relaxed tree decomposition. Therefore, every relaxed tree decomposition of $\bar Q$ must have at most
two bags where each one the relations $R, S$ and $T$ falls within one bag.
Consequently, there will be one bag with (at least) two relations.
Noting that the variables of relations $R, S$ and $T$ are pairwise disjoint,
this implies that $\fhtw_\ell(\bar Q) \geq 2$.
To minimize $\fhtw_\ell(\bar Q)$, an optimal tree decomposition would have two bags with
two relations in one bag and the third relation in the other, thus resulting in $\fhtw_\ell(\bar Q) = 2$.

The relaxed submodular width $\subw_\ell(\bar Q)$ is not any better in this case. In particular,
consider the following function $\bar h:2^{\vars(\bar Q)}\to \R^+$:
\begin{equation}
    \bar h(X) := \frac{|X|}{4}, \quad\forall X \subseteq \vars(\bar Q).
    \label{eq:triangle:faqai:barh}
\end{equation}
Recall notation from Section~\ref{subsection:widths} and~\cite{FAQAI:TODS:2020, jacm/Marx13}.
The above $\bar h$ is a modular function hence it is submodular. Since it is also monotone,
$\bar h$ is a polymatroid, i.e. $\bar h \in \Gamma_{\vars(\bar Q)}$ where
$\Gamma_{\vars(\bar Q)}$ denotes the set of polymatroids over the variables $\vars(\bar Q)$~\cite{FAQAI:TODS:2020}.
Moreover for each finite input relation $E\in \{R, S, T\}$,
we have $\bar h(E) = 1$ since each one of these relations has four variables, i.e. $|E| = 4$.
Therefore $\bar h$ is edge dominated, i.e. $\bar h \in \ed(\bar Q)$ where $\ed(\bar Q)$
denotes the set of edge dominated functions $h:2^{\vars(\bar Q)}\to\R^+$.
Recall the definition of $\subw_\ell(Q)$ for an $\faqai$ query $Q$ from~\cite{FAQAI:TODS:2020} where $\td_\ell(Q)$ denotes
the set of relaxed tree decompositions of $Q$:
\begin{equation}
\subw_\ell(Q) :=
    \max_{h \in \ed(Q) \cap \Gamma_{\vars(Q)}}
    \min_{(\calT, \chi) \in \td_\ell(Q)}
    \max_{t \in V(\calT)} h(\chi(t)).
    \label{eq:defn:subw:ell}
\end{equation}
Based on the above definition and by choosing $\bar h \in \ed(\bar Q) \cap \Gamma_{\vars(\bar Q)}$, we have
\[
    \subw_\ell(\bar Q) \geq
    \min_{(\calT, \chi) \in \td_\ell(\bar Q)}
    \max_{t \in V(\calT)} \bar h(\chi(t)).
\]
However for each relaxed tree decomposition $(\calT, \chi)\in \td_\ell(\bar Q)$,
we argued before that there must exist some bag $t^*\in V(\calT)$ containing at least
two of the input relations $\{R, S, T\}$ hence at least 8 distinct variables,
meaning that $|\chi(t^*)| \geq 8$.
From~\eqref{eq:triangle:faqai:barh}, we have $\bar h(\chi(t^*))\geq 2$ which implies that
$\subw_\ell(\bar Q) \geq 2$.
And since $\subw_\ell(Q) \leq \fhtw_\ell(Q)$ for any query $Q$ according to~\cite{FAQAI:TODS:2020},
we have
\begin{equation}
    \subw_\ell(\bar Q) = \fhtw_\ell(\bar Q) = 2.
\end{equation}

Finally according to Theorem 3.5 in~\cite{FAQAI:TODS:2020}, the time complexity in $\faqai$
involves an extra factor of $(\log N)^{\max(k-1,1)}$ where $k$ is the number of inequalities
that involve variables from two adjacent bags (i.e. that are not contained in a single bag) in an optimal relaxed tree decomposition.
In query~\eqref{eq:triangle:faqai}, when constructing any optimal relaxed tree decomposition
involving two relations in one bag (say $R$ and $S$) and the third relation in another bag,
there will be exactly $4$ inequalities involving variables from both bags. Hence $k = 4$
and the overall time complexity of $\faqai$ for $Q_\triangle$ is $O(N^2 \log^3 N)$.

\subsection{The Loomis-Whitney intersection join query with 4 variables}
\label{app:ex:LW4}

The Loomis-Whitney intersection join query with 4-variables (LW4) is as follows:
\begin{equation}
    Q_{\mathrm{LW4}} = R([A], [B], [C]) \wedge S([B], [C], [D]) \wedge T([C], [D], [A])
        \wedge U([D], [A], [B]),
    \label{eq:LW4}
\end{equation}
where each one of the variables $[A], [B], [C]$ and $[D]$ above is an interval variable.
The $\faqai$ approach~\cite{FAQAI:TODS:2020} cannot solve this query in time
better than $O(N^2 \log^9 N)$.
However, the reduction from this work can be used to solve this query in time
$O(N^{5/3} \log^8 N)$. Below we apply both the $\faqai$ approach and the one from this
work.



\subsubsection{The $\faqai$ approach~\cite{FAQAI:TODS:2020} takes time $O(N^2 \log^k N)$ for some $k\geq 9$}

Similar to Section~\ref{app:ex:triangle}, to apply the $\faqai$ approach to query~\eqref{eq:LW4} above, we formulate it as a query with inequality joins.
Specifically we replace each interval variable $[A]$ with two scalar variables $A.l$ and $A.r$ representing the beginning and end
of interval $[A]$. Furthermore, we use $A.l^{(R)}$ and $A.r^{(R)}$ to refer to the beginning
and end of interval $[A]$ in relation $R$, and similarly
we use $A.l^{(T)}, A.r^{(T)}, A.l^{(U)}$ and $A.r^{(U)}$ to refer to corresponding interval
boundaries in relations $T$ and $U$.
The three intervals $[A.l^{(R)}, A.r^{(R)}], [A.l^{(T)}, A.r^{(T)}]$ and $[A.l^{(U)}, A.r^{(U)}]$
overlap if and only if the following condition is met:
\begin{eqnarray}
    \left(A.l^{(T)} \leq A.l^{(R)} \leq A.r^{(T)}\right) &\wedge&
    \left(A.l^{(U)} \leq A.l^{(R)} \leq A.r^{(U)}\right) \quad\vee\nonumber\\
    \left(A.l^{(R)} \leq A.l^{(T)} \leq A.r^{(R)}\right) &\wedge&
    \left(A.l^{(U)} \leq A.l^{(T)} \leq A.r^{(U)}\right) \quad\vee\nonumber\\
    \left(A.l^{(R)} \leq A.l^{(U)} \leq A.r^{(R)}\right) &\wedge&
    \left(A.l^{(T)} \leq A.l^{(U)} \leq A.r^{(T)}\right) \label{eq:lw4:faqai:intersect}
\end{eqnarray}
Given a variable $X\in \{A, B, C, D\}$, let $F(X)$ denote the set of relations containing $X$, i.e.
\begin{eqnarray*}
    F(A) &:=& \{T, U, R\},\\
    F(B) &:=& \{U, R, S\},\\
    F(C) &:=& \{R, S, T\},\\
    F(D) &:=& \{S, T, U\}.
\end{eqnarray*}
Condition~\eqref{eq:lw4:faqai:intersect} can be formulated as follows:
\begin{eqnarray}
    \bigvee_{V \in F(A)} \bigwedge_{W \in F(A)-\{V\}} A.l^{(W)} \leq A.l^{(V)} \leq A.r^{(W)}
    \label{eq:lw4:faqai:intersect2}
\end{eqnarray}

The same applies to the other three interval variables $[B], [C]$ and $[D]$.
By distributing disjunctions over conjunctions, we rewrite query~\eqref{eq:LW4} as follows:
\begin{eqnarray}
    Q_{\mathrm{LW4}} =
    \bigvee_{(V_A, V_B, V_C, V_D)\in F(A)\times F(B)\times F(C)\times F(D)}
    &R\left(A.l^{(R)}, A.r^{(R)}, B.l^{(R)}, B.r^{(R)}, C.l^{(R)}, C.r^{(R)}\right) \wedge\nonumber\\
    &S\left(B.l^{(S)}, B.r^{(S)}, C.l^{(S)}, C.r^{(S)}, D.l^{(S)}, D.r^{(S)}\right) \wedge\nonumber\\
    &T\left(C.l^{(T)}, C.r^{(T)}, D.l^{(T)}, D.r^{(T)}, A.l^{(T)}, A.r^{(T)}\right) \wedge\nonumber\\
    &U\left(D.l^{(U)}, D.r^{(U)}, A.l^{(U)}, A.r^{(U)}, B.l^{(U)}, B.r^{(U)}\right) \wedge\nonumber\\
    &\displaystyle{\bigwedge_{W_A \in F(A)-\{V_A\}}} A.l^{(W_A)} \leq A.l^{(V_A)} \leq A.r^{(W_A)}\wedge\nonumber\\
    &\displaystyle{\bigwedge_{W_B \in F(B)-\{V_B\}}} B.l^{(W_B)} \leq B.l^{(V_B)} \leq B.r^{(W_B)}\wedge\nonumber\\
    &\displaystyle{\bigwedge_{W_C \in F(C)-\{V_C\}}} C.l^{(W_C)} \leq C.l^{(V_C)} \leq C.r^{(W_C)}\wedge\nonumber\\
    &\displaystyle{\bigwedge_{W_D \in F(D)-\{V_D\}}} D.l^{(W_D)} \leq D.l^{(V_D)} \leq D.r^{(W_D)}
    \label{eq:LW4:faqai}
\end{eqnarray}

For each $(V_A, V_B, V_C, V_D)\in F(A)\times F(B)\times F(C)\times F(D)$,
the inner conjunction in~\eqref{eq:LW4:faqai} is an $\faqai$ query~\cite{FAQAI:TODS:2020}.
While solving each such query, we can use {\em relaxed} tree decompositions~\cite{FAQAI:TODS:2020},
in a similar way to what we did in Section~\ref{app:ex:triangle}.

The following is one $\faqai$ query from~\eqref{eq:LW4:faqai} obtained by choosing
$V_A = U, V_B = S, V_C = T, V_D = T$:
\begin{eqnarray*}
    \bar Q =
    &R\left(A.l^{(R)}, A.r^{(R)}, B.l^{(R)}, B.r^{(R)}, C.l^{(R)}, C.r^{(R)}\right) \wedge
    S\left(B.l^{(S)}, B.r^{(S)}, C.l^{(S)}, C.r^{(S)}, D.l^{(S)}, D.r^{(S)}\right) \wedge\\
    &T\left(C.l^{(T)}, C.r^{(T)}, D.l^{(T)}, D.r^{(T)}, A.l^{(T)}, A.r^{(T)}\right) \wedge
    U\left(D.l^{(U)}, D.r^{(U)}, A.l^{(U)}, A.r^{(U)}, B.l^{(U)}, B.r^{(U)}\right) \wedge\\
    &\left(A.l^{(R)} \leq A.l^{(U)} \leq A.r^{(R)}\right) \wedge
    \left(A.l^{(T)} \leq A.l^{(U)} \leq A.r^{(T)}\right)\wedge\\
    &\left(B.l^{(R)} \leq B.l^{(S)} \leq B.r^{(R)}\right) \wedge
    \left(B.l^{(U)} \leq B.l^{(S)} \leq B.r^{(U)}\right)\wedge\\
    &\left(C.l^{(R)} \leq C.l^{(T)} \leq C.r^{(R)}\right) \wedge
    \left(C.l^{(S)} \leq C.l^{(T)} \leq C.r^{(S)}\right)\wedge\\
    &\left(D.l^{(S)} \leq D.l^{(T)} \leq D.r^{(S)}\right) \wedge
    \left(D.l^{(U)} \leq D.l^{(T)} \leq D.r^{(U)}\right)
\end{eqnarray*}
Note that in the above query $\bar Q$,
there exists at least one inequality between two variables from every pair of the relations $R, S, T$ and $U$.
Therefore if we were to divide the relations $R, S, T$ and $U$ among three or more bags,
there will be at least one inequality between two non-adjacent bags thus violating the definition of a
relaxed tree decomposition. As a result, in every relaxed tree decomposition of $\bar Q$, there
can be at most
two bags where each one the relations $R, S, T$ and $U$ falls within one bag.
Consequently, there will be at least one bag with (at least) two relations.
Noting that the variables of relations $R, S, T$ and $U$ are pairwise disjoint,
this implies that $\fhtw_\ell(\bar Q) \geq 2$.
To minimize $\fhtw_\ell(\bar Q)$, an optimal tree decomposition would have two bags with exactly
two relations in each bag, thus resulting in $\fhtw_\ell(\bar Q) = 2$.

In this case, the relaxed submodular width $\subw_\ell(\bar Q)$ is identical $\fhtw_\ell(\bar Q)$.
To show this, we use the following function $\bar h:2^{\vars(\bar Q)}\to \R^+$ in
a similar way to what we did in Section~\ref{app:ex:triangle}:
\begin{equation}
    \bar h(X) := \frac{|X|}{6}, \quad\forall X \subseteq \vars(\bar Q).
    \label{eq:faqai:barh}
\end{equation}
(Recall notation from Section~\ref{subsection:widths} and~\cite{FAQAI:TODS:2020, jacm/Marx13}.)
The above $\bar h$ is modular hence submodular. Because it is also monotone,
$\bar h$ is a polymatroid, i.e. $\bar h \in \Gamma_{\vars(\bar Q)}$ where
$\Gamma_{\vars(\bar Q)}$ denotes the set of polymatroids over the variables $\vars(\bar Q)$~\cite{FAQAI:TODS:2020}.
Moreover for each finite input relation $E\in \{R, S, T, U\}$,
we have $\bar h(E) = 1$ since each one of these relations has six variables, i.e. $|E| = 6$.
Therefore $\bar h$ is edge dominated, i.e. $\bar h \in \ed(\bar Q)$.
Based on the definition of $\subw_\ell$ from~\eqref{eq:defn:subw:ell} and by choosing $\bar h \in \ed(\bar Q) \cap \Gamma_{\vars(\bar Q)}$, we have
\[
    \subw_\ell(\bar Q) \geq
    \min_{(\calT, \chi) \in \td_\ell(\bar Q)}
    \max_{t \in V(\calT)} \bar h(\chi(t)),
\]
where $\td_\ell(\bar Q)$ denotes the set of relaxed tree decompositions of $\bar Q$.
However for each relaxed tree decomposition $(\calT, \chi)\in \td_\ell(\bar Q)$,
we argued before that there must exist some bag $t^*\in V(\calT)$ containing at least
two of the input relations $\{R, S, T, U\}$ hence at least 12 distinct variables,
meaning that $|\chi(t^*)| \geq 12$.
From~\eqref{eq:faqai:barh}, we have $\bar h(\chi(t^*))\geq 2$ which implies that
$\subw_\ell(\bar Q) \geq 2$.
And since $\subw_\ell(Q) \leq \fhtw_\ell(Q)$ for any query $Q$ according to~\cite{FAQAI:TODS:2020},
we have
\begin{equation}
    \subw_\ell(\bar Q) = \fhtw_\ell(\bar Q) = 2.
\end{equation}
As mentioned in Section~\ref{app:ex:triangle}, the runtime complexity in $\faqai$ involves
an extra factor of $(\log N)^{\max(k-1,1)}$ where $k$ is the number of inequalities
involving variables from two adjacent bags in an optimal relaxed tree decomposition.
In query $\bar Q$ above, the minimum value of $k$ over all optimal relaxed tree decompositions
is 10. Hence the $\faqai$ time complexity for $\bar Q$ is $O(N^2 \log^9 N)$.

Finally note that for every other choice of
$(V_A, V_B, V_C, V_D)\in F(A)\times F(B)\times F(C)\times F(D)$, the resulting $\faqai$
 query $\tilde Q$ corresponding to the inner conjunction in~\eqref{eq:LW4:faqai} must satisfy
 \begin{eqnarray*}
    \fhtw_\ell(\tilde Q) \leq \fhtw_\ell(\bar Q),\\
    \subw_\ell(\tilde Q) \leq \subw_\ell(\bar Q).
 \end{eqnarray*}
 This is because $\td_\ell(\tilde Q) \supseteq \td_\ell(\bar Q)$ since $\bar Q$ already contains
 at least one equality involving every pair of the relations $R, S, T$ and $U$.

\subsubsection{Our approach takes time $O(N^{5/3} \log^8 N)$.}

Applying the reduction from this work to query~\eqref{eq:LW4} produces a large number of equality
join queries. We can reduce the number of these EJs by dropping {\em singleton variables},
that is variables that occur in only one atom in an EJ. Such variables don't
change the fractional hypertree and submodular widths of an EJ and don't affect the
overall time complexity~\cite{FAQ:PODS:2016,pods/Khamis0S17}. Dropping singleton variables and collapsing
EJ queries that become identical afterwards reduces the total number of EJ queries down to
81, which is still big.

Luckily many of these 81 queries are isomorphic to one another. Aided by a computer program
to analyze them, it turns out that they can be grouped into only 6 isomorphism classes.
Below we take a representative query from each isomorphism class and compute its
fractional hypertree and submodular widths (which are the same for all queries in the same
isomorphism class).

\paragraph*{Class 1: $\fhtw = 2, \subw = 1.5$}
Queries in the first class are isomorphic to the following. (We use $\tilde R, \tilde S,
\ldots$ to denote relations corresponding to $R, S, \ldots$ {\em after} the reduction.)
\begin{equation}
    Q_{\mathrm{LW4}}^{(1)} :=
    \tilde R(A_1, B_1, C_1, B_2, C_2) \wedge
	\tilde S(B_1, C_1, D_1, C_2, D_2) \wedge
	\tilde T(C_1, D_1, A_1, D_2, A_2) \wedge
	\tilde U(D_1, A_1, B_1, A_2, B_2)
    \label{eq:LW4:C1}
\end{equation}
The $\fhtw$ of the above query is 2, which is higher than our final target of $5/3$ that is
needed to achieve the runtime of $O(N^{5/3} \log^8 N)$.
Therefore we skip how to compute $\fhtw$ for this query.

Luckily, the $\subw$ turns out to be 1.5, just like the 4-cycle query~\cite{pods/Khamis0S17, AYZ97}.
And in fact, there is a corresponding algorithm to answer this query in time
$O(N^{1.5} \log N)$, which is very similar in nature to the algorithm for solving the
4-cycle query in the same time complexity~\cite{pods/Khamis0S17, AYZ97}. We will skip showing the
computation of the $\subw$ itself and directly show the corresponding algorithm
solving~\eqref{eq:LW4:C1} in the desired time.

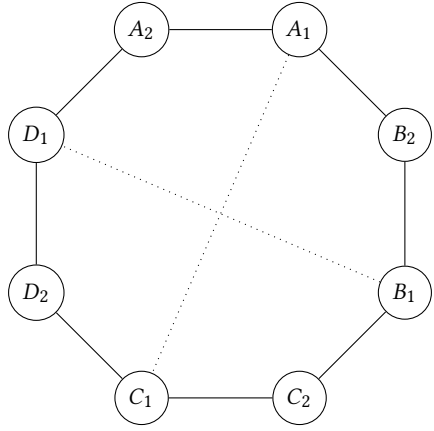
\begin{figure}
\begin{tikzpicture}[scale=0.7, every node/.style={scale = 1}]
    \pgfmathsetmacro{\U}{1.5}
    \pgfmathsetmacro{\V}{3.5}
    \node[circle,draw] at (\U, \V) (A1) {$A_1$};
    \node[circle,draw] at (\V, \U) (B2) {$B_2$};
    \node[circle,draw] at (\V, -\U) (B1) {$B_1$};
    \node[circle,draw] at (\U, -\V) (C2) {$C_2$};
    \node[circle,draw] at (-\U, -\V) (C1) {$C_1$};
    \node[circle,draw] at (-\V, -\U) (D2) {$D_2$};
    \node[circle,draw] at (-\V, \U) (D1) {$D_1$};
    \node[circle,draw] at (-\U, \V) (A2) {$A_2$};
    \path[] (A1) edge (B2);
    \path[] (B2) edge (B1);
    \path[] (B1) edge (C2);
    \path[] (C2) edge (C1);
    \path[] (C1) edge (D2);
    \path[] (D2) edge (D1);
    \path[] (D1) edge (A2);
    \path[] (A2) edge (A1);
    \path[dotted] (A1) edge (C1);
    \path[dotted] (B1) edge (D1);
\end{tikzpicture}
\caption{A visual representation of query $Q_{\mathrm{LW4}}^{(1)}$ from ~\eqref{eq:LW4:C1}. Note that
each input relation in $Q_{\mathrm{LW4}}^{(1)}$ spans 5 consecutive variables on the above cycle.}
\Description[A visual representation of query $Q_{\mathrm{LW4}}^{(1)}$ from ~\eqref{eq:LW4:C1}.]{A visual representation of query $Q_{\mathrm{LW4}}^{(1)}$ from ~\eqref{eq:LW4:C1}. Note that
each input relation in $Q_{\mathrm{LW4}}^{(1)}$ spans 5 consecutive variables on the above cycle.}
\label{fig:LW4-cycle}
\end{figure}

In order to mimic the algorithm for a 4-cycle~\cite{pods/Khamis0S17, AYZ97},
it is helpful to imagine the 8 variables $A_1, A_2, B_1, B_2, C_1, C_2, D_1$ and $D_2$
arranged on a cycle as shown in Figure~\ref{fig:LW4-cycle}. Note that each one of
the 4 input relations $\tilde R, \tilde S, \tilde T$ and $\tilde U$ of $Q_{\mathrm{LW4}}^{(1)}$ spans 5 consecutive variables on the above
cycle.

Let $N$ be the maximum relation size among relations $\tilde R, \tilde S, \tilde T$ and $\tilde U$.
We partition the relation $\tilde R(A_1, B_1, C_1, B_2, C_2)$ based on the {\em degree} of
$(a_1, b_1, b_2)$, i.e. based on the number of different $(c_1, c_2)$ pairs for every given
triple $(a_1, b_1, b_2)$:
\begin{eqnarray}
    \deg_{\tilde R}(a_1, b_1, b_2) &:=& |\{(c_1, c_2) \suchthat (a_1, b_1, c_1, b_2, c_2)\in \tilde R\}|,\\
    \tilde R_h &:=& \{(a_1, b_1, b_2)\suchthat \deg_R(a_1, b_1, b_2) \geq \sqrt{N}\},\label{eq:LW4:Rh}\\
    \tilde R_\ell &:=& \{(a_1, b_1, c_1, b_2, c_2) \in \tilde R \suchthat \deg_{\tilde R}(a_1, b_1, b_2) < \sqrt{N}\}.
\end{eqnarray}
$\tilde R_h$ and $\tilde R_\ell$ above are meant to be the ``heavy'' and ``light'' parts of $\tilde R$ respectively
in the same sense as in~\cite{pods/Khamis0S17, AYZ97}. Note that \eqref{eq:LW4:Rh} implies
that $|\tilde R_h| \leq \sqrt{N}$. Similarly we partition relation $\tilde T(C_1, D_1, A_1, D_2, A_2)$
based on the degree of $(c_1, d_1, d_2)$ into $\tilde T_h(C_1, D_1, D_2)$ and $\tilde T_\ell(C_1, D_1, A_1, D_2, A_2)$.

In order to evaluate $Q_{\mathrm{LW4}}^{(1)}$, we will divide its output tuples
 into three parts and use a different evaluation strategy
to evaluate each part. In particular, each output tuple $(a_1, b_1, c_1, d_1, a_2, b_2, c_2, d_2) \in Q_{\mathrm{LW4}}^{(1)}$
belongs to exactly one of the following three categories:
\begin{itemize}
    \item Category (1): $(a_1, b_1, b_2) \in \tilde R_h$.
    \item Category (2): $(a_1, b_1, b_2) \not\in \tilde R_h$ and $(c_1, d_1, d_2) \in \tilde T_h$.
    \item Category (3): $(a_1, b_1, b_2) \not\in \tilde R_h$ and $(c_1, d_1, d_2) \not\in \tilde T_h$.
\end{itemize}

In order to set up our evaluation strategies for each one of the above three categories,
we compute the following helper relations:
\begin{eqnarray*}
    W_{11}(A_1, B_2, B_1, C_2, C_1, D_2, D_1) &:=& \tilde R_h(A_1, B_1, B_2) \Join \tilde S(B_1, C_1, D_1, C_2, D_2),\\
    W_{12}(C_1, D_2, D_1, A_2, A_1, B_2, B_1) &:=& \tilde R_h(A_1, B_1, B_2) \Join \tilde T(C_1, D_1, A_1, D_2, A_2),\\
    W_{21}(C_1, D_2, D_1, A_2, A_1, B_2, B_1) &:=& \tilde T_h(C_1, D_1, D_2) \Join \tilde U(D_1, A_1, B_1, A_2, B_2),\\
    W_{22}(A_1, B_2, B_1, C_2, C_1, D_2, D_1) &:=& \tilde T_h(C_1, D_1, D_2) \Join \tilde R_\ell(A_1, B_1, C_1, B_2, C_2),\\
    W_{31}(D_1, A_2, A_1, B_2, B_1, C_2, C_1) &:=& \tilde U(D_1, A_1, B_1, A_2, B_2) \Join \tilde R_\ell(A_1, B_1, C_1, B_2, C_2),\\
    W_{32}(B_1, C_2, C_1, D_2, D_1, A_2, A_1) &:=& \tilde S(B_1, C_1, D_1, C_2, D_2) \Join \tilde T_\ell(C_1, D_1, A_1, D_2, A_2).
\end{eqnarray*}

Each one of the above six relations can be straightforwardly shown to have size upper bounded by
$N^{1.5}$ based on the definitions of $\tilde R_h, \tilde R_\ell, \tilde T_h$ and $\tilde T_\ell$.
Finally we show how to compute output tuples belonging to each one of the three categories
above in the desired runtime of $O(N^{1.5} \log N)$:

\begin{itemize}
    \item Category (1): $(a_1, b_1, b_2)\in \tilde R_h$. We can produce output tuples
    $(a_1, b_1, c_1, d_1, a_2, b_2, c_2, d_2)$ belonging to this category
    by running Yannakakis algorithm~\cite{vldb/Yannakakis81} over a tree decomposition whose bags
    are $W'_{11}$ and $W'_{12}$ defined below:
    \begin{eqnarray*}
        W'_{11}(A_1, B_2, B_1, C_2, C_1, D_2, D_1) &:=&
        W_{11}(A_1, B_2, B_1, C_2, C_1, D_2, D_1) \Join \tilde R(A_1, B_1, C_1, B_2, C_2),\\
        W'_{12}(C_1, D_2, D_1, A_2, A_1, B_2, B_1) &:=&
        W_{12}(C_1, D_2, D_1, A_2, A_1, B_2, B_1) \Join \tilde U(D_1, A_1, B_1, A_2, B_2).
    \end{eqnarray*}

    \item Category (2): $(a_1, b_1, b_2)\not\in \tilde R_h$ and $(c_1, d_1, d_2) \in \tilde T_h$. We produce
    these output tuples using a tree decomposition whose bags are $W_{21}'$ and $W_{22}'$ defined below:
    \begin{eqnarray*}
        W'_{21}(C_1, D_2, D_1, A_2, A_1, B_2, B_1) &:=&
        W_{21}(C_1, D_2, D_1, A_2, A_1, B_2, B_1) \Join \tilde T(C_1, D_1, A_1, D_2, A_2),\\
        W'_{22}(A_1, B_2, B_1, C_2, C_1, D_2, D_1) &:=&
        W_{22}(A_1, B_2, B_1, C_2, C_1, D_2, D_1) \Join \tilde S(B_1, C_1, D_1, C_2, D_2).
    \end{eqnarray*}

    \item Category (3): $(a_1, b_1, b_2)\not\in \tilde R_h$ and $(c_1, d_1, d_2) \not\in \tilde T_h$. For this
    category, we use a tree decomposition whose bags are $W_{31}$ and $W_{32}$.
\end{itemize}
In each one of the three cases above, the runtime is $O(N^{1.5} \log N)$.

\paragraph*{Class 2: $\fhtw = \subw = 5/3$}

Queries in this class are isomorphic to the following:
\begin{equation}
    Q_{\mathrm{LW4}}^{(2)} :=
    \tilde R(A_1, B_1, C_1, A_2) \wedge
	\tilde S(B_1, C_1, D_1, B_2, C_2) \wedge
	\tilde T(C_1, D_1, A_1, C_2, D_2) \wedge
	\tilde U(D_1, A_1, B_1, D_2, A_2, B_2)
\end{equation}
The above query has $\fhtw = 5/3$. In particular, it accepts a tree decomposition consisting
of the following two bags:
\begin{itemize}
    \item Bag $\{A_1, B_1, C_1, D_1, A_2, B_2, D_2\}$ containing relations $\tilde R$ and $\tilde U$.
    It has a fractional edge cover number $\rho^*$ of $5/3$,
    which is obtained by assigning the following coefficients to relations $[\tilde R, \tilde S, \tilde T, \tilde U]$ in order: $[1/3, 1/3, 1/3, 2/3]$.
    In particular, it can be computed by solving the following query using a worst-case optimal join
    algorithm~\cite{jacm/NgoPRR18,Veldhuizen14}:
    \begin{eqnarray*}
        W_1(A_1, B_1, C_1, D_1, A_2, B_2, D_2) :=&
        \tilde R(A_1, B_1, C_1, A_2) \Join
        \pi_{\{B_1, C_1, D_1, B_2\}}\tilde S(B_1, C_1, D_1, B_2, C_2) \Join\\
	    &\pi_{\{C_1, D_1, A_1, D_2\}}\tilde T(C_1, D_1, A_1, C_2, D_2) \Join
	    \tilde U(D_1, A_1, B_1, D_2, A_2, B_2)
    \end{eqnarray*}
    \item Bag $\{A_1, B_1, C_1, D_1, B_2, C_2, D_2\}$ containing relations $\tilde S$ and $\tilde T$.
    It has a $\rho^*$ value of 1.5,
    obtained by assigning the following coefficients to relations $[\tilde R, \tilde S, \tilde T, \tilde U]$ in order:
    [0, 1/2, 1/2, 1/2].
    It can be computed by solving the following query\footnote{We could drop relation $\tilde R$ from query
    $W_2$ without increasing its time complexity beyond $N^{1.5}$ because $\tilde R$ has a coefficient of 0 in the
    optimal fractional edge cover.}:
    \begin{eqnarray*}
        W_2(A_1, B_1, C_1, D_1, B_2, C_2, D_2) :=&
        \pi_{\{A_1, B_1, C_1\}}\tilde R(A_1, B_1, C_1, A_2) \Join
        \tilde S(B_1, C_1, D_1, B_2, C_2) \Join\\
        &\tilde T(C_1, D_1, A_1, C_2, D_2) \Join
        \pi_{\{D_1, A_1, B_1, D_2, B_2\}}\tilde U(D_1, A_1, B_1, D_2, A_2, B_2)
    \end{eqnarray*}
\end{itemize}

The $\subw$ of this query class is also 5/3, hence we skip its computation.
This query class {\em is} the bottleneck of our final bound of 5/3 for the LW4 intersection join query from~\eqref{eq:LW4}.

\paragraph*{Class 3: $\fhtw = \subw = 1.5$}
\begin{equation}
    Q_{\mathrm{LW4}}^{(3)} :=
    \tilde R(A_1, B_1, C_1) \wedge
	\tilde S(B_1, C_1, D_1, B_2, C_2) \wedge
	\tilde T(C_1, D_1, A_1, C_2, D_2, A_2) \wedge
	\tilde U(D_1, A_1, B_1, D_2, A_2, B_2)
\end{equation}

The above query has a fractional edge cover number $\rho^*$ of 1.5 obtained by assigning the coefficients [0, 1/2, 1/2, 1/2]. The $\fhtw$ and $\subw$ are also 1.5.

\paragraph*{Class 4: $\fhtw = \subw = 1.5$}
\begin{equation}
    Q_{\mathrm{LW4}}^{(4)} :=
    \tilde R(A_1, B_1, C_1, B_2) \wedge
	\tilde S(B_1, C_1, D_1, C_2) \wedge
	\tilde T(C_1, D_1, A_1, C_2, D_2, A_2) \wedge
	\tilde U(D_1, A_1, B_1, D_2, A_2, B_2)
\end{equation}

The above query has a $\fhtw$ of 1.5 obtained through a tree decomposition consisting of the following two bags:
\begin{itemize}
    \item Bag $\{A_1, B_1, C_1, D_1, A_2, B_2, D_2\}$ containing relations $\tilde R$ and $\tilde U$
    and has a fractional edge cover number
    $\rho^*$ of 1.5 obtained by assigning the following coefficients: $[1/2, 0, 1/2, 1/2]$.
    \item Bag $\{A_1, B_1, C_1, D_1, A_2, C_2, D_2\}$ containing relations $\tilde S$ and $\tilde T$ and
    has a fractional edge cover number
    $\rho^*$ of 1.5 obtained by assigning the following coefficients: $[0, 1/2, 1/2, 1/2]$.
\end{itemize}
$\subw$ is also 1.5.

\paragraph*{Class 5: $\fhtw = \subw = 1.5$}
\begin{equation}
    Q_{\mathrm{LW4}}^{(5)} :=
    \tilde R(A_1, B_1, C_1, A_2, B_2) \wedge
	\tilde S(B_1, C_1, D_1, C_2) \wedge
	\tilde T(C_1, D_1, A_1, C_2, D_2) \wedge
	\tilde U(D_1, A_1, B_1, D_2, A_2, B_2)
\end{equation}

The above query has a $\fhtw$ of 1.5 obtained through a tree decomposition consisting of the following two bags:
\begin{itemize}
    \item Bag $\{A_1, B_1, C_1, D_1, A_2, B_2, D_2\}$ containing relations
    $\tilde R$ and $\tilde U$ and has a fractional edge cover number
    $\rho^*$ of 1.5 obtained by assigning the coefficients [1/2, 0, 1/2, 1/2].
    \item Bag $\{A_1, B_1, C_1, D_1, C_2, D_2\}$ containing relations $\tilde S$ and $\tilde T$ and
    has a fractional edge cover number $\rho^*$
    of 1.5 obtained by assigning the coefficients [0, 1/2, 1/2, 1/2].
\end{itemize}
$\subw$ is also 1.5.

\paragraph*{Class 6: $\fhtw = \subw = 1.5$}
\begin{equation}
    Q_{\mathrm{LW4}}^{(6)} :=
	\tilde R(A_1, B_1, C_1, B_2, C_2) \wedge
	\tilde S(B_1, C_1, D_1, B_2, C_2) \wedge
	\tilde T(C_1, D_1, A_1, D_2, A_2) \wedge
	\tilde U(D_1, A_1, B_1, D_2, A_2)
\end{equation}

The above query has $\fhtw$ of 1.5 using a tree decomposition of the following two bags:
\begin{itemize}
    \item Bag $\{A_1, B_1, C_1, D_1, B_2, C_2\}$ containing $\tilde R$ and $\tilde S$
    and has a $\rho^*$ of 1.5 using [1/2, 1/2, 1/2, 0].
    \item Bag $\{A_1, B_1, C_1, D_1, A_2, D_2\}$ containing $\tilde T$ and $\tilde U$
    and has a $\rho^*$ of 1.5 using [1/2, 0, 1/2, 1/2].
\end{itemize}
$\subw$ is also 1.5.

Finally note that query $Q_{\mathrm{LW4}}$ from~\eqref{eq:LW4} involves four interval variables
$[A], [B], [C]$ and $[D]$, each of which appears in three relations. Hence under our reduction,
each one of the four variables contributes an extra factor of $\log^2 N$ to the time complexity
thus resulting in an overall bound of $O(N^{5/3} \log^8 N)$.

\subsection{The 4-clique intersection join query}
The 4-clique intersection join query looks as follows:

\begin{equation}
    Q_{\mathrm{4clique}} :=
	R([A], [B]) \wedge
    S([A], [C]) \wedge
    T([A], [D]) \wedge
    U([B], [C]) \wedge
    V([B], [D]) \wedge
    W([C], [D])
    \label{eq:4clique}
\end{equation}
We show in this section that our approach can solve this query in time $O(N^2 \log^8 N)$
while the $\faqai$ approach takes time $O(N^3 \log^k N)$ for some constant $k \geq 5$.

\subsubsection{The $\faqai$ approach takes time $O(N^3 \log^k N)$ for $k \geq 5$}
To apply the $\faqai$ approach on $Q_{\mathrm{4clique}}$, we follow the script of
Sections~\ref{app:ex:triangle} and~\ref{app:ex:LW4}.
In particular, we start by defining $F(X)$ to be the set of relations containing variable $X$:
\begin{eqnarray*}
    F(A) &:=& \{R, S, T\},\\
    F(B) &:=& \{R, U, V\},\\
    F(C) &:=& \{S, U, W\},\\
    F(D) &:=& \{T, V, W\}.
\end{eqnarray*}

Now query $Q_{\mathrm{4clique}}$ can be written as follows:
\begin{eqnarray}
    Q_{\mathrm{4clique}} =
    \bigvee_{(X_A, X_B, X_C, X_D)\in F(A)\times F(B)\times F(C)\times F(D)}
    &R\left(A.l^{(R)}, A.r^{(R)}, B.l^{(R)}, B.r^{(R)}\right) \wedge
     S\left(A.l^{(S)}, A.r^{(S)}, C.l^{(S)}, C.r^{(S)}\right) \wedge\nonumber\\
    &T\left(A.l^{(T)}, A.r^{(T)}, D.l^{(T)}, D.r^{(T)}\right) \wedge
     U\left(B.l^{(U)}, B.r^{(U)}, C.l^{(U)}, C.r^{(U)}\right) \wedge\nonumber\\
    &V\left(B.l^{(V)}, B.r^{(V)}, D.l^{(V)}, D.r^{(V)}\right) \wedge
     W\left(C.l^{(W)}, C.r^{(W)}, D.l^{(W)}, D.r^{(W)}\right) \wedge\nonumber\\
    &\displaystyle{\bigwedge_{Y_A \in F(A)-\{X_A\}}} A.l^{(Y_A)} \leq A.l^{(X_A)} \leq A.r^{(Y_A)}\wedge\nonumber\\
    &\displaystyle{\bigwedge_{Y_B \in F(B)-\{X_B\}}} B.l^{(Y_B)} \leq B.l^{(X_B)} \leq B.r^{(Y_B)}\wedge\nonumber\\
    &\displaystyle{\bigwedge_{Y_C \in F(C)-\{X_C\}}} C.l^{(Y_C)} \leq C.l^{(X_C)} \leq C.r^{(Y_C)}\wedge\nonumber\\
    &\displaystyle{\bigwedge_{Y_D \in F(D)-\{X_D\}}} D.l^{(Y_D)} \leq D.l^{(X_D)} \leq D.r^{(Y_D)}
    \label{eq:4C:faqai}
\end{eqnarray}

The disjunction in~\eqref{eq:4C:faqai} above contains $(3^4 = 81)$ disjuncts each of which is an
$\faqai$ query. Consider the specific disjunct that corresponds to
$X_A = R, X_B = U, X_C = S, X_D = T$. Let's call it $\bar Q$:
\begin{eqnarray}
    \bar Q :=
    &R\left(A.l^{(R)}, A.r^{(R)}, B.l^{(R)}, B.r^{(R)}\right) \wedge
     S\left(A.l^{(S)}, A.r^{(S)}, C.l^{(S)}, C.r^{(S)}\right) \wedge
     T\left(A.l^{(T)}, A.r^{(T)}, D.l^{(T)}, D.r^{(T)}\right) \wedge\nonumber\\
    &U\left(B.l^{(U)}, B.r^{(U)}, C.l^{(U)}, C.r^{(U)}\right) \wedge
     V\left(B.l^{(V)}, B.r^{(V)}, D.l^{(V)}, D.r^{(V)}\right) \wedge
     W\left(C.l^{(W)}, C.r^{(W)}, D.l^{(W)}, D.r^{(W)}\right) \wedge\nonumber\\
    &\left(A.l^{(S)}\leq A.l^{(R)}\leq A.r^{(S)}\right)\wedge\left(A.l^{(T)}\leq A.l^{(R)}\leq A.r^{(T)}\right)\wedge\nonumber\\
    &\left(B.l^{(R)}\leq B.l^{(U)}\leq B.r^{(R)}\right)\wedge\left(B.l^{(V)}\leq B.l^{(U)}\leq B.r^{(V)}\right)\wedge\nonumber\\
    &\left(C.l^{(U)}\leq C.l^{(S)}\leq C.r^{(U)}\right)\wedge\left(C.l^{(W)}\leq C.l^{(S)}\leq C.r^{(W)}\right)\wedge\nonumber\\
    &\left(D.l^{(V)}\leq D.l^{(T)}\leq D.r^{(V)}\right)\wedge\left(D.l^{(W)}\leq D.l^{(T)}\leq D.r^{(W)}\right)
    \label{eq:4C:faqai:barQ}
\end{eqnarray}
Let $\calI$ be the set of relation pairs that are connected by some inequality in query $\bar Q$:
\begin{eqnarray*}
    \calI :=\{
    \{R, S\},
    \{R, T\},
    \{R, U\},
    \{S, U\},
    \{S, W\},
    \{T, V\},
    \{T, W\},
    \{U, V\}\}
\end{eqnarray*}
Similar to Sections~\ref{app:ex:triangle} and~\ref{app:ex:LW4}, if $\subw_\ell(\bar Q) < 3$,
then there must exist a relaxed tree decomposition of $\bar Q$ where each bag contains at most two
relations:
Otherwise, we could have used the edge dominated polymatroid $\bar h$ below to show that
$\subw_\ell(\bar Q) \geq 3$, in the exact same way we did in the previous two sections:
\begin{equation}
    \bar h(X) := \frac{|X|}{4}, \quad\forall X \subseteq \vars(\bar Q).
\end{equation}
Consider all partitions of relations $\{R, S, T, U, V, W\}$ of $\bar Q$
into bags where each bag contains at most two relations.
We can show that in every one of these partitions,
the query $\bar Q$ contains inequalities that form a cycle among the bags.
Hence no matter how we try to arrange the bags into a tree to form a tree decomposition,
there will be at least one inequality between two non-adjacent bags in the tree
thus violating the condition for a relaxed tree decomposition.
In particular, there are 76 such partitions. We enumerate all of them using a computer
program and check that each partition contains a cycle of inequalities among the bags.
This proves that any relaxed tree decomposition must contain a bag with at least 3 relations,
hence $\subw_\ell(\bar Q) \geq 3$. Table~\ref{tab:faqai:clique4} lists all partitions
of relations $\{R, S, T, U, V, W\}$ of $\bar Q$ into 3 bags with exactly two relations in each.
Basic combinatorics show\footnote{There are 6! ways to partition 6 items
into an (ordered) tuple of 3 (ordered) tuples of size 2 each,
i.e.~$((X_1, X_2), (X_3, X_4), (X_5, X_6))$. Consequently there are $6!/(2^3) = 90$ ways
to partition 6 items into an (ordered) tuple of 3 (unordered) sets of size 2 each,
i.e.~$(\{X_1, X_2\}, \{X_3, X_4\}, \{X_5, X_6\})$.
Finally, there are $90/(3!) = 15$ possible ways to partition 6 items into an (unordered) set of
3 (unordered) sets of size 2 each, i.e.~$\{\{X_1, X_2\}, \{X_3, X_4\}, \{X_5, X_6\}\}$.}
that there are 15 such partitions (out of the 76 partitions in total that we need to consider).
Table~\ref{tab:faqai:clique4} shows a triangle of inequalities for every one of these 15 partitions.

To prove that $\fhtw_\ell(\bar Q) \leq 3$, we can use a relaxed tree decomposition of two (adjacent)
bags, each of which contains 3 relations. No matter what inequalities are there in $\bar Q$,
every inequality is covered by (the only) two adjacent bags, hence this is a valid relaxed
tree decomposition.
Since $\subw_\ell(Q)\leq \fhtw_\ell(Q)$ for any query $Q$ according to~\cite{FAQAI:TODS:2020},
this proves that
\[
    3 \leq \subw_\ell(\bar Q) \leq \fhtw_\ell(\bar Q) \leq 3,
\]
where all the inequalities above are equalities.

\begin{table*}
\begin{tabular}{|c|c|}
    \hline
    Possible partitions of $\{R, S, T, U, V, W\}$&
    3 edges in $\calI$ connecting\\
    into 3 parts of size 2 each&
    every 2 parts in the partition\\
    \hline
    \{\{R, W\}, \{S, U\}, \{T, V\}\}& \{R, S\}, \{R, T\}, \{U, V\}\\
    \{\{R, U\}, \{S, T\}, \{V, W\}\}& \{R, S\}, \{U, V\}, \{S, W\}\\
    \{\{R, S\}, \{T, V\}, \{U, W\}\}& \{R, T\}, \{R, U\}, \{T, W\}\\
    \{\{R, V\}, \{S, T\}, \{U, W\}\}& \{R, S\}, \{R, U\}, \{S, U\}\\
    \{\{R, W\}, \{S, V\}, \{T, U\}\}& \{R, S\}, \{R, T\}, \{S, U\}\\
    \{\{R, T\}, \{S, V\}, \{U, W\}\}& \{R, S\}, \{R, U\}, \{S, U\}\\
    \{\{R, U\}, \{S, V\}, \{T, W\}\}& \{R, S\}, \{R, T\}, \{S, W\}\\
    \{\{R, V\}, \{S, W\}, \{T, U\}\}& \{R, S\}, \{R, T\}, \{S, U\}\\
    \{\{R, V\}, \{S, U\}, \{T, W\}\}& \{R, S\}, \{R, T\}, \{S, W\}\\
    \{\{R, T\}, \{S, W\}, \{U, V\}\}& \{R, S\}, \{R, U\}, \{S, U\}\\
    \{\{R, U\}, \{S, W\}, \{T, V\}\}& \{R, S\}, \{R, T\}, \{T, W\}\\
    \{\{R, S\}, \{T, U\}, \{V, W\}\}& \{R, T\}, \{S, W\}, \{T, V\}\\
    \{\{R, T\}, \{S, U\}, \{V, W\}\}& \{R, S\}, \{T, V\}, \{S, W\}\\
    \{\{R, S\}, \{T, W\}, \{U, V\}\}& \{R, T\}, \{R, U\}, \{T, V\}\\
    \{\{R, W\}, \{S, T\}, \{U, V\}\}& \{R, S\}, \{R, U\}, \{S, U\}\\\hline
\end{tabular}
\caption{A proof that $\bar Q$ from~\eqref{eq:4C:faqai:barQ}
does not admit a relaxed tree decomposition with exactly two relations
in each bag. The left column shows all possible ways to partition the relations of $\bar Q$
into 3 bags of size 2. The right column shows 3 inequalities in $\bar Q$ connecting every pair of these 3 bags,
thus violating the definition of a relaxed tree decomposition.}
\label{tab:faqai:clique4}
\end{table*}

Finally in query $\bar Q$ from~\eqref{eq:4C:faqai:barQ}, for any optimal relaxed tree decomposition,
the minimum number $k$ of inequalities between two adjacent bags that can be achieved is $k=6$.
The $\faqai$ approach~\cite{FAQAI:TODS:2020} incurs an extra factor of $(\log N)^{\max(k-1, 1)}$
which corresponds to $\log^5 N$ for $\bar Q$. Other queries corresponding to different choices
of $(X_A, X_B, X_C, X_D)$ in~\eqref{eq:4C:faqai} might have bigger $k$-values.
Hence the overall $\faqai$ time complexity for $Q_{\mathrm{4clique}}$ is $O(N^3 \log^k N)$ for some $k \geq 5$.

\subsubsection{Our approach takes time $O(N^2 \log^8 N)$}

Using the reduction presented in this work, we can solve this query in time $O(N^2 \log^8 N)$.
In particular, the reduction produces a number of equality join queries.
Similar to what we did in Section~\ref{app:ex:LW4}, we drop singleton variables from these queries.
Consequently, we reduce their number down to 81 queries and we group them into the following 6 isomorphism classes. (We refer to relations resulting from
$R, S, \ldots$ after the reduction as $\tilde R, \tilde S, \ldots$)

\paragraph*{Class 1: $\fhtw = \subw = 2$}
\begin{equation}
    Q_1 :=
    \tilde R(A_1, B_1) \wedge
    \tilde S(A_1, C_1, A_2) \wedge
    \tilde T(A_1, D_1, A_2) \wedge
    \tilde U(B_1, C_1, B_2, C_2) \wedge
    \tilde V(B_1, D_1, B_2, D_2) \wedge
    \tilde W(C_1, D_1, C_2, D_2)
\end{equation}

The above query accepts a tree decomposition of two bags, witnessing that $\fhtw$ is at most 2
(and it can be shown to be exactly 2):
\begin{itemize}
    \item $\{A_1, A_2, B_1, B_2, C_1, C_2, D_1\}$ has $\rho^*$ of 2.0 achieved by the following edge cover of
    the relations $[\tilde R, \tilde S, \tilde T, \tilde U, \tilde V, \tilde W]$ in order: [0.0, 0.0, 1.0, 1.0, 0.0, 0.0].
    \item $\{B_1, B_2, C_1, C_2, D_1, D_2\}$ has $\rho^*$ of 1.5 achieved by the edge cover
    [0.0, 0.0, 0.0, 0.5, 0.5, 0.5].
\end{itemize}

\paragraph*{Class 2: $\fhtw = \subw = 2$}
\begin{equation}
    Q_2 :=
	\tilde R(A_1, B_1, B_2) \wedge
    \tilde S(A_1, C_1, A_2) \wedge
    \tilde T(A_1, D_1, A_2) \wedge
    \tilde U(B_1, C_1, C_2) \wedge
    \tilde V(B_1, D_1, B_2, D_2) \wedge
    \tilde W(C_1, D_1, C_2, D_2)
\end{equation}

$Q_2$ accepts a tree decomposition with two bags:
\begin{itemize}
    \item $\{A_1, A_2, B_1, B_2, C_1, D_1, D_2\}$ with fractional edge cover
    [0.0, 1.0, 0.0, 0.0, 1.0, 0.0].
    \item $\{B_1, C_1, C_2, D_1, D_2\}$ with edge cover [0.0, 0.0, 0.0, 0.5, 0.5, 0.5].
\end{itemize}

\paragraph*{Class 3: $\fhtw = \subw = 2$}
\begin{equation}
    Q_3 :=
	\tilde R(A_1, B_1, A_2, B_2) \wedge
    \tilde S(A_1, C_1) \wedge
    \tilde T(A_1, D_1, A_2) \wedge
    \tilde U(B_1, C_1, C_2) \wedge
    \tilde V(B_1, D_1, B_2, D_2) \wedge
    \tilde W(C_1, D_1, C_2, D_2)
\end{equation}

$Q_3$ accepts a tree decomposition with a single bag using the edge cover:
[1.0, 0.0, 0.0, 0.0, 0.0, 1.0].

\paragraph*{Class 4: $\fhtw = \subw = 2$}
\begin{equation}
    Q_4 :=
    \tilde R(A_1, B_1, A_2, B_2) \wedge
    \tilde S(A_1, C_1, A_2) \wedge
    \tilde T(A_1, D_1) \wedge
    \tilde U(B_1, C_1, C_2) \wedge
    \tilde V(B_1, D_1, B_2, D_2) \wedge
    \tilde W(C_1, D_1, C_2, D_2)
\end{equation}

$Q_4$ accepts a tree decomposition with a single bag using the edge cover:
[1.0, 0.0, 0.0, 0.0, 0.0, 1.0].

\paragraph*{Class 5: $\fhtw = \subw = 2$}
\begin{equation}
    Q_5 :=
	\tilde R(A_1, B_1, A_2, B_2) \wedge
    \tilde S(A_1, C_1, A_2, C_2) \wedge
    \tilde T(A_1, D_1) \wedge
    \tilde U(B_1, C_1) \wedge
    \tilde V(B_1, D_1, B_2, D_2) \wedge
    \tilde W(C_1, D_1, C_2, D_2)
\end{equation}

$Q_5$ accepts a tree decomposition with a single bag using the edge cover:
[1.0, 0.0, 0.0, 0.0, 0.0, 1.0].

\paragraph*{Class 6: $\fhtw = \subw = 2$}
\begin{equation}
    Q_6 :=
	\tilde R(A_1, B_1, A_2, B_2) \wedge
    \tilde S(A_1, C_1, C_2) \wedge
    \tilde T(A_1, D_1, A_2) \wedge
    \tilde U(B_1, C_1, B_2) \wedge
    \tilde V(B_1, D_1, D_2) \wedge
    \tilde W(C_1, D_1, C_2, D_2)
\end{equation}

$Q_6$ accepts a tree decomposition with a single bag using the edge cover:
[1.0, 0.0, 0.0, 0.0, 0.0, 1.0].

Finally similar to query $Q_{\mathrm{LW4}}$ from the previous section,
query $Q_{\mathrm{4clique}}$ involves four interval variables
each of which appears in exactly three relations.
Therefore according to our reduction,
each variable contributes an extra factor of $\log^2 N$ to the time complexity.
The overall runtime is $O(N^2 \log^8 N)$.

\section{Improved Intersection Predicate Rewriting to Yield Disjoint Conjuncts}
\label{appendix:disjoint}

In this section, we discuss an alternative rewriting of the intersection predicate that ensures disjointness of the conjuncts in the resulting disjunction. This is essential to ensure that our forward reduction yields disjoint conjunctive queries with equality joins in the output disjunction, as required for efficient enumeration and aggregate computation.

\subsection{Distinct Left Endpoints}
\label{appendix:subsection:distinct-endpoints}

We show how to ensure that any two intervals from any two different relations have distinct left endpoints, without affecting query evaluation.


Let $n = |\calE|$ and denote the relations from the database by $R_1, \dots, R_n$. Since the database $\mathbf{D}$ is finite, there exists a sufficiently small real number $\epsilon > 0$ such that $n * \epsilon$ is strictly smaller than the distance 
between the two distinct endpoints of any two intervals in the data. Now, for each $i\in[n]$, any interval $x$ from the relation $R_i$ is replaced 
with the interval $[x.l + i*\epsilon, x.r + n * \epsilon]$. This is a valid interval since $i \leq n$.
It holds that, after this data modification, the intersection joins behave exactly the same as before. 
To see this, let $x$ be an interval from $R_i$ and let $y$ be an interval from $R_j$, with $i \neq j$. If $y.l < x.l$ then $y.l + j*\epsilon < x.l+ i*\epsilon$, since $j \leq n$ and $n * \epsilon < |x.l - y.l|$ 
(by our choice of $\epsilon$). If $y.l = x.l$ then $y.l + j*\epsilon < x.l + i*\epsilon$, or $x.l + i * \epsilon < y.l + j * \epsilon$, 
depending on whether $j < i$ or $j > i$. If $x.l \leq y.r$ then it holds that $x.l + i*\epsilon \leq y.r + n*\epsilon$, since $i \leq n$. 
Also, if $y.r < x.l$ then it holds that $y.r + n*\epsilon < x.l + i*\epsilon$, since $n*\epsilon < |x.l - y.r|$ (by the choice of $\epsilon$).
    
Moreover, it is trivial to recover the original interval: just subtract from the endpoints of the interval the values $i*\epsilon$ and $n*\epsilon$, respectively. 
Therefore, by applying this data transformation, the answer of $Q(\mathbf{D})$ remains exactly the same.

\subsection{Rewriting of the Intersection Predicate with Disjoint Conjuncts}
\label{appenndix:subsection:disjoint-rewriting}

We rewrite the intersection predicate from Section~\ref{section:intersection-predicate-rewriting} in a way such that, given a set of intervals
$S = \{x_{1}, \dots, x_{k}\}$ there exists precisely one $\sigma \in \pi(S)$ that satisfies the
equivalence of Lemma~\ref{lemma:intersection-predicate-ordered}. 

As mentioned in Section~\ref{section:intersection-predicate-rewriting}, 
by Property~\ref{property:unique-tuple-of-nodes}, we still have that, 
for a fixed permutation $\sigma \in \pi(S)$, 
there can be at most one tuple $(v_{j})_{j \in [k-1]}$ that satisfies the conjunction. 
However, even if the intervals from $S$ have distinct left endpoints, 
the predicate of Lemma~\ref{lemma:intersection-predicate-ordered} may be satisfied by multiple permutations. 
To see this, suppose that $\sigma$ and $(v_{j})_{j \in [k-1]}$ satisfy the predicate. 
If there is $j \in [k-1]$ such that $v_j = v_{j+1}$, then the permutation $\sigma'$, 
obtained from $\sigma$ by swapping $\sigma_j$ and $\sigma_{j+1}$, 
together with the same tuple $(v_{j})_{j \in [k-1]}$ satisfy the predicate as well. 
Hence, we will allow for $v_j = v_{j+1}$ only in those permutations $\sigma$ that have $\sigma_j < \sigma_{j+1}$.

Given a segment tree $\mathfrak{T}_{\calI}$, let $\text{sanc}(u)$ denote the set of 
        strict ancestors of $u$ i.e. $u$ is not included in the set.
We next define the following set of tuples that is needed to formulate our alternative rewriting.

\begin{definition}[Ordered Tuples Set]
\label{def:ordered-tuples-set}
Given a permutation $\sigma \in \pi(\{1, \dots, k\})$ and a point $p \in \mathbb{R}$, 
define $\text{OT}_\mathcal{I}(\sigma, p)$ to be the set of all tuples 
$(v_1, \dots, v_k) \in (V(\mathfrak{T}_{\calI}))^{k}$ such that:
\begin{itemize}
    \item $v_k := \text{leaf}(p)$,
    \item $v_{k-1} \in \text{anc}(v_k)$, and
    \item for each $1 < j < k$: 
    \begin{itemize}
        \item $v_{j-1} \in \text{anc}(v_j)$ if $\sigma_{j-1} < \sigma_j$, and
        \item $v_{j-1} \in \text{sanc}(v_j)$ otherwise.
    \end{itemize}
\end{itemize}
That is, for each tuple $(v_{1}, \dots, v_{k})$, the pairs $(v_1, \sigma_1), \dots, (v_{k-1}, \sigma_{k-1})$ 
form a strictly increasing sequence with respect to $\prec$, where $(v_x, \sigma_x) \prec (v_y, \sigma_y)$ 
if and only if $v_x \in \text{sanc}(v_y)$, or $v_x = v_y$ and $\sigma_x < \sigma_y$.
\end{definition}

\begin{lemma}[Disjoint Intersection Predicate 1]
\label{lemma:disjoint-intersection-predicate-1}
For any set of intervals $S = \{x_{1}, \dots, x_{k}\} \subseteq \mathcal{I}$, we have:
\begin{equation}
\label{eq:interection-predicate-final}
    \left( 
        \bigcap_{i \in [k]} x_i
    \right)    
    \neq \emptyset 
    \equiv \bigvee_{\sigma \in \pi(\{1, \dots, k\})} \left( \bigvee_{(v_1, \dots, v_k) \in \text{OT}_\mathcal{I}(\sigma, x_{\sigma_{k}}.l)} \left( \bigwedge_{j \in [k-1]} v_j \in \text{CP}_\mathcal{I}(x_{\sigma_j}) \right) \right)
\end{equation}
Moreover, if the intervals from $S$ have distinct left endpoints, then the right hand side predicate of Equation~\eqref{eq:interection-predicate-final} can be satisfied by at most one permutation $\sigma \in \pi(\{1, \dots, k\})$ 
and one tuple $(v_1, \dots, v_k) \in \text{OT}_\mathcal{I}(\sigma, x_{\sigma_k})$.

\begin{proof}
    I) "$\Leftarrow$": Assume that the permutation $\sigma \in \pi(\{1, \dots, k\})$ and the tuple $(v_1, \dots, v_k) \in \text{OT}_\mathcal{I}(\sigma, x_{\sigma_{k}}.l)$ 
    satisfy the right hand side predicate. This means that, for each $1 \leq j < k$, the node $v_j$ is an ancestor of $\text{leaf}(x_{\sigma_{k}}.l)$ such that $v_j \in \text{CP}_\mathcal{I}(x_{\sigma_{j}})$. 
    This means that $x_{\sigma_{k}}.l \in x_{\sigma_{j}}$, for each $1 \leq j < k$. 
    Hence, the intervals in $S$ intersect and the left endpoint of the intersection is precisely the point $x_{\sigma_k}.l$.
    
    II) "$\Rightarrow$": Assume that the intervals in $S$ intersect. 
    Since all the intervals have distinct left endpoints, 
    the expression $i := \text{arg max}_{1 \leq i \leq k} \, x_i.l$ has only one solution, 
    and hence, the interval $x_i$ is the only interval satisfying that the point $x_i.l$ 
    is contained in all other intervals in $S$. Therefore, all the permutations whose 
    last element is not $i$ cannot satisfy the right hand side predicate. 
    Moreover, by Property~\ref{property:unique-tuple-of-nodes}, 
    for each $1 \leq j \leq k$ with $j \neq i$, there is exactly one segment tree node 
    $u_j \in \text{CP}_\mathcal{I}(x_j)$ such that $u_j \in \text{anc}(\text{leaf}(x_i.l))$. 
    Define $u_i := \text{leaf}(x_i)$. Hence, only the permutations $\sigma \in \pi(\{1, \dots, k\})$ 
    such that $\sigma_k = i$ and $(u_{\sigma_1}, \dots, u_{\sigma_k}) \in \text{OT}_\mathcal{I}(\sigma, x_i.l)$ 
    can satisfy the right hand side predicate. But, by Definition~\ref{def:ordered-tuples-set} of $\text{OT}_\mathcal{I}(\sigma, x_i.l)$, 
    there is exactly one such permutation: the one that satisfies either $u_{\sigma_{j-1}} \in \text{sanc}(u_{\sigma_j})$ 
    or both $u_{\sigma_{j-1}} = u_{\sigma_j}$ and $\sigma_{j-1} < \sigma_j$, for each $1 < j < k$. Therefore, the predicate 
    from Equation~\eqref{eq:interection-predicate-final} is satisfied by exactly one permutation and one tuple of segment tree nodes.
\end{proof}    
\end{lemma}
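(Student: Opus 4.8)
The plan is to obtain both parts of the lemma from the equivalence already proved in Lemma~\ref{lemma:intersection-predicate-ordered}. That lemma equates $\bigl(\bigcap_{i\in[k]}x_i\bigr)\neq\emptyset$ with a disjunction, over permutations $\sigma$ and over all tuples $(u_1,\dots,u_k)$ forming an ancestry chain ending at $\text{leaf}(\sigma_k)$, of $\bigwedge_{j\in[k-1]}u_j\in\text{CP}_\mathcal{I}(\sigma_j)$. The right-hand side of Equation~\eqref{eq:interection-predicate-final} differs only in that the inner disjunction is restricted to $\text{OT}_\mathcal{I}(\sigma,x_{\sigma_k}.l)$, i.e.\ we additionally forbid $v_{j-1}=v_j$ whenever $\sigma_{j-1}>\sigma_j$. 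Unwinding Definition~\ref{def:ordered-tuples-set}, membership in $\text{OT}_\mathcal{I}(\sigma,x_{\sigma_k}.l)$ is exactly: $v_k=\text{leaf}(x_{\sigma_k}.l)$, $v_{k-1}\in\text{anc}(v_k)$, and the sequence of pairs $(v_1,\sigma_1),\dots,(v_{k-1},\sigma_{k-1})$ is strictly increasing for the order $\prec$. So the first task is to show this restriction loses no satisfying assignment, and the second is to exploit the $\prec$-increasing shape to get uniqueness.

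For the equivalence I would argue directly. ``$\Leftarrow$'': if $\sigma$ and $(v_1,\dots,v_k)\in\text{OT}_\mathcal{I}(\sigma,x_{\sigma_k}.l)$ satisfy the conjunction, then since $\text{sanc}(u)\subseteq\text{anc}(u)$ and ancestry is transitive, every $v_j$ is an ancestor of $v_k=\text{leaf}(x_{\sigma_k}.l)$; by the characterisation of $\text{anc}(\text{leaf}(p))$ recalled before Lemma~\ref{lemma:intersection-predicate-not-ordered}, this gives $x_{\sigma_k}.l\in\text{seg}(v_j)$, and $v_j\in\text{CP}_\mathcal{I}(x_{\sigma_j})$ gives $\text{seg}(v_j)\subseteq x_{\sigma_j}$, so $x_{\sigma_k}.l$ lies in $x_{\sigma_j}$ for every $j\in[k-1]$ and in $x_{\sigma_k}$ trivially; hence the intersection is non-empty. ``$\Rightarrow$'': set $l:=\max_{i\in[k]}x_i.l$ and fix an index $p$ attaining it; non-emptiness of the intersection forces $l\in x_i$ for every $i$. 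By Property~\ref{property:unique-tuple-of-nodes} (using Property~\ref{property:segment-tree}(2)), for each $j\neq p$ there is a unique node $u_j\in\text{CP}_\mathcal{I}(x_j)$ on the root-to-$\text{leaf}(l)$ path; set $u_p:=\text{leaf}(l)$. Since $u_1,\dots,u_{k-1}$ all lie on one path, $\prec$ is a strict total order on $\{(u_j,j):j\neq p\}$: ancestry totally orders the nodes, and nodes that coincide are separated by their distinct indices. I would then take $\sigma$ to be the permutation listing those $k-1$ pairs in $\prec$-increasing order and setting $\sigma_k:=p$; by the unwinding above, $(u_{\sigma_1},\dots,u_{\sigma_{k-1}},\text{leaf}(l))\in\text{OT}_\mathcal{I}(\sigma,x_{\sigma_k}.l)$ and $u_{\sigma_j}\in\text{CP}_\mathcal{I}(x_{\sigma_j})$ for $j\in[k-1]$, so the right-hand side holds.

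The uniqueness clause reuses this construction. If the left endpoints of the intervals in $S$ are pairwise distinct, then from the ``$\Leftarrow$'' computation any satisfying $\sigma$ must have $x_{\sigma_k}.l=l$, hence $\sigma_k=p$; the nodes $v_j=u_{\sigma_j}$ for $j\in[k-1]$ are then forced by Property~\ref{property:unique-tuple-of-nodes}; and, $\sigma_k$ and the node set being fixed, membership in $\text{OT}_\mathcal{I}(\sigma,x_{\sigma_k}.l)$ forces the first $k-1$ pairs to be arranged $\prec$-increasingly, which pins down $\sigma$ and the tuple uniquely since $\prec$ is a strict total order. The step I expect to be the main obstacle is making the link between Definition~\ref{def:ordered-tuples-set} and the order $\prec$ fully airtight --- in particular checking that the clause permitting $v_{j-1}=v_j$ precisely when $\sigma_{j-1}<\sigma_j$ is exactly the tie-breaking rule that makes the canonical $\prec$-increasing arrangement the unique element of $\text{OT}_\mathcal{I}(\sigma,x_{\sigma_k}.l)$, so that no second permutation sneaks in. A minor piece of bookkeeping is remembering that $\text{leaf}(x)$ for an interval $x$ abbreviates $\text{leaf}(x.l)$, so that the last coordinate $v_k=\text{leaf}(x_{\sigma_k}.l)$ is consistent with the tuples used in Lemma~\ref{lemma:intersection-predicate-ordered}. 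Everything else is routine once Lemma~\ref{lemma:intersection-predicate-ordered} and the two cited Properties are in hand.
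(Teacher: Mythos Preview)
Your proposal is correct and follows essentially the same route as the paper: the ``$\Leftarrow$'' direction is the same direct containment argument, and for ``$\Rightarrow$'' you locate the interval with maximal left endpoint, use Property~\ref{property:unique-tuple-of-nodes} to pin down the unique nodes $u_j$ on the root-to-leaf path, and then read off the one permutation that lists the pairs $(u_j,j)$ in $\prec$-increasing order; uniqueness under distinct left endpoints then follows because $\sigma_k$, the node set, and the $\prec$-ordering are all forced. The only noteworthy difference is that you are more careful than the paper in separating the equivalence (which you prove without assuming distinct left endpoints, by simply choosing \emph{some} index $p$ attaining the maximum) from the uniqueness clause (where distinctness is actually used), whereas the paper's ``$\Rightarrow$'' paragraph invokes distinct left endpoints from the first line.
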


\end{document}